\renewcommand\footnotetextcopyrightpermission[1]{} % removes footnote \setcopyright{none}
\newcommand*\circled[1]{\tikz[baseline=(char.base)]{
            \node[shape=circle,draw,inner sep=1.5pt] (char) {#1};}}
\newcommand{\sys}{DBO\xspace}
\newcommand{\Fig}[1]{Fig.~\ref{fig:#1}\xspace}
\newcommand{\pg}[1] {{\textcolor{red}{PG: #1}}}
\newcommand{\eg}[1] {{\textcolor{green}{EG: #1}}}
\newcommand{\attn}[1] {{\textcolor{black}{ #1}}}
\newcommand{\cut}[1]{}
\def\compactify{\itemsep=0pt \topsep=0pt \partopsep=0pt \parsep=0pt}
\let\latexusecounter=\usecounter
\newtheorem{theorem}{Theorem}
\newtheorem{definition}{Definition}
\newtheorem{corollary}{Corollary}
\titlespacing*{\section}{0pt}{4pt}{4pt}%{1*\baselineskip}{1*\baselineskip}
\titlespacing*{\subsection}{0pt}{3pt}{3pt}%{1*\baselineskip}{0.6*\baselineskip}
\titlespacing*{\subsubsection}{0pt}{3pt}{3pt}
\begin{document}

% \conferenceinfo{HotNets 2022} {}
% \CopyrightYear{2022}
% \crdata{X}
% \date{}

%%%%%%%%%%%% THIS IS WHERE WE PUT IN THE TITLE AND AUTHORS %%%%%%%%%%%%
%{\droptitle}{-4em}
%\title{Using Logical Clocks for Fairness in Cloud-hosted Financial Exchanges\vspace{-7mm}}
%\title{Using Delivery Clocks for Fairness in Financial Exchanges\vspace{-7mm}}
%\title{\vspace{-4mm}\LARGE Enabling Response Time Fairness for All in Cloud-hosted Financial Exchanges\vspace{-8mm}}

%\title{\vspace{-7mm}\LARGE Scalable Response Time Fairness for Cloud-hosted Financial Exchanges\vspace{-8mm}}

\title{\LARGE DBO: Response Time Fairness for Cloud-Hosted Financial Exchanges\vspace{-2mm}}

%\author{Paper \#97, 6 pages}%\vspace{-50mm}}

\author{Prateesh Goyal}
\affiliation{Microsoft}

\author{Eashan Gupta}
\affiliation{Microsoft, UIUC}

\author[]{Ilias Marinos}
\affiliation{Microsoft}

\author[]{Chenxingyu Zhao}
\affiliation{University of Washington}

\author[]{Radhika Mittal}
\affiliation{UIUC}

\author[]{Ranveer Chandra}
\affiliation{Microsoft}

%\author{Prateesh Goyal$^{1}$, Eashan Gupta$^{1,2}$, Ilias Marinos$^{1}$,\\ Chenxingyu Zhao$^{3}$, Radhika Mittal$^{2}$, Ranveer Chandra$^{1}$\\$^{1}$Microsoft, $^{2}$UIUC, $^{3}$University of Washington} 
\date{\vspace{5mm}}

%%%%%%%%%%%%%  ABSTRACT GOES HERE %%%%%%%%%%%%%%
\begin{abstract}
%\input{hotnets/abstract.tex}
%OLD ABSTRACT
In this paper, we consider the problem of hosting financial exchanges in the cloud. 
%supporting modern financial exchange services on the cloud premises.
Financial exchanges require predictable, equal latency to all market participants to ensure fairness for various tasks, such as high speed trading.
%Important exchange services rely on predictable, equal latency from the servers to the participants for fair competition. 
However, it is extremely difficult to ensure equal latency to all market participants in existing cloud deployments, because of various reasons, such as congestion, and unequal network paths. 
%Existing cloud networks, however, are unable to offer such property, as they were not originally designed for this purpose. 
In this paper, we address the unfairness that stems from lack of determinism in cloud networks.
%We tackle the problem of unfairness that stems from the lack of determinism in cloud networks. 
We argue that predictable or bounded latency is not necessary to achieve  fairness. Inspired by the use of logical clocks in distributed systems, we present Delivery Based Ordering (DBO), a new approach that ensures fairness by instead correcting for differences in latency to the participants. We evaluate DBO both in our hardware test bed and in a public cloud deployment and demonstrate that it is feasible to achieve guaranteed fairness and sub-100\textmu s latency while operating at high transaction rates. %DBO is incrementally deployable in today's cloud environments.%; it avoids limitations of state-of-the-art and outperforms it.

\end{abstract}
\maketitle
\section{Introduction}

Major financial exchanges such as NASDAQ, Chicago Mercantile Exchange (CME), and London Stock Exchange (LSE) have recently expressed interest in migrating their workloads to the cloud aiming to significantly reduce their capital expenditure, improve scalability and reduce operational burden. Major market participants of such  exchanges would also benefit from such migration as they are also maintaining an expensive on-premise infrastructure for data analysis, and regression modelling to formulate their trading strategies. For cloud providers such as Amazon, Google, and Microsoft, this is a big business opportunity. Migrating financial exchanges to the cloud is a mutually beneficial undertaking for all parties involved. 
%
%Financial exchanges spend a huge amount of money maintaining their own datacenters, this excessive cost is passed down to participants trading with the exchanges. For example,  NASDAQ charges its \textit{premium} participants 600,000\$ a year for its fastest market data feed and co location to their own datacenters.
%The cloud providers are hoping that the participants/traders of these financial exchanges will also move their infrastructure to the cloud bringing even more revenue.
%

 To this end, cloud providers and financial exchanges have announced long-term partnerships to facilitate such a move~\cite{nasdaq_cme_an, nasdaq_aws}. Both parties perceive that this migration will be quite challenging, especially when considering all different workloads (businesses) that are currently accommodated in the exchanges' on-premise infrastructure. 
 %\pg{Say these partnerships are long term}
 In this paper, we focus on ``speed race'' ~\cite{frequent_batch_auctions, libra} trading which is an important and highly profitable business for both the financial exchanges and the market traders. Briefly, `speed race' trading is a form of systematic electronic trading where market participants (``MPs'') use high-performance computers to execute strategies that aim to rapidly react and exploit new opportunities presented in the market (e.g., due to volatility, price discrepancies etc). Speed race traders, also known as High-Frequency Traders invest large amounts of money for hardware, systems and algorithmic development to achieve impressively low reaction times (\textmu s- or even ns- scale). This trading business is only viable if market participants can compete in a \textit{fair} playground guaranteed by the Central Exchange Server (CES) operators. Equality of opportunity -- fairness -- in such case means that all market participants must get provably simultaneous access to market data, as well as their subsequent trades must be executed in the exact order they were generated (i.e. placed in the wire). %\sadjad{i think you should drop the quotation marks for speed race, except for the first occurrence}

 With on-premise deployments financial exchanges guarantee fairness for speed race trading by guaranteeing equal bi-directional latency to the relevant market participants.  Exchanges go to a great extent to ensure fairness for their co-located MP customers; it is not uncommon, for example, to use layer-1 fan-out switches for market data stream replication and equal-length cables to all co-located MPs. On the contrary, public cloud datacenter networks do not provide such guarantees as they were originally designed for a heterogeneous, multi-tenant environment, aiming to accommodate diverse workloads. Even if the MPs are located within the same cloud region as the CES, it is hard to guarantee that the latency between CES and various MPs will be the same. Copper and fiber optics cables are not necessarily of equal length, network traffic is not evenly balanced among the different paths, multiple vendors' network elements have different performance characteristics, network oversubscription is still common, and network quality of service mechanisms for concurrent workloads are only best effort.

This problem has recently received significant attention from the academic community. Proposed solutions aim to achieve fairness by attempting to provide equal (yet inflated) bi-directional latencies in the cloud relying on tight clock synchronization and buffering for market data delivery (~\cite{cloudex}). As we explain later, such approaches are fragile because latencies in datacenter networks are not only variable, but also unbounded. Other proposals, require  intrusive modifications to existing CES implementations to work. 

In this paper, we seek to address the problem of fairness for speed race trading in cloud environments. Our key insight is that equal bi-directional latencies are not strictly required to achieve fairness. %Considering fundamental workload characteristics of speed trading, we argue that it is feasible to achieve fairness in commodity cloud datacenters. 
For speed trading, instead of ex-ante equalizing latency, we can post facto correct for any latency differences in delivery of data by ordering trades differently. We introduce logical \emph{delivery clocks} that track time at each trader relative to when market data were received. 
We present \textit{Delivery Based Ordering}, a system that uses delivery clocks to order trades and achieve guaranteed fairness in network topologies where latency is non-deterministic and unbounded. 

We implement a real DBO system, which we evaluate on a bare-metal server testbed leveraging programmable NICs. We also evaluate DBO in a public cloud deployment using standard VMs: our system achieves guaranteed fairness and sub-100us p999 latency while servicing 125K trades per second.

\section{Background}

%\textit{Remark:} This section is based on our discussion with three major financial exchanges (all are among the top 10 exchanges in the world by trading volume). Other sources include recent works in the financial academic community and some industry papers and podcasts \pg{cite XXX)}.

%\radhika{Alternate framing: 

We begin with discussing the challenges in hosting financial exchanges on the cloud, that are derived from our discussions with three major financial exchanges (all are among the top 10 exchanges in the world by trading volume) and our review of papers from financial academic community~\cite{frequent_batch_auctions, libra,burdisch_working, fragile} as well as industry papers~\cite{iex_cost_report, signals_threads}. %\sadjad{I would avoid mentioning podcasts}

\noindent\textbf{Why is moving to the cloud so hard?} \textit{Short Answer: It is hard to achieve fairness in cloud.} A key customer/business for any major financial exchange is High Frequency Traders (HFTs). At a high level, high frequency traders aim to process incoming market data feed from the exchange server and place trade orders as fast as possible. These traders are engaged in what is known as \emph{speed races} where they are competing for the same trading opportunity, trying to get their trade orders ahead of competition. There is an arms race in high frequency traders to respond to market data the fastest~\cite{frequent_batch_auctions}. HFTs are becoming faster with time,
even minor differences in latency (microsecond level) for market data delivery and trade orders can give a trader significant advantage/disadvantage over the others ~\cite{burdisch_working, fragile, signals_threads}.% \pg{better wording?} 
Allowing such traders to trade fairly is critical for any exchange to attract HFTs that bring significant liquidity to the exchange. However, cloud environments exhibit variable network latency. This can be due to several reasons, such as congestion in the network, networks paths with unequal hops, etc. Ensuring such fairness in cloud environments is thus very challenging. Exchanges not only want fairness, to speed up price discovery; they also want low latency. The latency requirements depend on the exchange, from sub-100 microsecond to millisecond ~\cite{iex_cost_report}.

\if 0
\begin{figure}[t]
\centering
    \includegraphics[width=0.8\columnwidth]{hotnets-images/FSI architecture2.pdf}
    \vspace{-2.5mm}
    \caption{\small{\bf Basic components in exchanges today.} \pg{Redraw, remove RBs and OBs, remove other labels.}}% \pg{Eashan see Ranveer's comment}}% \pg{Eashan can you redraw this figure in powerpoint or something.}}}
    \label{fig:current_archtiecture}
    \vspace{-2.5mm}
\end{figure}
\fi

\noindent\textbf{How exchanges enable fair speed racing today?} \textit{Short Answer: Equal bi-directional latency.} Major exchanges operate their own datacenters. HFT traders that want to engage in speed trading colocate\footnote{Exchanges support colocation for a limited number of participants. The exact numbers are confidential, but the number is in 10s to less than couple of hundred depending on the exchange.} to the exchange datacenter.% for a hefty premium fee.% (NASDAQ charges \$600,000 per customer for colocation and market data feed~\cite{iex_cost_report}). 

The central exchange server (CES) produces a real time market data feed and distributes it to all the colocated participants (MPs). The exchange datacenter is optimized to ensure that participants get all the market data points at the same time. Further, exchanges ensure that all the trades placed by the participants experience the same latency to the exchange server. The exchange server simply processes the trades in a first-come-first-serve (FCFS) manner. Optimizing datacenters to provide such equal bi-directional latency is expensive~\cite{iex_cost_report}. 
As a result of high cost, exchanges charge a huge premium for such colocation (NASDAQ charges \$600,000 per customer for colocation and direct data feed~\cite{iex_cost_report}). 
These high premiums create a barrier for entry into the high frequency trading world. Major exchanges are also interested in opening up regional exchanges but the cost of creating a new regional datacenter is prohibitively high.

\begin{figure}[t]
\centering
    \includegraphics[width=0.6\columnwidth]{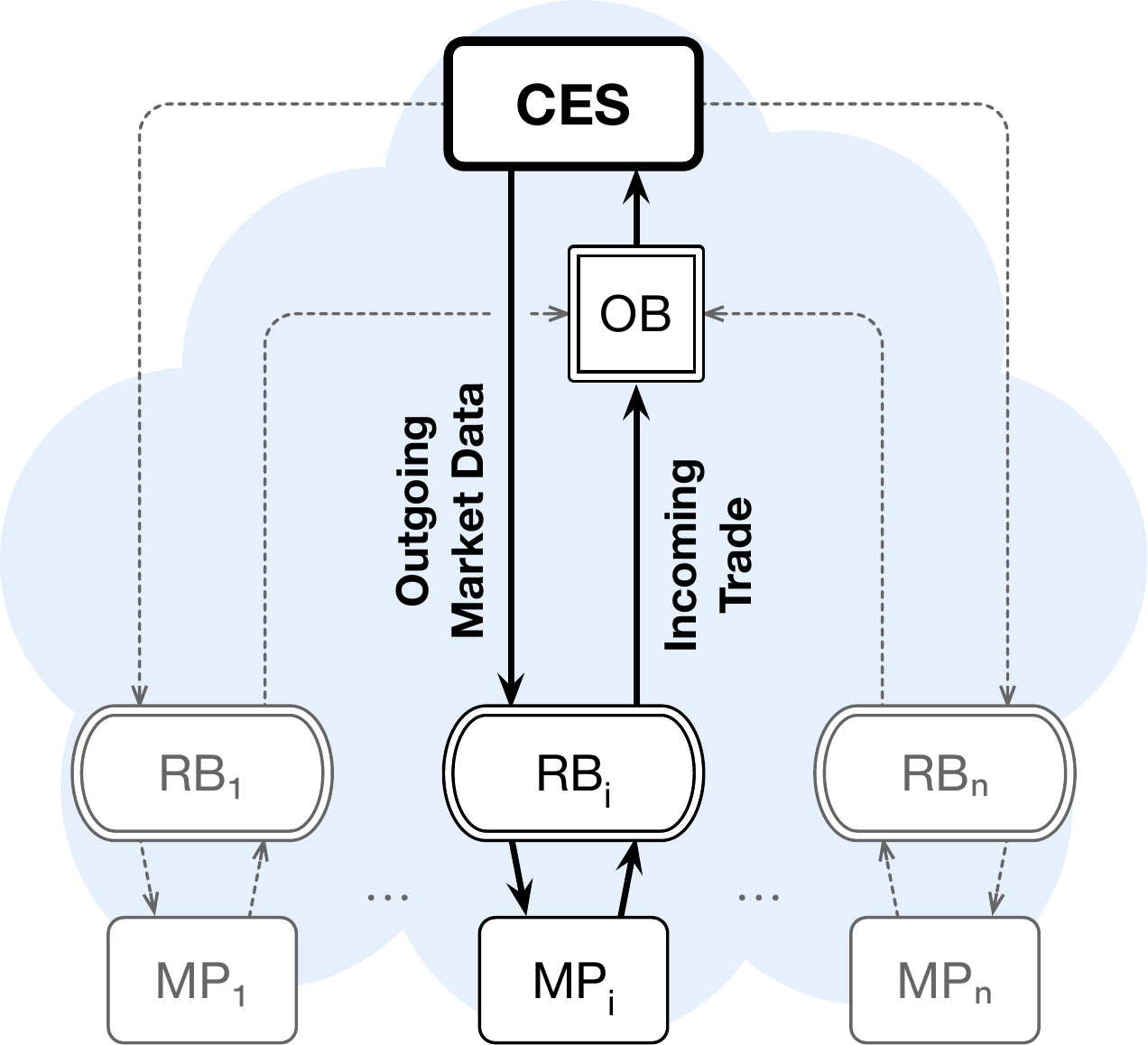}
     \vspace{-1mm}
    \caption{\small{\bf Basic components of \sys.}} %\pg{Redraw, remove other labels.}}% \pg{Eashan see Ranveer's comment}}% \pg{Eashan can you redraw this figure in powerpoint or something.}}}
    \label{fig:new_archtiecture}
    \vspace{-5mm}
\end{figure}

\subsection{Related Work}

The problem of moving financial exchanges to the cloud has received some attention. There are two bodies of work.

\noindent
%\textbf{Clock-synchronization based solutions are not enough:} CloudEx~\cite{cloudex} proposes using clock synchronization to achieve an equal latency abstraction. CloudEx adds two new components to the architecture \radhika{: the release buffer and the ordering buffer}. The broader architecture of our solution is same as CloudEx. Figure \pg{XXX} shows the basic components. In CloudEx, all the components have synchronized clocks. For each participant there is a colocated trusted component called the release buffer. A market data point produced at time $t$ is released by the release buffers simultaneously at a prespecified time $t+C_1$. A trade order generated by the participant at time $t$ is forwarded to the CES by the ordering buffering at time $t+C_2$.  
%
\textbf{Clock-synchronization based solutions are not sufficient:} CloudEx~\cite{cloudex} proposes using clock synchronization to achieve an equal latency abstraction. CloudEx adds two new components to the architecture (as shown in Figure~\ref{fig:new_archtiecture}): (i) For each participant there is a colocated trusted component called the \emph{release buffer}, which buffers market data points, enabling a delayed delivery to the market participant. (ii) Likewise, the \emph{ordering buffer} at the CES buffers the trade order generated by the participants, enabling delayed and re-ordered delivery of the trade orders to the CES. The broader architecture of our solution is same as CloudEx.

In CloudEx, all the components have synchronized clocks. A market data point produced at time $t$ is released by the release buffers simultaneously at a prespecified time $t+C_1$. A trade order generated by the participant at time $t$ is forwarded to the CES by the ordering buffering at time $t+C_2$. The problem with this solution is that even with perfect clock synchronization if network latency spikes beyond the pre-specified thresholds, then such a system incurs unfairness. Cloud networks experience latency spikes that are a couple of orders of magnitude higher than the average. Latency spikes,  although rare,  are still unpredictable, and setting high thresholds can help guard against sudden spikes and help us achieve better fairness.  With such high thresholds though, the system incurs high latency even when the underlying network latency is well behaved. Safeguarding against tail spikes, increases the overall end-to-end latency ($C_1+C_2$) not only at the tail but on average as well.
More importantly, the main issue remains unsolved: %is that %not the latency, no matter the threshold, 
still, there is no guarantee that equal bidirectional latency will always hold. In fact, there is a known impossibility result on this.
%\radhika{if running short on space, can replace above with: ``The main issue however is that there is still no guarantee that equal bidirectional latency will always hold, as we cannot guarantee perfect clock synchronization (there is a known impossibility result on this -- add this cite -- [Lundelius and Lynch 1984 -- an upper and lower bound for clock synchronization])'' You can then chop off the next paragraph.} \pg{Rewrote a little. We are not commenting on the problem bein clock synchronization but enforcing simultaneous delivery.}\radhika{the rewriting helps clarify this}

\noindent
\textit{Impossibility Result for equal bi-directional latency:} In network environments with finite but unbounded latency (common network model is distributed systems~\cite{lamportSeminalPaper}), even with perfectly synchronized clocks, it is impossible for two machines to communicate and co-ordinate to do a task at the same time (two generals problem \cite{two_generals}).
So two release buffers can never co-ordinate to deliver the same data to the respective market participants simultaneously,  no matter how they communicate with the CES or other release buffers. Note that it is still  to co-ordinate at the OB to ensure that latency on the reverse path stays the same (i.e., two trades generated at the same time are forwarded to the CES at the same time). %This can be done through coordination at the CES? i.e. the OB simply orders the trades based on when they are generated.
%\radhika{cut ``Note that'' onwards, unless you are using this later -- hard to follow the reasoning.}

\noindent\textit{Impossibility Result on Clock Synchronization:} Further, in network environments with unbounded network latency, it is also impossible to synchronize clocks to any extent and the error in clock synchronization is unbounded~\cite{imp_cs}. 

Our conversations reveal that exchanges wish to provide guaranteed fairness and as a result such solutions haven't seen much adoption.

\textbf{Modifying how the matching engine behaves:} Frequent Batch Auctions~\cite{frequent_batch_auctions} proposes releasing market data periodically in batches. The batch frequency is kept very low (1 batch per 100 ms) to allow all participants to respond before the next batch is released. All the trades corresponding to a batch are given the same priority for execution at the CES. This solution ensures fairness in the sense that no participant has an advantage over others because of network latency. However, the system latency is high (100 ms!). Further, this solution completely eliminates the speed races and a participant that responds to market data faster no longer has a competitive advantage. To achieve fairness in environments with unpredictable network latency, Libra~\cite{libra} assigns random priorities to the incoming trades. Libra achieves fairness for speed races stochastically (faster participants trades are ordered ahead more than 50\% of the times) when the variability in network latency is bounded.
Beyond the issues stated here, the main problem with both these solutions is that they require intrusive changes to the exchange matching algorithm. 

%\pg{difference from traditional networks}

\section{Problem Statement}

%\pg{Srikanth suggestion, include a table contrasting DBO and CloudEx here.}

\noindent
\textbf{Goals:} In this paper, we aim to solve the problem of enabling fair speed racing among high frequency traders in network environments where latency is finite but unbounded. We also do not wish to modify the matching engine to achieve this goal.
At a high level, our solution leverages the nature of the speed races to propose a new logical time domain -- delivery clocks -- that tracks time relative to when market data was delivered to the participants. By ordering trades using this delivery time domain we can achieve guaranteed fairness for such speed races.%\footnote{There is a caveat here, explained later.} 
Our goal here is not to just propose a solution, but also present theoretical insights that help researchers in understanding this space and enable future work.

%\pg{Global exchange and scalability?}

\noindent
\textbf{Non-Goals:} Achieving bounded latency in cloud networks remains an open problem as of now. In this paper, we do not attempt to optimize the underlying network latency or the transport mechanism for multicasting market data or communicating trade orders. 
We also do not discuss solutions for reliability of the various components. Exchanges today incur unfairness in the event of failures \cite{signals_threads}. In our system, it should be possible to detect failure of various components and migrate the impacted components. During failures, fairness can get affected (\S\ref{ss:understanding_latency}).
%$ (\S\ref{ss:understanding_losses}). %As with existing exchange infrastructure, the main challenge is in ensuring reliability of the CES server and that critical state (state of the order book) is not lost there. \radhika{can cut last line -- does not seem that relevant and calling it out as main challenge makes it sound like you are going to solve it}

%We would like to note that cloud providers today do offer a very high amount of reliability (4 9s). \pg{cite XXX}. \pg{This paragraph is a bit weird}

%In our system, in case there is a release buffer or ordering buffer failure it will likely impact fairness. Existing solution for reliability in exchange datacenters should be applicable to handle failure of the participants and the CES and ensure correctness (fairness will get compromised). \pg{is this paragraph ok?}

We will now introduce some notation, formally define a speed race and fairness for such races. 

\noindent
\textbf{Notation:}
We refer to the $x^{th}$ market data point as $x$. $(i,a)$ refers to the $a^{th}$ trade from MP$_i$. Table\ref{tab:notation} lists the notations used in this paper.

\begin{table}[h!]
\small
    \centering
    \begin{tabular}{p{0.15\columnwidth} | p{0.75\columnwidth}}
    %\hline
        \textbf{Notation} & \textbf{Definition} \\
        \hline
        $G(x)$ & Real Time at which $x$ was generated at the CES.\\
        $D(i,x)$ & Real Time at which $x$ was delivered (by $RB_i$ in case of our system) to MP$_i$. \\%Governed by network latency and RB$_i$\\
        $TP(i,a)$ & Market data point used to generate $(i,a)$.\\
        $RT(i,a)$ & Response time of $(i,a)$.  \\
        $S(i,a)$ & Real Time at which $(i,a)$ was submitted by MP$_i$.\\% to RB$_i$.\\
        $F(i,a)$ & Real Time at which $(i,a)$ is forwarded (by OB in our system) to the CES's matching engine (ME). \\%\pg{include or drop?}\\
        $O(i,a)$ & The order in which trades are forwarded (by OB in our system) to the CES. If $O(i,a) < O(j,b)$ then $F(i,a)$ < $F(j,b)$.

    %\hline
    \end{tabular}
    %\vspace{-2mm}
    \caption{\small{Notation.}}
    \label{tab:notation}
    \vspace{-5.5mm}
\end{table}

\begin{figure}[t]
\centering
    \includegraphics[width=0.8\columnwidth]{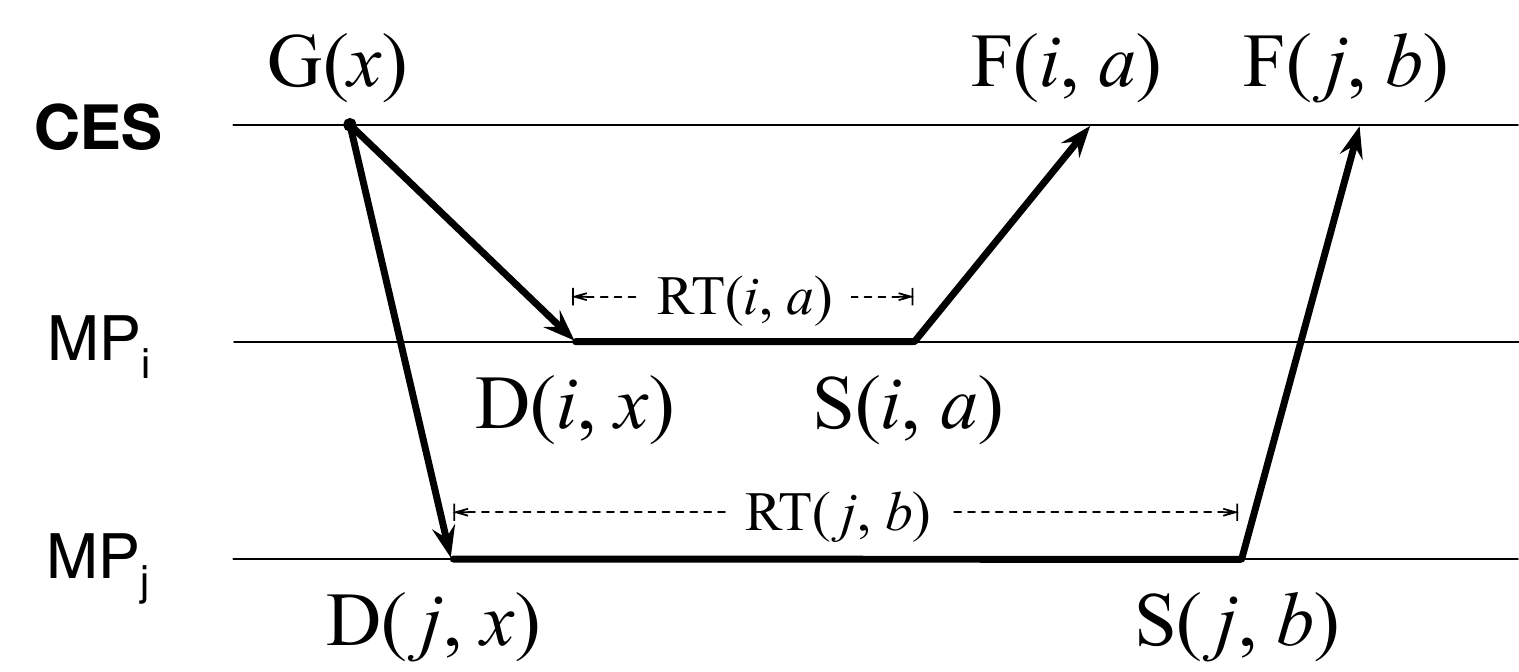}
    \caption{\small{\bf Events in a speed race.}}% \pg{Eashan see Ranveer's comment}}% \pg{Eashan can you redraw this figure in powerpoint or something.}}}
    \label{fig:speed_race}
    \vspace{-2.5mm}
\end{figure}

 \noindent
 \textbf{Speed Race:} Informally, a speed race \cite{frequent_batch_auctions, libra, burdisch_working} consists of trades from multiple participants competing for the same trading opportunity. A particular market data point serves as the trigger/stimulus for trades competing in the speed race. Participants aim to identify the trading opportunity and win the speed race by responding as fast as possible after receiving the trigger market data point. 
 %\radhika{the next three lines can be moved to background where you introduce speed races / de-duplicated with the text there}
 The trades belonging to a speed race are the most latency sensitive \cite{frequent_batch_auctions, libra, burdisch_working}. Differences in latency across participants in delivering the trigger point or on the reverse path to the CES can create significant disadvantages for certain participants~\cite{fragile, burdisch_working}. %\pg{cite the Mackenzie paper and the burdisch working paper}. 
 These speed trades  constitute a substantial fraction of the overall trades in major exchanges (atleast 20\% in LSE \cite{burdisch_working}). In this paper, we will try to achieve fair ordering for trades engaged in such speed races.

\noindent
\textit{Compute model for speed trades:} The response time for trade $(i,a)$, $RT(i,a)$, whose trigger point is $x$ ($TP(i,a) = x$), is defined as the time it took to generate the trade after receiving the trigger point $x$. Formally, the time trade $(i,a)$ is submitted/generated by an MP is given by,

\begin{align}
    S(i,a) = D(i, x = TP(i,a)) + RT(i,a) 
    \label{eq:cm}
\end{align}

%\radhika{another version:} 
%We formally define it as: 
%\begin{align}
%     RT(i,a) = S(i,a) - D(i, x = TP(i,a))
%    \label{eq:cm}
%\end{align}
where $S(i,a)$ is the time trade $(i,a)$ is submitted/generated by an MP, and $D(i, x = TP(i,a))$ is the time at which $RB_i$ delivers $x$ to $MP_i$ (see Table 1).  

%This is the compute model we are going to use throughout this paper \radhika{cut this line?}.
Response time captures the speed of the participant. Note that such a trade might be generated using market data points other than the trigger point. However, the trade submission time is completely governed by the delivery time of the trigger point and the response time of the participant for that trade.% \radhika{maybe cut this line -- not sure if this is relevant for the upcoming text}.

 \noindent
 \textbf{Fair ordering of Trades in a Speed Race:} Outcome of a speed race is simply governed by the ordering of the competing trades in the race. Our goal is to achieve the same ordering for these trades had the network provided equal bi-direction latency. We refer to such an ordering of trades as \textit{Response Time Fairness}.

 In an equal bi-directional latency network ($C_1$ latency from CES to MP, $C_2$ latency from MP to CES), trade $(i,a)$ will be received by the CES at time,
 \begin{align}
     F(i,a) = G(x=TP(i,a)) + C_1 + RT(i,a) + C_2 
 \end{align}

By definition, Trade $(i,a)$ is ordered ahead of $(j,b)$, i.e., $O(i,a) < O(j,b)$), if $F(i,a) < F(j,b)$. In such a network, two trades $(i,a)$ and $(j,b)$ belonging to the same race (i.e. the same trigger point $x$) will be ordered as follows, 
\begin{align}
    \text{If } G(x) + C_1 + RT(i,a) + C_2 < G(x) + C_1 + RT(j,b) + C_2,\nonumber\\ \text{ then, } O(i,a) < O(j,b)
\end{align}

Using the above equation, we define response time fairness as follows,

\begin{definition}
An ordering system achieves response time fairness if it satisfies the following condition for all competing speed trades $(i,a)$ and $(j,b)$
\begin{align*}
    C1: &\text{ if } TP(i,a)= TP(j,b) = x\\ 
    &\land RT(i,a) < RT(j,b), \\
    &\text{ then, }O(i,a) < O(j,b).
\end{align*}
\label{def:rtf}
\vspace{-5mm}
\end{definition}

The above condition is simply stating that a faster participant's trades should be ordered ahead of slower participant. The above condition is from the perspective of the participants. Response time is not directly visible to the cloud provider or the exchanges. We will rewrite the above conditions using quantities visible to them.
%This form is what we will focus on in this paper \radhika{can cut this line}. Using Equation~\ref{eq:cm}, 
The above condition can be rewritten as,

\vspace{-1mm}
\begin{align*}
    C1': \text{ if } &TP(i,a)= TP(j,b) = x \\
     &\land S(i,a) - D(i,x) < S(j,b) - D(j,x), \\
    \text{ then, } &O(i,a) < O(j,b).
\end{align*}

This condition states that the exchange can achieve response time fairness by measuring time of trades relative to when a market participant received the market data to order trades. 

%There is another way of writing the above condition in terms of one way latency from CES to the participants that might be more intuitive to understand for the readers in the networking community. $D(i,x) - G(x)$  represents the one way latency from CES to participant $i$ for data point $x$.  Adding $G(x)$ to both sides in the equation above we get the following condition,

%\radhika{v2: cut the above para, and simply say: 

Adding $G(x)$, i.e. the generation time of $x$, to both sides of the equation results in the following condition:%}  

\begin{align*}
    C1'': \text{ if } &TP(i,a)= TP(j,b) = x \\
     &\land S(i,a) - (D(i,x)-G(x)) < S(j,b) - (D(j,x)-G(x)), \\
    \text{ then, } &O(i,a) < O(j,b).
\end{align*}

%\radhika{v2: 
Here $D(i,x) - G(x)$  represents the one way latency from CES to participant $i$ for data point $x$.  
So to achieve response time fairness all the exchange needs to do is correct for the differences in latency from the exchange to the participant.

To deal with variability in network latency, CloudEx tries to equalize latency by holding information at the release buffer and releasing it simultaneously all participants using synchronized clocks. In other words, it strives to ensure that $(D(i,x)-G(x))$ is equal to $(D(j,x)-G(x))$, so that trades can simply be ordered by the time when they were submitted by the participants (i.e. $S(i, a)$). However, as discussed earlier, it is not possible to equalize latency always when the underlying network latency is unbounded. 

%\radhika{can do a slight re-org here. Have a subsection heading ``Our approach'' here. Alternatively, leave the next paragraph in this section, and start a new section on your system design from before assumptions, where the first subsection is assumptions, second is challenges, and then the rest of the design follows.} 

In this work, we take a different approach. Instead of trying to synchronize clocks or equalize latency (either of which can never be done precisely \cite{two_generals, imp_cs}), we show that it is possible to post facto correct for latency difference and achieve response time fairness.

\noindent
\textbf{Causality of trades from a participant:} We add an additional requirement for ordering of trades. This condition simply states that trades from a participant should respect causality, i.e, if trade $(i,a)$ was generated before trade $(i,b)$ then it should be ordered ahead. Formally, 
\begin{align}
\text{If } S(i,a) < S(i,b), \text{ then }, O(i,a) < O(i,b).
\label{eq:causality}
\end{align}

\noindent
\textbf{Fairness beyond Response Time Fairness:} While speed races are the most latency critical, in theory there can be latency-critical trades that don't fall under the speed race model (e.g., trades whose submission time depend on delivery time of multiple data points or some other external data). Guaranteeing perfect fairness for such trades does require simultaneous delivery of both market data and external data. While this is impossible, we will discuss how DBO can be enhanced to provide better fairness for such trades (\S\ref{ss:beyond_fairness}). 

%\pg{QUESTION: should we talk about ordering of two trades from the same participant, i.e., if (S(i,a) < S(i,b) then O(i,a) < O(j,b)}

 \noindent
 \textbf{Assumptions:} We will list out some of the assumptions we make in our solution.

\noindent
 \textit{Trust:} Release and ordering buffers are trusted components that are controlled by the cloud provider and that cannot be tampered with. 

\noindent
\textit{Proximity:} Release buffers are colocated with the participants. The latency between them is negligible. In our system, we implement the release buffer at participant's NIC. For scenarios where release buffer cannot be colocated we analyze the impact of latency between the release buffer and the participant. %\pg{Should we do this or just say nothing}

\noindent
\textit{Clock-drift rate:} We don't make any assumptions on clocks being synchronized across components in the network. Similar to literature in the traditional distributed systems~\cite{imp_cs}, we assume that clock-drift rate is negligible and release buffers can measure time-intervals accurately. Clock drifts rates are small in practice ($< 0.02\%$ under a wide range of scenarios~\cite{sundial}). %We also analyze the impact of clock drift rate.

\noindent
\textit{In-order delivery:} We assume that packets can be lost in the network. Packets that are not dropped are delivered in order. Just like exchanges today, we assume that all loses are handled out of band where the receiver requests retransmission using an alternative slower path~\cite{signals_threads}. Similar to modus operandi, our system incurs unfairness in such cases. 

\noindent
\textit{Participants are located in the cloud:} We assume that all the participants are located in the cloud. In case a certain participant doesn't want to move to the cloud, the exchange can run a proxy machine in the cloud on the behalf of such a participant. External participants can get market data feed and place trades through this proxy. Because of additional latency from proxy to the participant machines, trades from such external participants will be at a disadvantage. Fairness for other participants in the cloud remains unaffected.

\noindent
\textit{Remark:} CloudEx also makes the same assumptions on trust, proximity, and participants being in the cloud. The key difference is that CloudEx further assumes clock synchronization and requires bounded latency for guaranteeing fairness.

\subsection{Challenges}% and how we overcome them}%in achieving Response Time Fairness}

%\radhika{you can cut the first challenge (addressing this point more briefly as in my comments above), and directly start with the second one.}

There are three key challenges.

\noindent
\textbf{Challenge 1: Clock-synchronization:} Ideally, we want a solution that doesn't require any clock synchronization.

%\noindent
%\textbf{Our Solution:} Our system uses delivery clocks to order trades. Maintaining such delivery clocks only requires measuring time intervals locally. With small clock drift rates, time intervals can be measured accurately.

%\pg{Radhika's comment: even in synchronized systems where network latency is bounded (but the min and max network latency may differ), you can synchronize two clocks with minimum bound of (max - min)/2. the cite is in the comment above} 

\noindent
\textbf{Challenge 2: Trigger point is unknown:} We assume that trigger point of a speed trade is not known. 
%\radhika{a footnote or a line on why you make this assumption. you can also consider adding a point under your list of assumptions on what information is available at the RB and the OB and what is not.}.
In such case it is hard to measure the response time and consequently decide how trades should be ordered. Unfortunately, when response times are unbounded it is impossible to achieve Response Time Fairness. 

\begin{theorem}
If trigger points for trades are unknown, then no ordering system can achieve Response Time Fairness.
\label{thm:1}
\end{theorem}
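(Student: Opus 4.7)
The plan is to give a two-scenario indistinguishability argument, in the style of standard impossibility proofs in distributed computing. The ordering system observes only externally visible quantities (delivery times $D(i,\cdot)$, submission times $S(i,\cdot)$, and trade contents) and does not observe the trigger point $TP(i,a)$ that lives inside the participant's head. So I would construct two executions $E_A$ and $E_B$ that present exactly the same observable trace to the ordering system, but whose hidden trigger-point assignments force Response Time Fairness to demand opposite orderings of the same pair of trades.

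Concretely, I would take two participants $i,j$ and two market data points $x_1,x_2$ with deliveries arranged so that the gap $D(j,x_1)-D(i,x_1)$ is strictly larger than $D(j,x_2)-D(i,x_2)$; intuitively, participant $j$'s path takes an extra latency hit on $x_1$ but not on $x_2$. Then I would pick submission times $S(i,a)$ and $S(j,b)$ so that, relative to $x_1$, participant $i$ is the slower responder, while relative to $x_2$, participant $i$ is the faster responder. In execution $E_A$ both $(i,a)$ and $(j,b)$ are triggered by $x_1$, making $RT(i,a)>RT(j,b)$, so condition C1 of Definition~\ref{def:rtf} demands $O(j,b)<O(i,a)$. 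In execution $E_B$ both trades are triggered by $x_2$, making $RT(i,a)<RT(j,b)$, so C1 demands $O(i,a)<O(j,b)$. Since $E_A$ and $E_B$ are identical in every quantity available to the ordering system, any deterministic (or randomized) ordering rule must produce the same ordering on the two executions and therefore must violate C1 in one of them.

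The step that takes the most care is choosing the numerical parameters so the two claims hold simultaneously: subtracting the two inequalities $S(i,a)-D(i,x_1) > S(j,b)-D(j,x_1)$ and $S(i,a)-D(i,x_2) < S(j,b)-D(j,x_2)$ yields precisely the gap condition $D(j,x_1)-D(i,x_1) > D(j,x_2)-D(i,x_2)$, so a consistent assignment exists whenever latencies are allowed to differ across data points and across participants, which is exactly the finite-but-unbounded setting we assume. I would finish by remarking that the argument only uses the freedom to choose trigger points inside the participant: even if the ordering system is allowed to see every network event, it cannot distinguish $E_A$ from $E_B$, so the impossibility is information-theoretic rather than a consequence of clock drift or message loss. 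This is the main conceptual obstacle, namely making crisp what ``unknown trigger point'' means as an adversarial choice; the rest of the proof is a straightforward indistinguishability contradiction.
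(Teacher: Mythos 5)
Your two-execution indistinguishability construction is essentially the paper's proof of its Lemma~\ref{lemma:inter_delivery_imp} (Appendix~\ref{app:lem1}): two candidate trigger points whose inter-delivery gaps differ across participants, submission times chosen so that the two hidden trigger assignments force opposite orderings, contradiction. That part is correct and matches the paper's argument almost exactly, down to the derived gap condition $D(j,x_1)-D(i,x_1) > D(j,x_2)-D(i,x_2)$.

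However, there is a genuine gap: you treat the delivery times $D(i,\cdot)$ as adversarially given by the network, and conclude that a bad configuration ``exists whenever latencies are allowed to differ.'' But in this paper's model the ordering system \emph{includes} the release buffers, which are free to delay and pace deliveries; indeed the paper's own DBO design achieves its (weaker) guarantee precisely by reshaping $D$ through batching and pacing. So what your argument actually establishes is only a \emph{necessary condition} on the delivery process --- that any RTF-achieving system must guarantee $D(i,y)-D(i,x)=D(j,y)-D(j,x)$ for all $i,j,x,y$ --- and you still owe an argument that no delivery process can meet this condition. The paper closes this with a second step you omit: in a network with finite but unbounded latency, release buffers that could guarantee equal inter-delivery gaps could be used to coordinate two machines to perform an action simultaneously, contradicting the two-generals impossibility result. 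Without that reduction (or an equivalent argument), a reader can object that a sufficiently clever pacing scheme at the RBs simply never produces the unequal-gap configuration your counterexample requires, and the impossibility does not follow.
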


\begin{proof}
When trigger points are unknown, the ordering enforced by the system should achieve response time fairness for trades regardless what might have been their trigger point. This means that the ordering enforced by the system should respect condition $C1'$ regardless of what the trigger point $x$ is. The necessary condition for this to hold true is given below.

\begin{lemma}
When trigger points are unknown, the \textit{necessary} conditions on the delivery processes for achieving response time fairness with any ordering system is given by,
\begin{align*}
    D(i,y) - D(i,x) &= D(j,y) - D(j,x), & \forall i,j,x,y.
\end{align*}
\label{lemma:inter_delivery_imp}
\vspace{-6mm}
\end{lemma}

Please see Appendix~\ref{app:lem1} for proof. The lemma states that for response time fairness the inter-delivery times should be the same across all  participants. 
%\pg{with same inter-delivery times, for competing trades it does not matter which trigger point we consider to c}
%This way the same ordering regardless of the trigger point provides response time fairness. 
%\radhika{one concern I have here is that it's hard to define the ordering in C1' when TPs are unknown. can you give the condition you would use in such a case? maybe all you need to do is say that you measure response time based on the most recently delivered $x$, and under this inter-delivery condition, response time fairness is still guaranteed even if the TPs are different from $x$}
However, achieving the same inter-delivery time when network latency is unbounded is also impossible. If two processes can co-ordinate to achieve the same inter-delivery time then they can co-ordinate to do a task at the same time, a contradiction of the two generals impossibility result. 

 \end{proof}

%\radhika{yet to read design, but I feel that this can come later. You can just list the  challenges in this section. The LRTF bit seems more like a solution that can be brought up and expanded on in the following section/subsection.}

%\noindent
%\textbf{Our Solution:}  
We cannot achieve Response Time Fairness in settings where trigger points are unknown. We define a new slightly weaker version called \textbf{Limited Horizon Response Time Fairness} (LRTF) that is still useful. Formally, LRTF is defined as,

 \begin{definition}
An ordering system achieves limited-horizon response time fairness if it satisfies the following condition for all competing speed trades $(i,a)$ and $(j,b)$
\vspace{-1mm}
\begin{align*}
    C2: &\text{ if } TP(i,a)= TP(j,b) = x\\ 
    &\land RT(i,a) < RT(j,b), \\
    & \land RT(i,a) < \delta,\\
    &\text{ then, }O(i,a) < O(j,b).
\end{align*}
\label{def:lrtf}
\vspace{-5mm}
\end{definition}

The above condition is similar to condition (C1) with an additional constraint that the system guarantees response time fairness for only fast trades that are generated within a bounded amount of time. 
Notice that the constraint on response time being less than $\delta$ is only on  participant $i$.  Participant $i$'s trades will be ordered fairly regardless of whether the response time of other participant's competing trades is within $\delta$ or not.
\emph{In this paper, we will present a system, DBO, that for any given $\delta$  achieves LRTF in a guaranteed manner.}

%\pg{should we include necessary and sufficient conditions for achieving LRTF?}

\textit{Why is LRTF useful?} LRTF is based on the fact that typically participants response very quickly to market data. From our conversations, the faster participants in major exchanges responds within a few microseconds. \cite{burdisch_working} further shows that majority of the speed races last 5-10 microseconds. An exchange provider can choose to offer its participants guaranteed response time fairness for fast trades. The choice of $\delta$ does present a trade-off, increasing $\delta$ increases system latency.

\noindent
\textbf{Challenge 3: Enforcing the ordering} Suppose we could mark the trades at generation with the sequence in which they should be forwarded to the CES. Because trades can take unbounded amount of time on the reverse path, even in this scenario its hard to enforce such an ordering at the ordering buffer. In particular, before forwarding trade $(i,a)$ we need to be sure that there is no other trade $(j,b)$ in flight that should be ordered ahead.

\section{Design}
\label{s:design}
In this section, we will first present the core of our system. Then we present some analysis of the system along with some extensions to address a few practical concerns. We will present details of our cloud implementation separately in the next section.

\subsection{Delivery Based Ordering}
Our solution is composed of three parts. 
\subsubsection{Delivery Clock\\}
\noindent\textbf{What we do.}
Each RB maintains a delivery clock. This delivery clock essentially tracks time relative to when market data was delivered to the participant. We use $DC(i,a)$ to represent delivery clock of participant $i$ at time when trade $(i,a)$ is submitted. Delivery clock is a lexicographical tuple.
\begin{align}
    DC(i,a) = \langle ld(i,a), S(i,a)-D(i, ld(i,a))\rangle.
\end{align}
where $ld(i,a)$ is the latest data point that was delivered to MP$_i$ by time S(i,a), i.e., $D(i,ld(i,a)) \leq S(i,a) < D(i,ld(i,a)+1)$). 
Interval, $S(i,a)-D(i, ld(i,a))$, corresponds to the time that has elapsed since the last delivery and can be measured locally at the RB without requiring any clock synchronization (challenge 1). 

\noindent
\textit{Monotonicity:} Delivery clocks advance monotonically with submission time. As a result, DBO trivially satisfies the causality condition (Equation~\ref{eq:causality}). Further, it incentivizes the participants to submit trades as early as possible. Therefore, \emph{a participant cannot gain any advantage by delaying trades.} %\pg{should this point have a heading of its own}
Finally, we also leverage the monotonic property to overcome challenge 3 (\S\ref{ss:enforcing_ordering}). Figure~\ref{fig:delivery_clock} shows how delivery clock advances with time.

%\pg{I tried to reduce the notation here. I defined delivery clock slightly differently.}

\begin{figure}[t]
\centering
    \includegraphics[width=0.8\columnwidth]{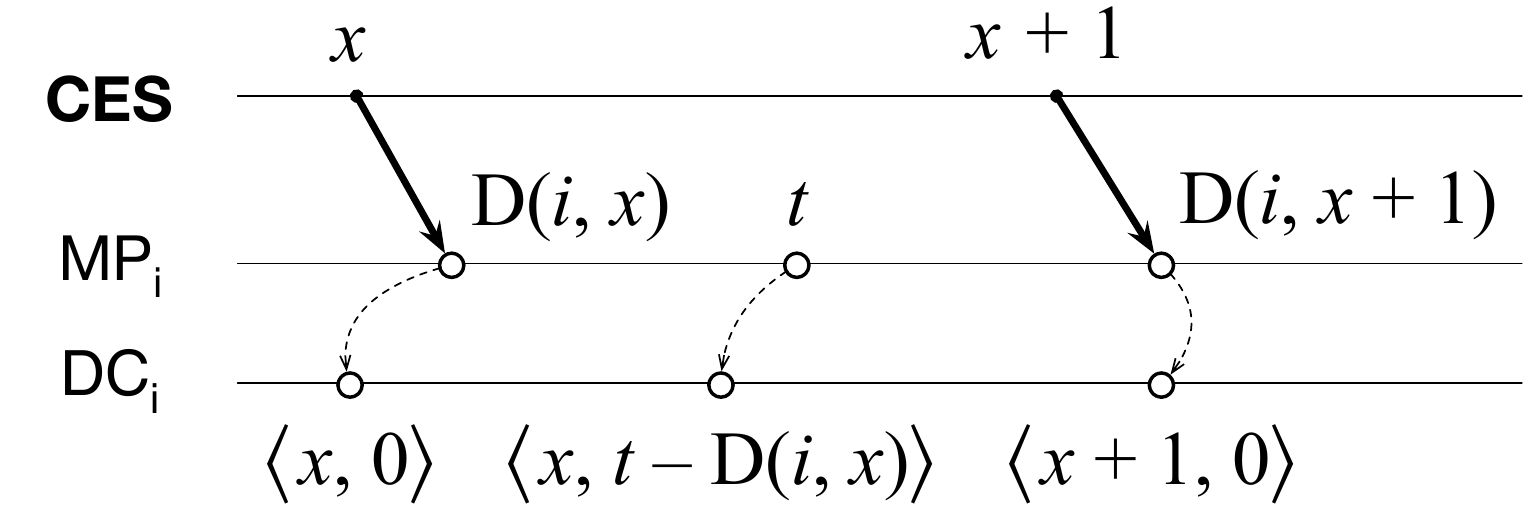}
    \caption{\small{\bf Delivery Clock.}}% \pg{Redraw}}% \pg{Eashan see Ranveer's comment}}% \pg{Eashan can you redraw this figure in powerpoint or something.}}}
    \label{fig:delivery_clock}
    \vspace{-2.5mm}
\end{figure}

All incoming trades are marked with the delivery clock at the trade submission time. The ordering buffer uses this delivery clock time to order trades. Formally, the ordering in DBO is given by,  

\vspace{-2mm}
\begin{align}
    O(i,a) = DC(i, a). 
    \label{eq:ordering_with_dc}
\end{align}

\begin{figure}[t]
\centering
    \includegraphics[trim={0 0 0 2mm},clip,width=0.8\columnwidth]{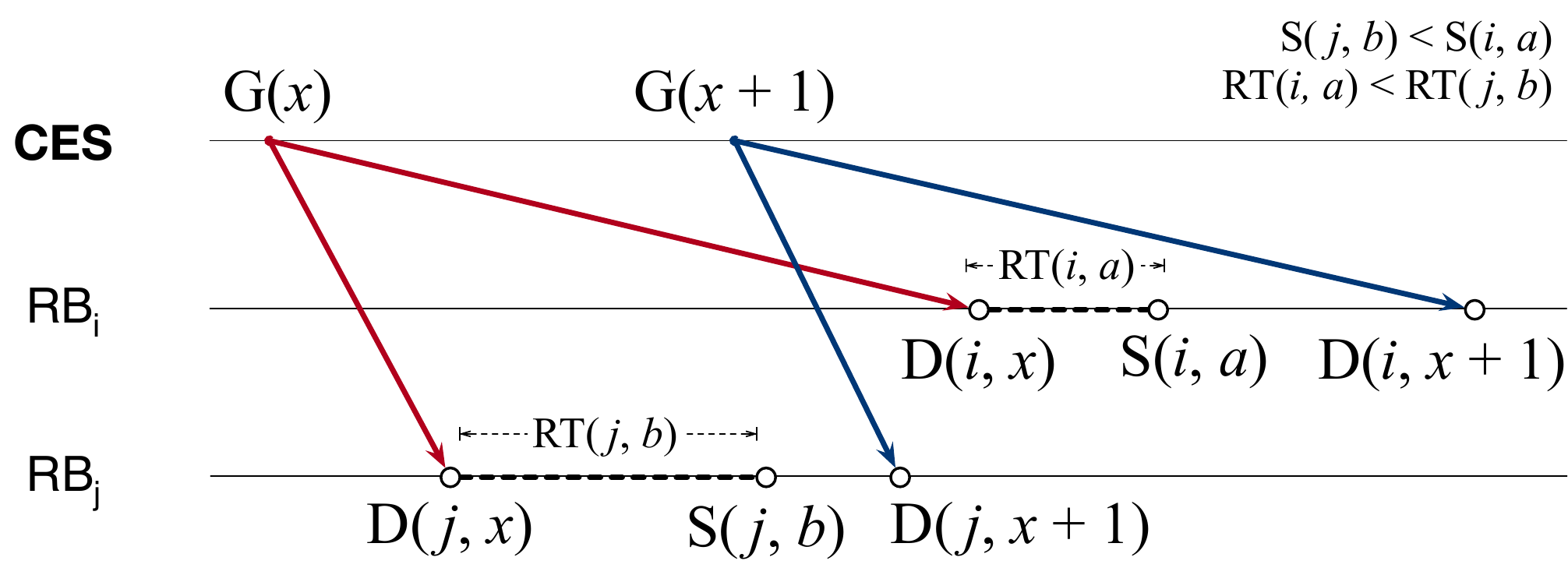}
    \vspace{-4mm}
    \caption{\small{{\bf DBO can help correct for late delivery of data.} Delivery of market data to MP$_i$ is lagging behind MP$_j$. There are two trades $(i,a)$ and $(j,b)$ generated in response to the same market data $x$. $(j,b)$ was submitted before $(i,a)$ but
    %, i.e., $S_j(l) < A_i(k)$. 
    response time of $(i,a)$ is less than $(j,b)$.
    %, i.e., $rt_i(k) < rt_j(l)$. 
    In this example, $DC(i,a) (= \langle x, RT(i,a)\rangle) < DC(j,b) (= \langle x, RT(j,b)\rangle)$ and trade $(i,a)$ is correctly ordered ahead of $(j,b)$.}} %Ordering based on the submission time leads to incorrect ordering.}
    %\pg{Correct figure}}
    \label{fig:dbo_correction}
    \vspace{-3mm}
\end{figure}

\noindent\textbf{Why it works.}
When the trigger point of trade $(i,a)$ is indeed the last data point (i.e., $x = TP(i,a) = ld(i, a)$), then, DBO respects condition C2 for LRTF. Figure~\ref{fig:dbo_correction} shows an illustrative example of this.
This is because, the delivery clock directly tracks the response time of $i,a$ in this case and $O(i,a) = DC(i, a) = \langle x, RT(i,a)\rangle$. For a competing trade $(j,b)$ with higher response time, the delivery clock at time of submission will either read $O(j,b) = DC(j, b) = \langle x, RT(j,b)\rangle$ (if S(j,b)<D(j,x+1)) or $DC(j, b) = \langle y, S(j,b)-D(j,y)\rangle$ with $y>x$. In both cases, $O(i,a) < O(j,b)$.

At a high level, in our ordering we are correcting for latency differences in data delivery by using the delivery time of the last data point. When the last data point is not the trigger point for trade $(i,a)$, DBO satisfies the LRTF condition C2, if the following condition holds, 
\begin{align}
    D(i,ld(i,a))-D(i,x) = D(j,ld(i,a))-D(j,x),
    \label{eq:cond_delivery_lrtf}
\end{align}
where $x = TP(i,a)$.  
While it is impossible to ensure that inter-delivery times remain the same for all participants for all points, by pacing data at the RB it is indeed possible to ensure that the above condition is always met.% \radhika{you meant C2 or the above condition?}. \pg{the above condition only}
The main reason why we can meet the above condition is that condition C2 limits that the trigger point $x$ cannot be any arbitrary data point in the past, and that the trigger point must have been delivered recently  $S(i,a)-D(i,x) < \delta$.
%and we only need to ensure same inter-delivery times for. 
In the next subsection, we will show how we can achieve this and solve challenge 2. %\pg{Is this easy to follow?}

%\pg{FIX: say delivery clocks helps correct has static differences in latency. Why are delivery clocks so good on their own, give more intuition and experimentation. Potential things to include, see 6.1. Maybe make a section of.delivery clock on its own. correct the equation here in terms of response time as well.}
%\pg{Should we include results on necessary conditions on delivery times for achieving LRTF. Maybe its a bit of an overkill.}

\noindent
\textit{Remark:} In our cloud experiments, we find that DBO achieves fairness with very high probability. This is because network latency (from CES to any given participant) exhibits temporal correlation in latency especially over  short periods of time. When temporal correlation is high, inter-delivery time at any participant is close to the inter-generation time at the CES. In such cases, condition given by Equation~\ref{eq:cond_delivery_lrtf} is satisfied with high probability.

\noindent
\textbf{Difference with traditional logical clocks:} Logical clocks are commonly used in distributed systems. The most famous ones are lamport clocks~\cite{lamportSeminalPaper} and vector clocks. These clocks can be used for achieving total causal ordering of events. While these clocks can track causality of events, they cannot be used to achieve response time fairness. In particular, these clocks don't say anything about how two competing trades generated using the same market data should be ordered as these two trades have no direct causality relation. Unlike delivery clocks, such logical clocks also have no notion of measuring time between occurrences of two events. Time difference between events is critical to achieve fairnesss for exchanges. 

\noindent\textit{Note:} Several major financial exchanges already rely on heartbeats~\cite{nyse-client} for liveness when traffic is low.

\begin{figure}[t]
\centering
    \includegraphics[width=0.8\columnwidth]{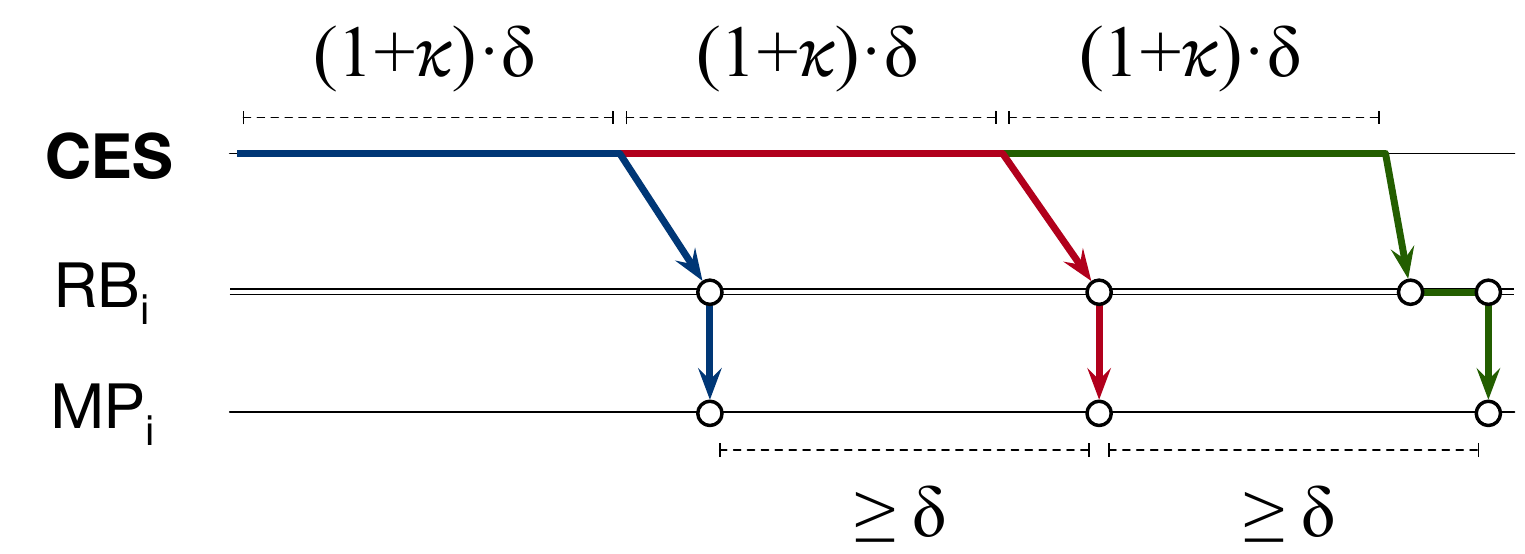}
    \vspace{-2mm}
    \caption{\small{\bf Batching and Pacing. Inter-delivery time for consecutive batches is equal to or more than $\delta$.}}% \pg{Redraw}}% \pg{Eashan see Ranveer's comment}}% \pg{Eashan can you redraw this figure in powerpoint or something.}}}
    \label{fig:batching_pacing}
    \vspace{-4.5mm}
\end{figure}

\subsubsection{Batching and Pacing\\}
\noindent
\textbf{What we do.}
In DBO, the CES breaks data into batches. Each new batch contains all data points in the duration $(1+\kappa) \cdot \delta$ after the previous batch. Here $\kappa > 0$. Each release buffer delivers all data points in a batch at the same time. %Two points $x,y$ belonging to the same batch are delivered simultaneously to each participant, i.e., $D(j,y)=D(j,x), \forall j$.
The release buffer delivers batches as quickly as possible while ensuring that the time between delivery of two consecutive batches is atleast $\delta$. Figure~\ref{fig:batching_pacing} shows an illustration of batching. Both batching and pacing increase the delivery time of data points. In the next subsection we will analyze the impact of the two on latency. Note that in the event of queue build up at the RB, since the batch generation rate ($\frac{1}{(1+\kappa) \cdot \delta}$) is slower than the batch dequeue rate($\frac{1}{\delta}$), the queue at the RB eventually gets drained(\S\ref{ss:understanding_latency}).

\noindent
\textbf{Why it works.} With batching and pacing, DBO achieves LRTF. In particular, 
consider a trade $(i,a)$ with response time less than $\delta$. Because of pacing, consecutive batches are separated atleast by $\delta$. This means that the trigger point ($x=TP(i,a)$) must be within the last received batch. The point $ld(i,a)$ is also the last point in this batch and $D(i,ld(i,a)) = D(i,x)$. \emph{With batching and pacing, the delivery clock again directly tracks the response time of $(i,a)$} and $O(i,a) = DC(i,a) = <ld(i,a), RT(i,a)>$.
With batching, for participant $j$, $x$ and $ld(i,a)$ also belong to the same batch $D(j,ld(i,a)) = D(j,x)$.
For a competing trade $(j,b)$ with higher response time, the delivery clock at the time of submission will either read $O(j,b) = DC(j,b)) = \langle ld(i,a)), RT(j,b)\rangle$ (if $(j,b)$ was submitted before the next batch, i.e., $S(j,b) < D(j,ld(i,a)+1)$) or $DC(j, b) = \langle y, S(j,b)-D(j,y)\rangle$ with $y>ld(i,a)$. In both cases, $O(i,a) < O(j,b)$.

\if 0
\begin{figure}[t]
\centering
    \includegraphics[width=0.8\columnwidth, angle = -90]{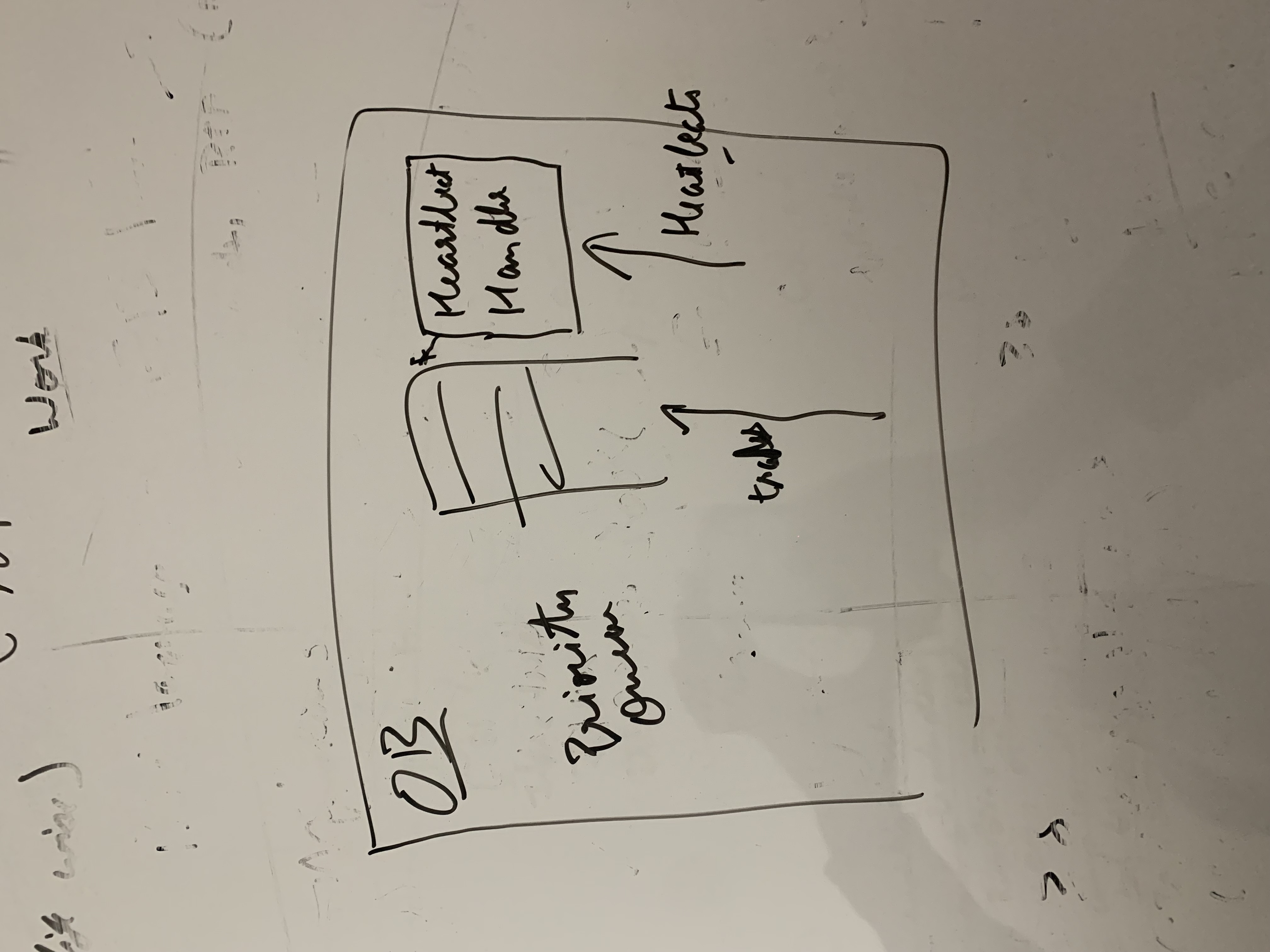}
    \vspace{-2.5mm}
    \caption{\small{\bf Enforcing the ordering.} \pg{Redraw}}% \pg{Eashan see Ranveer's comment}}% \pg{Eashan can you redraw this figure in powerpoint or something.}}}
    \label{fig:pq_hb}
    \vspace{-2.5mm}
\end{figure}
\fi

\subsubsection{Enforcing the ordering\\}
\label{ss:enforcing_ordering}
OB contains a priority queue where all incoming trades are sorted based on the delivery clock timestamp (Equation~\ref{eq:ordering_with_dc}). A trade $(i,a)$ at the head of the priority queue should be forwarded to the CES only when the OB has received all trades $(j,b)$ with lower ordering $DC(j,b) < DC(i,a)$. 

\noindent
\textit{OB's Heartbeat Handler:} In DBO, each RB sends a heartbeat periodically every $\tau$ seconds to the CES. The heartbeat $(i,h)$, from participant $i$ contains the delivery clock timestamp at the time the heartbeat was generated ($DC(i,h)$). Since data in delivered in order and because delivery clock advances monotonically with time, heartbeat $(i,h)$ tells the OB that it has received all trades from participant $i$ with delivery clock less than $DC(i,h)$. The ordering buffer forwards trade $(i,a)$ if it has received heartbeats from all the participants with delivery clock timestamp higher than $DC(i,a)$.

\subsection{Understanding DBO}

\subsubsection{Latency, parameter setting and straggler mitigation\\}
\label{ss:understanding_latency}

We will first derive the optimal latency for any ordering system that achieves response time fairness. We will then discuss how DBO compares to  optimal latency. We will also present guidelines for setting parameters and how to mitigate stragglers that can impact latency.

We define latency for trade $(i,a)$, $L(i,a)$, as the sum of latency in delivering data (from generation time) and latency in trade forwarding to the CES (from trade submission time). Formally,
\begin{align}
    L(i,a) = (D(i, x) - G(x)) + (F(i,a) - S(i,a)),\nonumber\\
    L(i,a) = F(i,a) - G(x) - RT(i,a),
    \label{eq:latency_def}
\end{align}
where $x=TP(i,a)$.

\noindent
\textbf{Optimal Latency:} Formally trade $(i,a)$ can only be forwarded to the CES's ME only when the CES has received all potential competing trades $(j,b)$ with lower response times ($RT(j,b) < RT(i,a)$). Let $R(i, x, RT)$ represent the time when the CES receives trade $(i,a)$ whose whose trigger point is x and response time is RT. Formally, 
\begin{align}
    F(i,a) = \max_{j}(R(j, x=TP(i,a), RT=RT(i,a))). 
\end{align}
A subtle point to note here is that even if participant $j$ does not produce any trades, we still need to wait for that participant till $R(j, x=TP(i,a), RT(i,a))$. Before this time, fundamentally the CES cannot be sure that it will not receive a trade from participant $j$ with a lower response time. 

We use $RTT(i, x, RT)$ to represent the sum of raw network latency for point x from CES to MP $_i$ and latency of trade from MP$_i$ to the CES (whose trigger point is x and response time RT).  In the best case scenario for latency (no buffering at any point in the path) we get
\begin{align}
    R(i, x, RT) = G(x) + RTT(i, x, RT) + RT.
\end{align}

Using the above two equations, we can write the following theorem.
\begin{theorem}
For any ordering system that achieves response time fairness, the minimum latency for trade $(i,a)$ is given by,
\begin{align}
    L(i,a) = \max_{j}(RTT(j, x=TP(i,a), RT=RT(i,a))).
\end{align}
\vspace{-2mm}
\label{thm:latency}
\end{theorem}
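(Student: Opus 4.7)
\medskip

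\noindent\textbf{Proof plan.} My plan is to establish the stated expression as a lower bound on any response-time-fair ordering system, and then argue that the bound is attainable in the idealized model (no queueing) stipulated in the theorem. The two given equations before the statement already do most of the work; the task is to justify them rigorously and chain them together.

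\medskip

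\noindent\textbf{Step 1: Lower-bound $F(i,a)$ by a counterfactual/indistinguishability argument.} The critical step is to justify the equation $F(i,a) = \max_j R(j, x{=}TP(i,a), RT{=}RT(i,a))$. First I would observe that this is really a lower bound: for any RTF-compliant system, it must be that $F(i,a) \geq R(j, x, RT(i,a))$ for every participant $j$. Suppose toward contradiction that the OB forwards $(i,a)$ at some time $F(i,a) < R(j, x, RT(i,a))$ for some $j$. By definition of $R$, no trade from $j$ with trigger point $x$ and response time strictly less than $RT(i,a)$ could yet have reached the CES. Consider the counterfactual execution in which participant $j$ actually submits such a competing trade $(j,b)$ with $TP(j,b)=x$ and $RT(j,b) < RT(i,a)$. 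Up to time $F(i,a)$, the two executions are indistinguishable from the CES's viewpoint, so the system must make the same forwarding decision, ordering $(i,a)$ ahead of $(j,b)$ and thereby violating condition $C1$ of Definition~\ref{def:rtf}. Hence the inequality must hold for all $j$, giving $F(i,a) \geq \max_j R(j, x, RT(i,a))$.

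\medskip

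\noindent\textbf{Step 2: Lower-bound $R(j,x,RT)$ by the physical path.} Next, I would argue that even in the best case, $R(j, x, RT) \geq G(x) + RTT(j,x,RT) + RT$. The market data point $x$ cannot leave the CES before $G(x)$; it reaches MP$_j$ no sooner than after the CES-to-MP leg of $RTT(j,x,RT)$; the trade cannot be generated until an additional $RT$ elapses (by the compute model of Equation~\ref{eq:cm}, since $D(j,x) \geq G(x) + \text{forward leg}$); and the return leg of $RTT$ must then be traversed. With no buffering at the RB, MP, or OB, these sum to exactly the claimed expression, which is therefore a lower bound and is achieved in the idealized scenario.

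\medskip

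\noindent\textbf{Step 3: Combine and simplify.} Substituting Step 2 into Step 1,
\begin{align*}
F(i,a) \;\geq\; \max_j\bigl(G(x) + RTT(j,x,RT(i,a)) + RT(i,a)\bigr) \;=\; G(x) + RT(i,a) + \max_j RTT(j,x,RT(i,a)),
\end{align*}
since $G(x)$ and $RT(i,a)$ do not depend on $j$. Plugging into the latency definition (Equation~\ref{eq:latency_def}) yields $L(i,a) \geq \max_j RTT(j, x, RT(i,a))$, and equality is attained in the no-buffering idealization, giving the claim.

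\medskip

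\noindent\textbf{Main obstacle.} The only subtle step is Step 1: the indistinguishability argument depends on the trigger point being unknown to the OB and on the network model allowing the counterfactual trade to exist while matching the observed execution up to $F(i,a)$. I would want to state the adversarial choice carefully so the reader sees that the argument holds for \emph{any} RTF ordering system, not just DBO, and that it relies on the same finite-but-unbounded latency assumption used earlier in the paper.
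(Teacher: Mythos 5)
Your proposal follows essentially the same route as the paper: the paper derives the theorem by simply combining the two displayed equations ($F(i,a)=\max_j R(j,x,RT(i,a))$ and $R=G(x)+RTT+RT$) with the latency definition, justifying the first only informally by noting that before $R(j,x,RT(i,a))$ the CES ``cannot be sure'' no faster competing trade from $j$ will arrive. Your Steps 1--3 are the same decomposition, with the added (and welcome) rigor of making the counterfactual/indistinguishability argument explicit; the only nitpick is that the adversarial trade should have response time \emph{slightly less} than $RT(i,a)$ and arrival time still after $F(i,a)$, which requires a small limiting argument rather than the claim that no such trade ``could yet have reached the CES.''
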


Put it simply, the above theorem states for achieving response time fairness, the minimum latency is bounded by the maximum round trip time across all participants. This means that fundamentally bad latency for a participant affects the latency of all trades. To achieve low latency consistently, we would like to ensure that latency of all the participants is well behaved majority of the times. How to better achieve this goal is left as a subject for future work.

%This theorem implies that even in cloud settings exchanges should ask for  network latency  

%With a very large number of participants thus pose a 
%\pg{fundamental issue with scalability}

\noindent
\textbf{How does DBO compare with the optimal?} DBO achieves close to optimal latency.  Compared to the optimal, batching and pacing introduce additional delay in delivery of market data points.  Since heartbeats are  generated only periodically they can  introduce an additional delay of $\tau$ at the ordering buffer. We now discuss the delay due to each of these components and how do the parameters $\kappa$, $\delta$ and $\tau$ affect latency. %\pg{Include a table here for parameters?}

\noindent
\textbf{Impact of batching:} Batching can introduce an additional delay of $(1+\kappa)\cdot \delta$ in the worst case. 

\noindent
\textit{Setting $\delta$:} $\delta$ thus presents a trade-off between latency and fairness (how large of a horizon can we pick). The right trade-off really depends on the needs of the exchange. Ideally, the exchange should pick the minimum value of $\delta$ that accommodates the response time of the fastest participants in a race. Our conversations reveal that fastest participants typically respond within a few microseconds and majority of the speed races last 5-10 $\mu s$. For our cloud experiments we  use $\delta = 20 \mu s$.

\begin{figure}[t]
    \centering
    \includegraphics[trim={0 0 0 0mm},clip,width=0.8\linewidth]{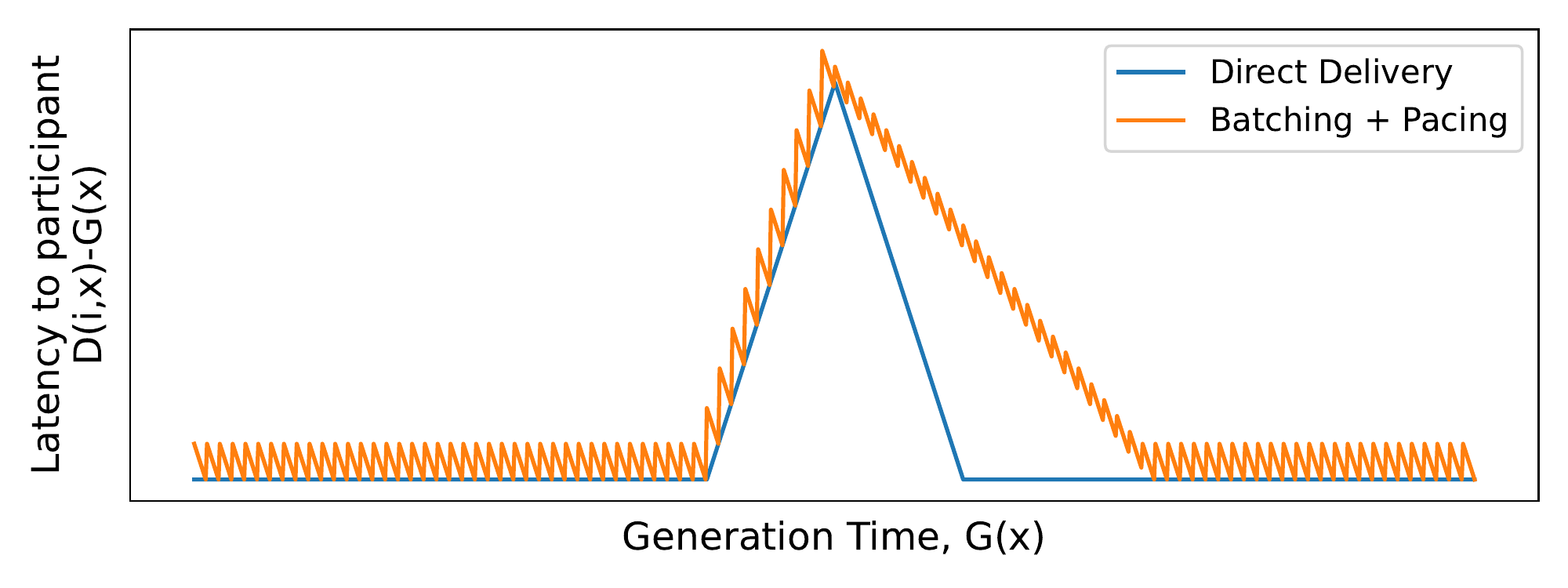}
    \vspace{-5mm}
    \caption{\small{\textbf{Latency in data delivery:} x-axis shows the generation time of the market data. y-axis shows the latency from generation time to data delivery. $\kappa$  governs the average slope of the orange line immediately after latency spike (slope = $\frac{\kappa}{1+\kappa}$}).} %\pg{Include orange line and the base latency. Change labels to DBO and direct-delivery. Slope is $\kappa/(1+kappa)$}}
    %\pg{Eashan: Include the drain rate, make the colored lines thicker and use different linestyles for the three schemes..}}% \pg{Maybe label the drain rate in the figure for S1 and S2.}}
    \label{fig:latency_b+p}
    \vspace{-5mm}
\end{figure}

\noindent
\textbf{Impact of pacing.} Pacing restricts the batch dequeue rate at the RB. When network latency to a participant is not varying, the batch arrival/enqueue rate at the RB ($\frac{1}{(1+\kappa) \cdot \delta}$) is higher than the batch dequeue rate limit ($\frac{1}{\delta}$) and there is no queue build up. However, when network latency to a participant is decreasing (e.g., after a latency spike), batch arrival rate at the RB can exceed the dequeue rate limit leading to a queue build up. The overall queue - dequeue rate can be given by $\text{batch size} \cdot \text{batch rate limit} = 1+\kappa$. Figure~\ref{fig:latency_b+p} shows the impact of batching and pacing on latency in delivery of data in the event of a queue build up. The figure also shows the latency when data is delivered directly (raw network latency). The smaller sawtooths in the batching + pacing are because of batching. The deviation in direct delivery and batching + pacing is because of the rate limit imposed by pacing.

\noindent
\textit{Setting $\kappa$:} Increasing $\kappa$ increases batching delay but also increases the queue drain rate in the event of queue build up due to tail latency spikes. Increasing $\kappa$ thus presents a trade-off between reducing tail latency and increasing average latency. In our experiments we use $\kappa = 0.25$.
 
\noindent\textbf{Impact of heartbeats:} Heartbeats present a trade-off. Too frequent heartbeats can overwhelm the network, the ordering buffer or the release buffer. 
Infrequent heartbeats can increase the time OB has to wait of the participants. In particular, hearbeats can introduce an additional wait time of $\tau$. Note that the number of heartbeats, the OB needs to process increases linearly with the number of participants. In the next section we show how the heartbeat handler can be sharded for scalability.

\noindent\textit{Setting $\tau$:} Ideally we want to pick as low of a value as possible for the heartbeats without overwhelming the system. This number is very much dependent on the capabilities of the network and the processing power of the RB and the OB. In our cloud implementation we use $\tau = 20 \mu s$.

\noindent\textit{A note on latency:} When the network latency to participants is not varying with time, there is no queue build up at the release buffers. In such cases, DBO adds maximum of $((1+\kappa)\cdot \delta) + \tau$ additional latency over the optimal.

\noindent\textbf{Straggler Mitigation and RB/MP failure} In the event a  participant or release buffer crashes, DBO can stall processing trades. Further, the overall system latency also gets impacted when a certain participant is experiencing unusually high network latency (see Theorem~\ref{thm:latency}). Here we have the option to wait for the delayed participant and take a latency hit but not let the fairness be impacted. Ideally, we want to let the system continue with low latency with only the affected participant incurring unfairness. In DBO, we use a simple strategy to mitigate this. Using the heartbeats and the generation time of data points, the OB tracks the round trip latency to each participant. If this latency goes beyond a certain threshold for a participant, then the OB does not wait for heartbeats from such straggler participant before forwarding trades. When the round trip latency goes down, OB again starts waiting for heartbeats from the straggler. In the event of crashes, OB might not hear any heartbeats. If the OB does not hear a heartbeat from a particular participant for the above threshold, then it concludes that round trip latency exceeds the threshold and the OB deems the participant a straggler. 
 
\noindent\textit{OB failure:} In the event, the OB crashes all trades in the priority queue will be lost. System will incur unfairness in such cases. 

%The above strategy is also helpful in controlling overall system latency when a certain participant is experiencing unusually high network latency.

\subsubsection{Is Batching and Pacing necessary?\\}
\textbf{Batching and pacing contribute delays; are they necessary?} The answer is yes. Similar to Lemma~\ref{lemma:inter_delivery_imp}, we can derive the necessary conditions for achieving LRTF. 
\begin{corollary}
When trigger points are unknown, the \textit{necessary} conditions on the delivery processes for achieving response time fairness with any ordering system is given by,
\vspace{-1mm}
\begin{align*}
    \text{If }  D(i,y) - D(i,x) &< \delta, \text{ then},\nonumber\\
    D(i, y) - D(i,x) &= D(i,y) - D(i,x), & \forall i,j.
\end{align*}
\label{cor:inter_delivery_lrtf}
\vspace{-6mm}
\end{corollary}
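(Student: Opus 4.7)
The plan is to argue by contradiction, reusing the indistinguishability template from the proof of Lemma~\ref{lemma:inter_delivery_imp}, but carefully restricting the witness executions so that the constructed scenarios genuinely activate the LRTF precondition. Suppose some ordering system achieves LRTF in every execution, yet there exist $i,j$ and data points $x,y$ with $\alpha := D(i,y) - D(i,x) < \delta$ and $\beta := D(j,y) - D(j,x) \neq \alpha$. Without loss of generality take $\alpha < \beta$. I want to derive a contradiction.

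I would exhibit two executions that agree on every quantity an ordering system can observe, namely the submission times $S(i,a)$, $S(j,b)$ and the RB delivery times, but which force opposite orderings under LRTF. In Execution~A, trade $(i,a)$ has trigger point $x$ with $RT(i,a) = \alpha + r$, and trade $(j,b)$ has trigger point $x$ with $RT(j,b) = \beta + s$. In Execution~B, both trades instead have trigger point $y$, with $RT(i,a) = r$ and $RT(j,b) = s$. In either case $S(i,a) = D(i,y) + r$ and $S(j,b) = D(j,y) + s$, so the OB's view is identical and any deterministic ordering system must commit to the same output ordering.

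The parameters $r,s$ would be chosen with $r \in \bigl(0,\ \min(\beta-\alpha,\ \delta-\alpha)\bigr)$ and $s = r/2$; this interval is nonempty precisely because $\alpha < \beta$ and $\alpha < \delta$. Then in Execution~A, $RT(i,a) = \alpha + r < \delta$ and $RT(j,b) - RT(i,a) = (\beta-\alpha) - r/2 > 0$ (since $r < \beta - \alpha$), so LRTF applied to $(i,a)$ demands $O(i,a) < O(j,b)$. In Execution~B, $RT(j,b) = r/2 < \delta$ while $RT(j,b) < RT(i,a)$, so LRTF applied to $(j,b)$ demands $O(j,b) < O(i,a)$. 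This contradicts indistinguishability, so the assumed violation cannot exist.

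The main obstacle is the joint parameter selection: I need at least one trade in each execution to carry a response time strictly below $\delta$, since otherwise LRTF is vacuous and no contradiction can be extracted; at the same time I must flip the ``who is faster'' relationship between the two executions. The hypothesis $\alpha < \delta$ provides the slack $\delta - \alpha > 0$ that admits a valid $r$, and the assumed violation $\beta > \alpha$ supplies the separation needed to reverse the response-time inequality across executions. This is precisely the reason the ``$D(i,y)-D(i,x) < \delta$'' clause appears in the hypothesis of the corollary: without it, no contradiction-forcing $r$ exists and the weaker LRTF guarantee is consistent with unequal inter-delivery gaps.
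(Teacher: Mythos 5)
Your proof is correct and follows essentially the same route as the paper's: the paper proves this corollary by rerunning the two-indistinguishable-executions argument of Lemma~\ref{lemma:inter_delivery_imp} with the extra constraint that the relevant response times stay below $\delta$, which is exactly your construction (your $r,s$ play the role of the paper's $c3,c4$, and your interval $\bigl(0,\min(\beta-\alpha,\delta-\alpha)\bigr)$ makes explicit the parameter slack the paper's appendix only asserts exists). The only nitpick is that your ``WLOG $\alpha<\beta$'' deserves a one-line justification — it holds because if $\beta<\alpha<\delta$ then $(j,i)$ is also a witness satisfying the $<\delta$ premise, so the roles can be swapped.
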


\begin{proof}
Please see Appendix~\ref{app:cor_inter_delivery_lrtf}.
\end{proof}
\vspace{-1mm}
In contrast to Lemma~\ref{lemma:inter_delivery_imp}, the above condition states that the inter-delivery time of two points should be same across all participants only if they are separated by less than $\delta$ for some participant. Batching and pacing indeed satisfies this, for two points x and y in a batch, the inter-delivery times across all participants is indeed zero and hence equal. For point $x$ and $y$ belonging to different batches, since the inter-delivery time is greater than $\delta$ across all participants, there is no additional contraint on inter-delivery times being equal.
 
\subsubsection{Impact of RB to MP latency\\}
In scenarios where RB and the participant cannot be colocated, DBO can incur unfairness. If this latency is unbounded, then, it might be impossible to achieve fairness. If latency is bounded, however, then DBO provides the following fairness guarantees.

\begin{theorem}
    If round trip network latency from release buffer $i$ to it's corresponding participant is bounded between $B_l(i)$ and $B_h(i)$, then, DBO achieves the following guarantee for ordering trades.
    \begin{align*}
    C3: &\text{ if } TP(i,a)= TP(j,b) = x\\ 
    &\land RT(i,a) < RT(j,b) - (B_h(i)-B_l(j)), \\
    & \land RT(i,a) < \delta - B_h(i),\\
    &\text{ then, }O(i,a) < O(j,b).
\end{align*}
    \label{thm:rb_to_mp_latency}
    \vspace{-5mm}
\end{theorem}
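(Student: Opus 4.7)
The plan is to track how RB-to-MP latency perturbs the delivery-clock stamp that RB $i$ writes onto trade $(i,a)$, and then compare the stamps of $(i,a)$ and $(j,b)$ using the two latency bounds. Let $f_i$ denote the forward (RB$\to$MP) latency for data point $x$ and $r_i$ the reverse (MP$\to$RB) latency for trade $(i,a)$; by hypothesis $B_l(i)\le f_i + r_i \le B_h(i)$, and similarly for $j$. Since the delivery clock is computed at the RB using its own wall-clock, the value $DC(i,a)$ it emits is $\langle ld_{RB}(i,a),\ S_{RB}(i,a)-D(i,ld_{RB}(i,a))\rangle$, where $S_{RB}(i,a)=D(i,x)+f_i+RT(i,a)+r_i$ and $ld_{RB}(i,a)$ is the latest data point released by RB $i$ before $S_{RB}(i,a)$. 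This reduces everything to: identify $ld_{RB}$ under the theorem's hypotheses, then compare the resulting tuples.

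First I would pin down $DC(i,a)$. The apparent elapsed interval at RB $i$ since delivering $x$ is $f_i+RT(i,a)+r_i \le B_h(i)+RT(i,a) < \delta$ using the second hypothesis $RT(i,a)<\delta-B_h(i)$. By batching and pacing, consecutive batches at RB $i$ are spaced by at least $\delta$, so no new batch has been released between $x$'s delivery and the arrival of $(i,a)$ back at RB $i$; thus $D(i,ld_{RB}(i,a))=D(i,x)$ and
\begin{align*}
DC(i,a) = \langle x,\ f_i+RT(i,a)+r_i\rangle.
\end{align*}

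Next I would analyze $DC(j,b)$ in two cases. In the case where $(j,b)$'s apparent elapsed interval $f_j+RT(j,b)+r_j$ is still less than $\delta$, the same argument gives $DC(j,b)=\langle x,\ f_j+RT(j,b)+r_j\rangle$. The tuples share their first coordinate, so the ordering reduces to comparing the second coordinates. Using $f_i+r_i\le B_h(i)$ and $f_j+r_j\ge B_l(j)$,
\begin{align*}
f_i+RT(i,a)+r_i \le B_h(i)+RT(i,a) < B_l(j)+RT(j,b) \le f_j+RT(j,b)+r_j,
\end{align*}
where the strict inequality is exactly the first hypothesis $RT(i,a)<RT(j,b)-(B_h(i)-B_l(j))$. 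Hence $DC(i,a)<DC(j,b)$. In the complementary case where $f_j+RT(j,b)+r_j\ge\delta$, at least one new batch has been released at RB $j$ by the time $(j,b)$ arrives, so $ld_{RB}(j,b)$ lies in a strictly later batch than $x$; since the tuples are ordered lexicographically by batch/sequence first, $DC(i,a)<DC(j,b)$ holds trivially. Combining the two cases and invoking the OB rule $O=DC$ gives the claim.

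The main subtlety will be making rigorous that $ld_{RB}(i,a)$ really is in the same batch as $x$ despite the RB and MP being desynchronized by $f_i, r_i$: everything in the argument is measured against RB $i$'s own clock, so the pacing guarantee (consecutive batches separated by at least $\delta$ at that same clock) composes cleanly with the bound $f_i+RT(i,a)+r_i<\delta$ without needing any cross-node synchronization. The only place the separate bounds $B_h(i)$ and $B_l(j)$ interact is in the second-coordinate comparison above, which is why the hypothesis takes the asymmetric form $RT(j,b)-RT(i,a)>B_h(i)-B_l(j)$.
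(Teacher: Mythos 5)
Your proposal is correct in substance and follows essentially the same route as the paper: express the stamp that RB~$i$ actually records as the true response time inflated by the RB-to-MP round trip, use $RT(i,a)+B_h(i)<\delta$ together with the $\ge\delta$ batch spacing to conclude that $(i,a)$ is stamped against the batch containing $x$, and then compare second coordinates via $B_h(i)+RT(i,a)<B_l(j)+RT(j,b)$, which is exactly how the paper's appendix proof uses the asymmetric hypothesis. The one misstep is in your second case for $(j,b)$: from $f_j+RT(j,b)+r_j\ge\delta$ you infer that a new batch must already have been released at RB~$j$, but pacing only guarantees that consecutive batches are separated by \emph{at least} $\delta$ (and the generation interval is $(1+\kappa)\delta>\delta$), so the next batch may well not have arrived yet and $ld_{RB}(j,b)$ can still equal $x$. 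The conclusion survives anyway, because in that sub-case the tuples share their first coordinate and the second coordinate of $(j,b)$ is at least $\delta$, which strictly exceeds $f_i+RT(i,a)+r_i<\delta$; alternatively you can split, as the paper does, on whether $(j,b)$ is submitted before or after the next batch delivery, which makes the single second-coordinate comparison cover all of the ``same first coordinate'' situations uniformly.
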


\vspace{-1mm}
\begin{proof}
See Appendix~\ref{app:rb_to_mp_latency}.
\end{proof}
\vspace{-1mm}

Compared to LRTF, the above condition reduces the bound on response time for the faster trade $(i,a)$ to $\delta - B_h(i)$.
Additionally, the above condition states that trades are ordered fairly only when the response time of the faster trade is lower than the response time of the competing trade by atleast the variability in latency ($B_h(i)-B_l(j)$). This theorem essentially states that when RB and MP cannot be colocated, for better fairness we should ensure that latency between them is both consistent (across participants) and the upper bound is small.

\subsubsection{Impact of Losses\\}

Although infrequent, packet losses can occur in cloud environments. Such losses can impact fairness in DBO. However, only the fairness for trades that are lost and trades  whose trigger point is lost is impacted (see Appendix~\ref{app:impact_losses}).

\if 0
\subsubsection{Excessive queing at RB and OB\\}
\pg{This can be cut?}

Even though DBO employs straggler mitigation to limit the latency at the OB, it can build up a large queue if it receives a very large number of trades (little's law). The RB can also overflow in scenarios where the network latency is decreasing (Figure~\ref{fig:latency_b+p}) for a large period of time. 

\noindent
\textbf{RB:} In the event a release buffer's queue fills up (exceeds a certain threshold), to avoid overflow the release buffer forgoes pacing and starts releasing data as fast as possible to reduce the queue. In such cases, the delivery clock advances faster than as dictated by pacing. As a result, trades from such a participant might unfairly get ordered behind. The fairness for trades from other participants remains unaffected. When the queue goes down the RB resumes normal operation.

\noindent
\textbf{OB overflow:} In the event the order buffer's queue fills up, the OB starts releasing trades as fast as possible without waiting for heartbeats from participants. Once the queue goes down, OB resumes normal operation. In such cases, fairness of all trades are impacted. 
\fi

\subsubsection{Thwarting front-running attacks\\}

%Monotonicity of delivery clocks ensures that participants are incentivized to submit trades as early as possible and delaying trades does not offer any competitive advantage.% and participants are incentivized to be honest.
There is a front-running attack possible in our system. In particular, if a participant receives a market data point $x$ through some other way before RB delivers the data point $x$ to the participant then the participant has a competitive advantage. This scenario (though unlikely) is still possible. 

A simple to avoid this is to limit that a participant cannot talk to anyone beyond the CES. 
%\pg{External participants}
However, we would like the participant machine to use other  ``helper'' machines in the cloud, e.g.,  to aid computation. We also want to allow the participants to be able to talk to machines outside the cloud, e.g., to get a news stream. %stream.%\footnote{Participants use external news streams update trading strategies and make trading decisions.} 

In Appendix~\ref{app:front_running}, we show how we can prevent such front running attacks. In our solution, the participant and its helpers cannot communicate with any other participants or their helpers using the cloud network. 
To prevent scenarios where a participant uses a proxy machine outside the cloud to send market data to other  participants (faster than the network), we precisely add additional latency for data being sent outside the cloud.
While our solution introduces latency for data going out, the latency of speed trades remains unaffected.

\if 0

While monotonicity of delivery clocks ensure that participants are incentivized to submit trades as early as possible an delaying trades does offer any competitive advantage, there is still a potential front-running attack possible in our system. In particular, if a participant receives a market data point $x$ through some other way before RB delivers the data point $x$ to the participant then it has a competitive advantage. This scenario though unlikely is still possible.
A simple to avoid this is to limit that participant cannot talk to anyone beyond the CES. 

However, we would like the participant machine to use other  ``helper'' machines in the cloud to aid computation. We also want to allow the participants to be able to talk to machines outside the cloud. Participants do use external news streams and feeds from other exchanges to update trading strategies and make trading decisions. We will discuss fairness with respect to such streams shortly.  

Allowing such communication naively can lead to attacks.
By restricting communication, it is possible to ensure that no participant gets early access to market data %(at the cost of introducing latency in messages from the front-end to helpers outside the cloud)
and thwart such front-running attacks. 

%
%\pg{Which of two alternatives is better?}
%
To this end, we impose two simple constraints on communication. \begin{enumerate*}[label=(\arabic*)]\item A participant machine and its helper machines can communicate with each other freely but they cannot communicate with any other machines in the cloud. This restriction can be imposed easily by cloud providers today using security groups. This restriction ensures that a participant machine cannot get market data from other participant machines in the cloud directly. Next, we will ensure that a participant machine cannot get an earlier market data feed from outside the cloud. 
We will do so by restricting that a participant can only send data point x out of the cloud, when x has been delivered to all participants in the cloud. This way, market data points can only be available outside the cloud once they have been delivered to all the participants.
\item The helper machines cannot send data outside the cloud. Any data (excluding the trade orders) from a participant being sent outside the cloud is tagged by the delivery clock at the RB and buffered at a gateway. The data sent by the participant could potentially be a market data point with id less than or equal to the last point id (first tuple) of the delivery clock time stamp. The gateway thus buffers this data until it is sure that the all data points with id less than the last data point id in the delivery clock time stamp have been delivered. For this purpose, RB's periodically communicate their delivery clock to the gateway. 
%
%A simple way to achieve this is for each RB to send other RBs periodic beacons communicating the status of its delivery clock. This way each RB can maintain a lower bound on the delivery clocks at other RBs. 
\end{enumerate*}
\pg{include this? a bit hand-wavy and not clean. There is one challenge to be solved though. If data delivery to a particular participant is straggling then the gateway buffer can get bloated. It is not necessary for the gateway to wait for such straggler if we disable the incoming data to the straggler. The gateway can identify such stragglers and then disable any data coming from outside the cloud.}

Note that the above solution adds additionaly latency for data being sent outside the cloud. However, the latency of speed trades remains unaffected.
%There are other ways to thwart front-running that impose weaker restrictions on communication or are easier to implement. We chose to present this one for its simplicity.

\fi

\subsubsection{Limtations of DBO: Fairness beyond LRTF\\}
\label{ss:beyond_fairness}

With DBO, it is not guaranteed that trades that do not directly follow the LRTF model (Theorem~\ref{thm:1} and Equation~\ref{eq:cm})are ordered fairly. However, DBO still ensures that fairness for the most latency-sensitive speed trades. While ensuring guaranteed fairness for trades that do not follow the might be impossible, we will discuss potential some solutions.

%This will impose some system challenges. Another challenge is that different participants might be requesting different external streams. 
%

\noindent\textbf{Trades with response time > $\delta$:} DBO does not provide any guarantees for trades with response time greater than $\delta$. %If the inter-delivery times for batches across participants are same then DBO provides response time fairness for such trades. Again achieving the same inter-delivery times for all the batches is impossible. 
In case we have access to synchronized clocks, we can try and ensure (to the extent possible) that batches are indeed delivered at the same time across participants. 
When batches are delivered simultaneously, delivery clocks also get synchronized and DBO simply orders trades in the order of submission time. DBO thus ensures better fairness for such trades (when data is delivered simultaneous) while always guaranteeing LRTF. %\pg{Is this clear?}

%Regardless of whether using clocksync or not for deliverying the data, the performance of DBO for such trades is comparable to 

\noindent\textbf{Generalized compute model for trades:} A trade's submission time might be governed by delivery times of multiple data points. Again in such cases if we have access to synchronized clocks, we can try and ensure simultaneous delivery to the extent possible and achieve better fairness for such trades.

\noindent\textbf{External data streams:} In theory, external data streams like news events or market data from a competing exchange can trigger speed races. While DBO does not delay delivery of such streams to the participants (Appendix~\ref{app:front_running}), as described it does not guarantee fairness with respect to such streams. Existing exchanges do not provide any simultaneous delivery guarantees with respect to such external streams. Such streams typically traverse the internet, and the variability is network latency is substantially higher (order of milliseconds) than the market data stream (order of microseconds). Potentially, the exchange can serialize such external streams with the market data stream and ensure LRTF with respect to such a super stream. Such a serialization might not be trivial. Participants are requesting different data streams. We need to think carefully about what constitutes a fair serialization.
%\pg{Talk about how  further system challenges.}

%\subsubsection{\pg{Miscellaneous, do if time:}}
%\pg {Radhika advidce here would be helpful}

%\pg{1. Impact of clock drift rate, 3. Is batching and pacing necessary 4. Discussion, sharding for scalability, a separate RB for each asset class}

\if 0

\subsubsection{Delivery Clock\\}
Each RB maintains a delivery clock. This delivery clock essentially tracks time relative to when market data was delivered to the participant. We use $DC(i,t)$ to represent deliver clock of participant $i$ at time $t$. Delivery clock is a lexicographical tuple.
\begin{align}
    DC(i,t) = \langle ld(i,t), t-D(i, ld(i,t))\rangle.
\end{align}
where $ld(i,t)$ is the latest data point that was delivered to MP$_i$ at time t.% (i.e., $D_i(x_l(t)) \leq t < D_i(x_l(t)+1)$). 
Interval, $t-D(i, ld(i,t))$, corresponds to the time that has elapsed since the last delivery and can be measured locally at the RB without requiring any clock synchronization (challenge 1). Delivery clock advance monotonically with time. This property will help us overcome challenge 3 and also guard us against certain attack. (\pg{forward pointers}). Figure~\ref{fig:delivery_clock} shows how delivery clock advances with time.

\begin{figure}[t]
\centering
    \includegraphics[width=0.8\columnwidth]{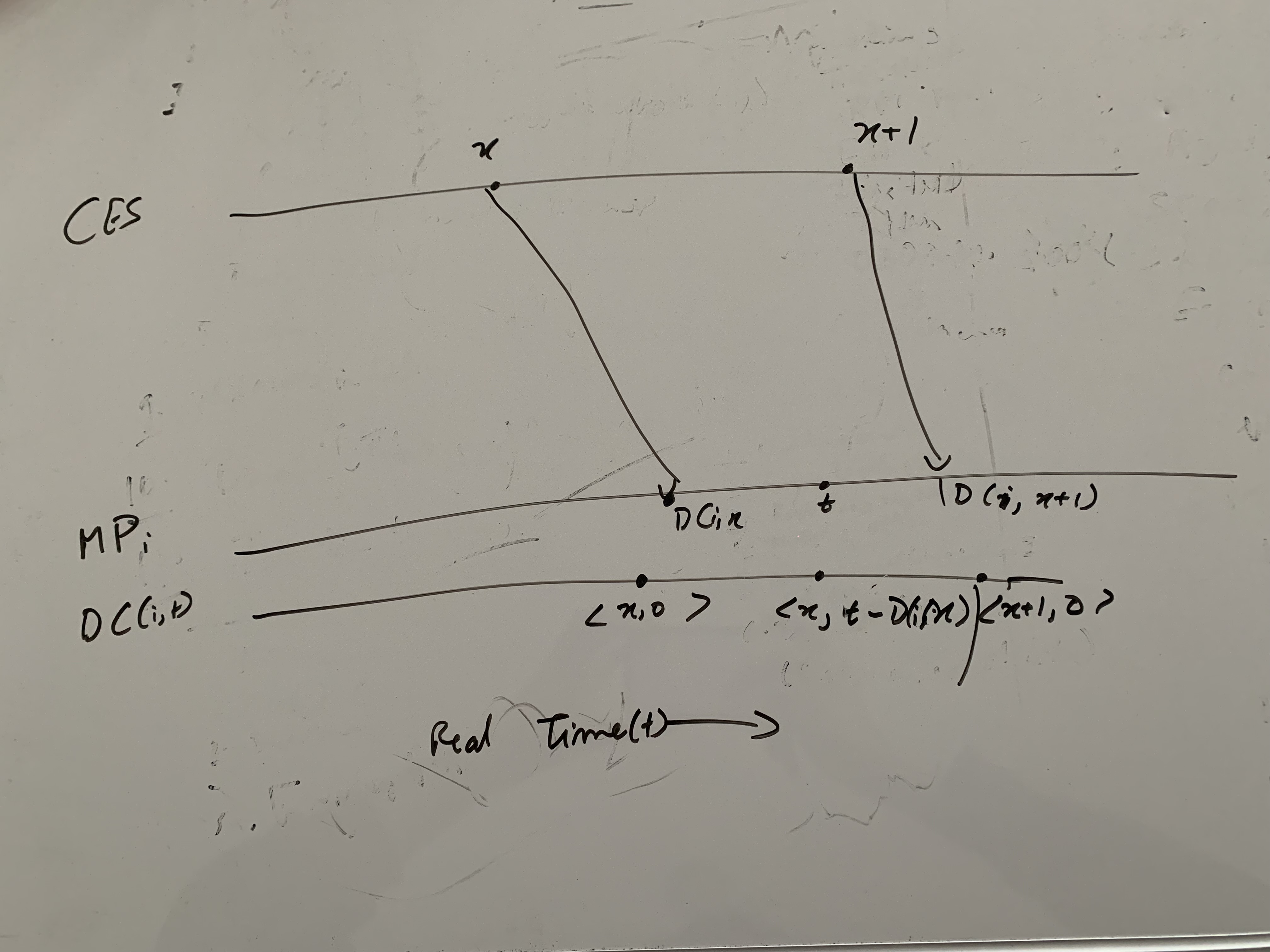}
    \vspace{-2.5mm}
    \caption{\small{\bf Delivery Clock.} \pg{Redraw}}% \pg{Eashan see Ranveer's comment}}% \pg{Eashan can you redraw this figure in powerpoint or something.}}}
    \label{fig:delivery_clock}
    \vspace{-2.5mm}
\end{figure}

All incoming trades are market with the delivery clock at the trade submission time. The ordering buffer uses this delivery clock time to order trades. Formally, the ordering in DBO is given by,  

\begin{align}
    O(i,a) = DC(i, S(i,a)). 
    \label{eq:ordering_with_dc}
\end{align}

\begin{figure}[t]
\centering
    \includegraphics[trim={0 0 0 2mm},clip,width=0.9\columnwidth]{hotnets-images/time series visualization (3).pdf}
    \vspace{-3mm}
    \caption{\small{{\bf DBO can help correct for late delivery of data.} Delivery of market data to MP$_i$ is lagging behind MP$_j$. There are two trades $(i,a)$ and $(j,b)$ generated in response to the same market data $x$. $(j,l)$ was submitted before $(i,k)$ but
    %, i.e., $S_j(l) < A_i(k)$. 
    response time of $(i,k)$ is less than $(j,l)$.
    %, i.e., $rt_i(k) < rt_j(l)$. 
    With DBO, $O(i,a) (= \langle x, RT(i,a)\rangle) < O(j,b) (= \langle x, RT(j,b)\rangle)$ and trade $(i,a)$ is correctly ordered ahead of $(j,b)$.} %Ordering based on the submission time leads to incorrect ordering.}
    \pg{Correct figure}}
    \label{fig:dbo_correction}
    \vspace{-4mm}
\end{figure}

When the trigger point of trade $(i,a)$ is indeed the last data point (i.e., $x = TP(i,a) = ld(i, S(i,a))$), then, DBO respects condition C2 for LRTF. Figure~\ref{fig:dbo_correction} shows an illustrative example of this.
This is because $O(i,a) = DC(i, S(i,a)) = \langle x, RT(i,a)\rangle$. For, a competing trade $(j,b)$ with higher response time, the delivery clock at time of submission will either read $O(j,b) = DC(j, S(j,b)) = \langle x, RT(j,b)\rangle$ (if D(j,x+1)>S(j,b)) or $DC(j, S(j,b) = \langle y, S(j,b)-D(j,y)\rangle$ with $y>x$. In both cases, $O(i,a) < O(j,b)$.

\noindent
\t
At a high level, in our ordering we are correcting for latency differences in data delivery by using the delivery time of the last data point. When the last data point is not the trigger point for trade $(i,a)$, DBO satisfies the LRTF condition C2, if the following condition holds, 
\begin{align}
    D(i,ld(i,t))-D(i,x) = D(j,ld(i,t))-D(j,x),
    \label{eq:cond_delivery_lrtf}
\end{align}
where $x = TP(i,a)$.  
While it is impossible to ensure that inter-delivery times remain the same for all participants for all points, by pacing data at the RB it is indeed possible to ensure that the above condition is always met. 
The main reason why we can do so is thaat condition C2 limits that the trigger point $x$ cannot be any arbitrary data point in the past ($S(i,a)-D(i,x) < \delta$).
%and we only need to ensure same inter-delivery times for. 
In the next subsection, we will show how we can achieve this and solve challenge 2. \pg{Is this easy to follow?}

\pg{Should we include results on necessary conditions on delivery times for achieving LRTF}

\noindent
\textit{Remark:} In our cloud experiments, we find that DBO achieves fairness with very high probability. This is because network latency (from CES to any given participant) exhibits temporal correlation in latency especially over  short periods of time. When temporal correlation is high, inter-delivery time at any participant is close to the inter-generation time at the CES. In such cases, condition given by Equation~\ref{eq:cond_delivery_lrtf} is satisfied with high probability.

\begin{figure}[t]
\centering
    \includegraphics[width=0.8\columnwidth]{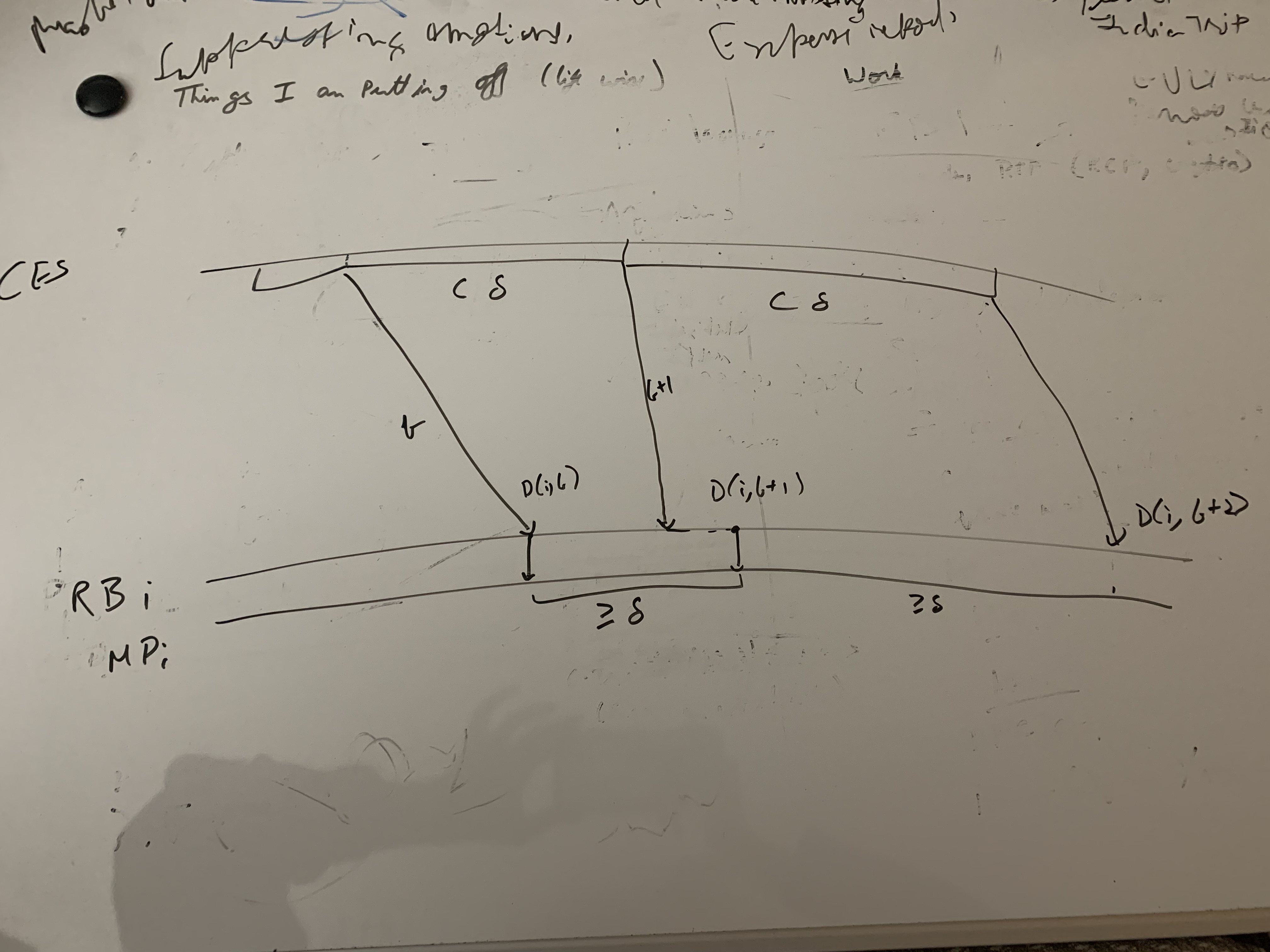}
    \vspace{-2.5mm}
    \caption{\small{\bf Batching and Pacing.} \pg{Redraw}}% \pg{Eashan see Ranveer's comment}}% \pg{Eashan can you redraw this figure in powerpoint or something.}}}
    \label{fig:batching_pacing}
    \vspace{-2.5mm}
\end{figure}

\subsubsection{Batching and Pacing\\}
In DBO, the CES breaks data into batches. Each new batch contains all data points in the duration $(1+\kappa) \cdot \delta$ after the previous batch. Here $\kappa > 0$. Each release buffer delivers all data points in a batch at the same time. %Two points $x,y$ belonging to the same batch are delivered simultaneously to each participant, i.e., $D(j,y)=D(j,x), \forall j$.
The release buffer delivers batches as quickly as possible while ensuring that the time between delivery of two consecutive batches is atleast $\delta$. Figure~\ref{fig:batching_pacing} shows an illustration of batching. Both batching and pacing increase the delivery time of data points. In the next subsection we will analyze the impact of the two on latency. Note that since $\kappa > 0$ batch generation rate is slower than batch drain rate and build up queue because of pacing will eventually get drained.

With batching and pacing, DBO achieves LRTF. In particular, 
consider a trade $(i,a)$ with response time less than $\delta$. Because of pacing, batches are separated by $\delta$. This means that the trigger point ($x=TP(i,a)$) must be within the last received batch. The point $ld(i,S(i,a))$ is also the last point in this batch and $D(i,ld(i,S(i,a)) = D(i,x)$. $O(i,a) = DC(i,S(i,a)) = <ld(i,S(i,a)), RT(i,a)>$.
With batching, for participant $j$, $x$ and $ld(i,S(i,a))$ also belong to the same batch $D(j,ld(i,S(i,a)) = D(j,x)$.
For, a competing trade $(j,b)$ with higher response time, the delivery clock at the time of submission will either read $O(j,b) = DC(j, S(j,b)) = \langle ld(i,S(i,a)), RT(j,b)\rangle$ (if $(j,b)$ was submitted before the next batch, i.e., $D(j,ld(i,S(i,a))+1) > S(j,b)$,) or $DC(j, S(j,b) = \langle y, S(j,b)-D(j,y)\rangle$ with $y>ld(i,S(i,a))$. In both cases, $O(i,a) < O(j,b)$.

\fi

\if 0
\subsection{Compute Model of the HFT Trader and Definition of Fairness}

\begin{enumerate}
    \item $MD_R(i, x):$ Receive time of market data at the gateway/RBi
    \item $TO_G(i, a):$ Generation time of trade order a by trader i
    \item $TP(i,a):$ Trigger/stimuli for trade (i,a)
    \item $RT(i,a):$ Response time of for trade (i,a) 
\end{enumerate}

\textbf{Compute Model:}
Time of generation of trade= time participant received the market point that triggered the trade + response time (or time it took to generate the trade)
\begin{equation}
    TO_G(i,a) = MD_R(i,TP(i,a)) + RT(i,a)
\end{equation}

\textbf{Perceived Fairness with respect to participant i}
If all other participants received the market data at the same time as i, then how should the trades be ordered
\begin{align*}
    \text{Trade (i,a) should be ordered ahead if}\\
    TO_G(i,a) &< MD_R(i,y) + RT(j,b)\\
    TO_G(i,a) - MD_R(i,y) &< TO_G(j,b) - MD_R(j,y)
\end{align*}
This definition states for two orders trades we need to measure time relative to event y

alternatively what if i goes into j's time domain
\begin{align*}
    &\text{Trade (i,a) should be ordered ahead iff O(i,a)<O(j,b)}\\
    MD_R(j,x) + RT(i,a) &< TO_G(j,b)\\
    TO_G(i,a)-MD_R(i,x) &< TO_G(j,b) - MD_R(j,x)
\end{align*}

Correction, relative ordering

\textbf{Achieving fairness}
There are two challenges,
\begin{outline}
    \1 How do you decide how to order these trades when TP y is unknown. \pg{Three options 1) Delivery Clocks 2) Equal RTT 3) Directly to limited fairness} \pg{Time domain: two options a) I's domain b) zero latency time doman. Fairness for trades using different data points.}
        \2 Don't know which x, recency \pg{equivalence between equal inter-delivery and correcting one way latency}
        \2 Clocks are not synced
        \2 Monotonic ordering with time
    \1 How do you enforce the ordering process. In particular, trades may take an arbitrary amount of time to reach the OB.
\end{outline}

What is the lowest RTT possible with this system?\\
Say you knew the trigger points x,y what then, \\
Say you didn't know the trigger points\\
Enforcing the ordering: key insight Enforcing an ordering at a single point is easier than controlling things at multiple RBs\\
What about trades with response time greater than delta\\

Question: Fairness wrt to external data stream

\textbf{Practical Considerations}

\begin{enumerate}
    \item Collusion attacks: Ensure that any market data point is delivered only after all participants have received it
    \item external participants: Have all participants submit trade via a dummy MP machine (we dont support fairness for such particpants)
    \item External data streams:
    \item Stragglers: 
\end{enumerate}

\textit{Correction by latency pitch}
\begin{align*}
    TO_G(i,a) - MD_R(i,y) &< TO_G(j,b) - MD_R(j,y)\\
    TO_G(i,a) - (G(y) - MD_R(i,y))) &< TO_G(j,b) +(G(y)- MD_R(j,y))
\end{align*}

\pg{Alternatively fairness in the same or equal or zero latency time domain?}
\begin{align*}
    &\text{Trade (i,a) should be ordered ahead iff O(i,a)<O(j,b)}\\
    G(x) + RT(i,a) &< G(y) + RT(j,b))\\
    TO_G(i,a) + (G(x)-MD_R(i,x)) &< TO_G(j,b) + (G(y) - MD_R(j,y))
\end{align*}

\textbf{Final Pitch Attempt}
\begin{enumerate}
    \item Introduce generalized compute model
    \item Talk about zero latency model for fairness. Three problems clocksync, which x to use, how to enforce ordering. \pg{Introduce C1 from strong fairness here?}
    \item clocksync: We are interested in competing trades that are generated using the same data point \pg{is clocksync really necessary to force this}
    \item which x to use: the last x since trades are fast. What about latency for trades with response time greater than delta
    \item how to enforce ordering: monotonic ordering process \pg{unclear if monotonic is time property is even needed (if )} 
    \item part of above? No fooling: C1 property of strong fairness
    \item \pg{Limitations: Our solution doesn't work with this model for trades generated using different data points. What about approx fairness? This is kind of nice because it talks about latency/}
\end{enumerate}
\fi
% Title needs to change.
\section{Cloud Architecture and Implementation}
\label{s:cloud_arch_impl}

In a typical on-premise deployment, the CES servers and physical network are part of the trusted infrastructure of the exchange: the exchange operators have exclusive access to the physical machines, network elements and cables. On the other, the MPs own the physical servers that connect to the exchange network. Migrating such components to the public cloud is slightly more complicated: while the CES servers and MPs could be accommodated by virtual machines owned by the different parties, but the network infrastructure is still owned by the cloud provider. Compared to on-premise deployments, \sys requires leveraging two extra components for correctness: the Release Buffer (RB) and the Ordering Buffer (OB). 
%\pg{baremetal or VM?}

\subsection{Release Buffer}
\label{ss:release_buffer}

%The Release Buffer (RB) is a critical component for achieving fairness in a cloud-hosted deployment. The RB transparently interposes the communication between the Market Participant (MP) and the Ordering Buffer (OB). It is responsible for releasing the market data to a specific Market Participant in a well-defined manner, and at the same time tagging each new trade with a Delivery Clock-derived timestamp, before sending it to the OB so that appropriate ordering can be achieved later on. 

Figure~\ref{fig:rb_archtiecture} depicts a high-level view of the RB's functionality. The RB transparently interposes the communication between the Market Participant (MP) and the Ordering Buffer (OB). As mentioned previously, the RB maintains the Delivery Clock (DC),  the logical clock tuple consisting of id of the latest data point transmitted to the MP and the time elapsed since the last transmission. Market data is grouped into logical batches %(of configurable size) 
by the CES and sent to each MP. The RB buffers the received market data (packets) that belong to the same batch, until the full batch is received. %\pg{can we say the batch is sent as a jumbo packet or something by the CES}
Upon the reception of the last market data (packet) of the batch, the RB checks whether the time elapsed since the previous market data batch delivery to the MP: if it is equal to or more than $\delta$, the batch is released to the MP at once, and the DC is updated on transmission completion of each packet. Otherwise, the batch is buffered at the RB for the appropriate duration to ensure that inter-batch gap is equal to or more than $\delta$.

 Each MP implements its own strategy on how to respond to each market data received, and generates trades. All the trades from an MP are intercepted by the correspondent RB: upon the reception of a trade, the RB needs to tag the trade accordingly with a DC-derived timestamp so that full ordering can be achieved at the Ordering Buffer. This timestamp is piggybacked on each trade and is calculated simply as the tuple consisting of the current DC id and the real time elapsed between trade reception and latest market delivery.
%\pg{Ilias I think this is repeated a lot of stuff, we can trim quite a bit till here. What do you think? Although I like your description much easier to follow}

\begin{figure}[t]
\centering
    \includegraphics[width=0.85\columnwidth]{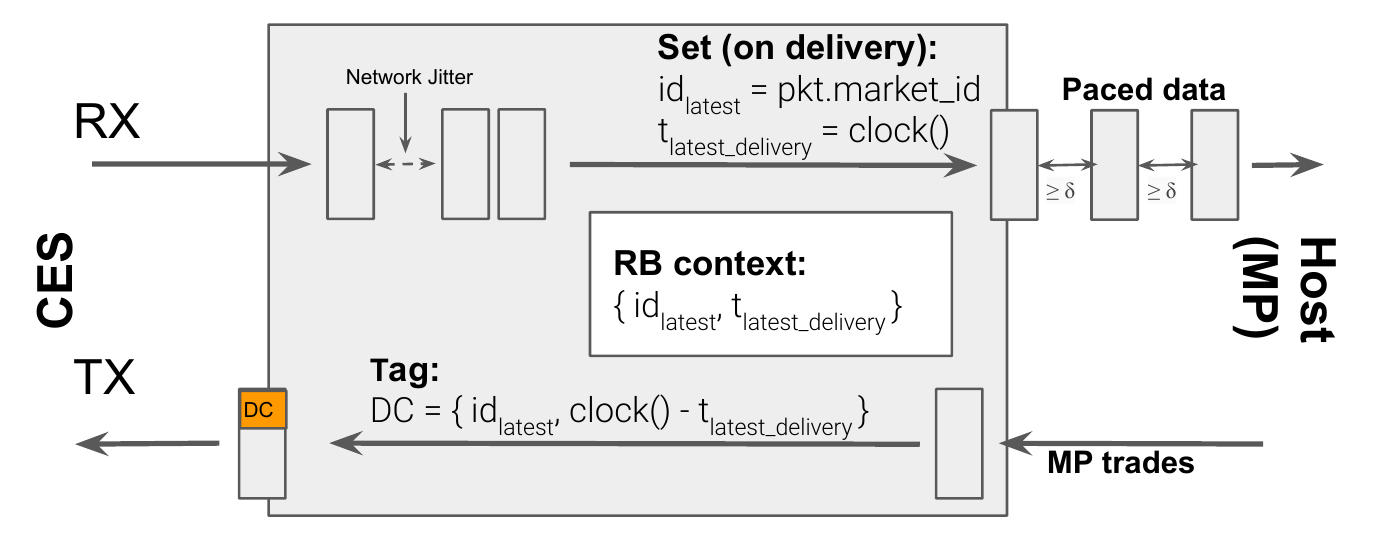}
    \caption{\small{\bf High-level architecture of the Release Buffer. The Delivery Clock advances upon new market data reception from the CES. Incoming trades from the MP are tagged with the Delivery Clock id and MP's response time before sent to the OB/CES.}}
    \label{fig:rb_archtiecture}
    \vspace{-5mm}
\end{figure}

Where should the RB placed in a cloud-hosted Financial Exchange deployment? There are two essential requirements for the RB component: a. the latency between MP and RB must be minimal so that it does not affect correctness, and b. for security reasons, the RB must be isolated from the MP, to avoid attacks that aim to tamper with response time measurements or market data delivery. Deploying the RB as a standalone VM is not a solution, as that would introduce non-negligible, variable latency between MP and RB. Even for VMs that are collocated into the same physical node, inter-VM communication is still achieved via network communication so that cloud providers can enforce the appropriate SDN policy. A switch-based implementation would also suffer from similar limitations: a. there is lack of fine-grained control for VM placement in cloud (so we cannot have any guarantees about switch-VM latencies), b. switch resources are scarce and shared by multitenant traffic  in the cloud, and avoiding interference would be a challenging problem to solve. 

%The cloud computing model enforces clean separation for 

Top-tier cloud providers deploy (custom) programmable NICs that leverage a variety of ASIC- or FPGA-based accelerators and powerful SoCs to enforce strict SDN policies required for I/O resource management, network virtualization, billing etc. These platforms serve as a natural boundary between the guest VMs that are controlled by the customers and the datacenter network which is shared resource managed by the cloud operator. We believe that the RB's functionality should be embedded in the cloud providers' smartNICs. RB support in the programmable NIC could be incrementally deployed in the existing infrastructure, and exposed to customers as a virtual NIC feature similar to accelerated networking \cite{firestone2018azure, efa, ena}. NIC performance isolation and background interference challenges are beyond the scope of this paper: MPs already invest large amounts of money for their co-located server hardware -- using high-end instances that provide single-tenancy guarantees per cloud node (cite dedicated instances) would eliminate interference stemming from on-host multi-tenancy. 

Since we do not have access to cloud providers' smartNICs, we used an off-the-shelf programmable (DPU) NIC~\cite{bf2} to demonstrate the feasibility of a NIC-based RB implementation. We implemented the RB functionality on top of DPDK~\cite{www-dpdk}, running it on the System-on-Chip ARM cores. A busy-polling receive engine intercepts all incoming market data traffic and releases them to the host while enforcing the pacing requirements. The RB functionality is completely transparent for the MP: market data  packets appear at the host's RX ring unmodified.

\subsection{Ordering Buffer}
\label{ss:ob}

The Ordering Buffer component's functionality closely resembles that of a `sequencer' which tags incoming trades in a First-Come-First-Served (FCFS) manner in existing on-premise deployments. In our system, it is responsible for ordering all received trades based on their Delivery Clock timestamp, before they are submitted to the Matching Engine (ME). Similarly to the `sequencer', the OB component is part of the trusted CES platform. In our prototype system, we have implemented the Ordering Buffer as a dedicated thread which buffers incoming trades in a priority queue (for ordering). When the OB has received all heartbeats up until a particular DC-derived timestamp it dequeues all the relevant trades to the Matching Engine over  shared-memory channels. 

%It is safe to dequeue buffered trades from the OB's priority queue when we are certain that all trades from all RBs have arrived up until a particular logical timestamp. How can we ensure such property? A naive way to achieve this is by waiting for trades  tagged with a higher market id to arrive from all RBs. That, in combination with in-order packet delivery guarantees from the network (single-path flow) allows to safely dequeue from the OB's priority queue. This would work well if there was no packet loss and all MPs were generating trades for each market data received, but this is not required. To mitigate this issue, we rely on periodic heartbeats from the RB to the CES. Several major financial exchanges already rely on heartbeats~\cite{nyse-client} for liveness when traffic is low. \pg{this repeats  a bit, cut everything except the last line?}

\noindent\textbf{Scaling:} With higher numbers of MPs, a single OB instance would become the bottleneck (in aggregate, number of heartbeats scale linearly with participants). In such cases, scaling the OB is straightforward by leveraging sharding: multiple OB components could be deployed either as different threads on multicore CPUs or even as standalone VMs. Each OB needs to be responsible for a subset of the RBs. The OB instances can dequeue a batch of pending trades when safe and send them to ME-colocated OB for the final merge before they are forwarded to the matching engine. A distributed OB deployment would also allow handling the higher rates of heartbeats in the case of numerous MPs, as each OB can effectively filter out all incoming heartbeats before reaching the CES. Each distributed OB instance needs to maintain the minimum of current Delivery Clocks from its associated RBs, while the master OB needs to maintain the minimum DC from all the distributed OBs to be able to dequeue trades safely to the matching engine. Since contemporary cloud datacenter networks do not support in-network multicast for market data transmission, such distributed approach would also allow scaling the CES' market data distribution engine to higher rates.

%\pg{Scaling put a title here, explain the problem, with larger number of participants the number of heartbeats incease, the heartbeat handler can be sharded. Each sharded handler maintains a minimum of delivery clock from its RBs. The master heartbeat handler takes a minmum over all such handlers to decide and deque. Another thing you might want to say is that unlike RB, OB need not be colocated. Although colocation is latency efficient.}

\begin{figure}[t]
\centering
    \includegraphics[width=0.8\columnwidth]{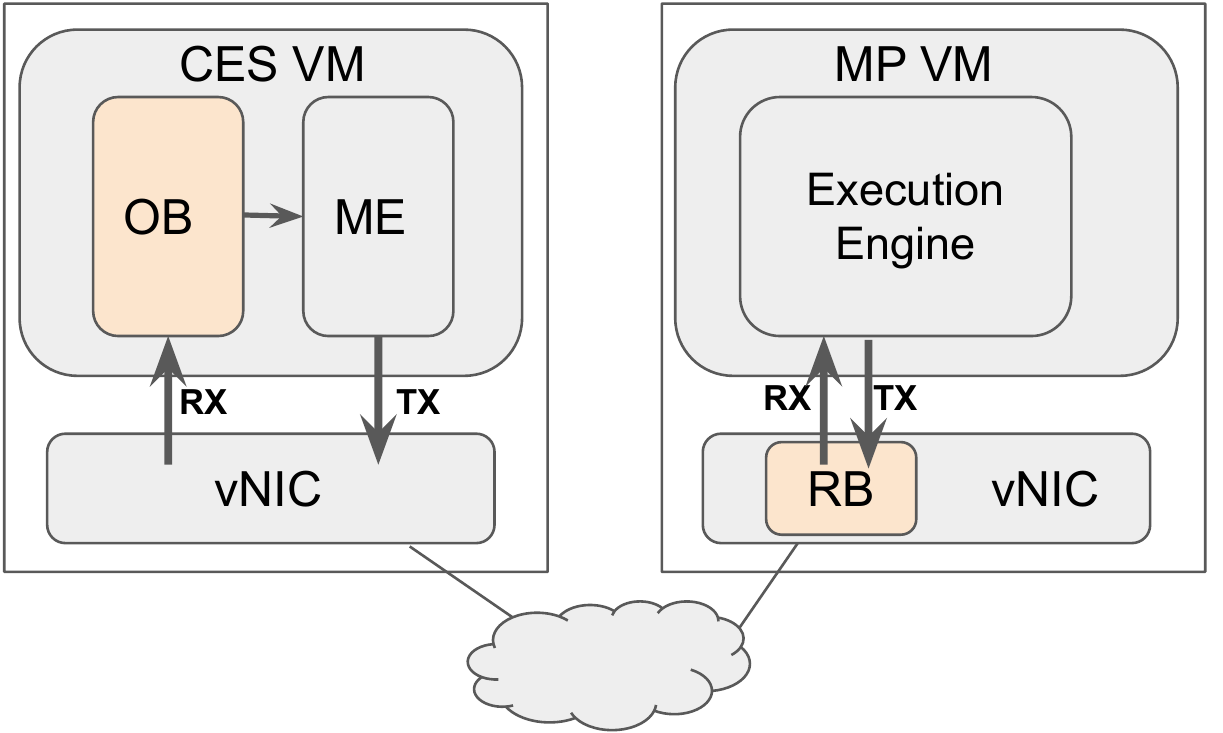}
    \caption{\small{\bf Cloud-hosted exchanges' architectural view.}}
    \label{fig:dbo_full_arch}
    \vspace{-5mm}
\end{figure}

\section{Evaluation}
\label{s:eval}

We evaluate the feasibility of our solution in hardware using our own hardware test bed. We use public-cloud experiments to get an understanding of overall DBO's performance in terms of latency and fairness if deployed.

%We also want to better understand the tradeoffs introduced by each mechanism proposed in our system: how does DBO alone improve direct delivery, how do heartbeats affect end-to-end latency, could a `straggler' RB  affect the system's latency etc.
%\pg{I think we don't do heartbeats. Maybe add we are most interested in understanding how to performance an latency numbers look like in public cloud. }

\subsection{Methodology}
\label{ss:eval_methodology}

For all of the experiments (except simulation) presented in this section we leverage our prototype CES and MP implementations. On the CES side, we generate and distribute data to all Market Participants at fixed intervals. The market data arrive at the RBs, which later on release them to the  Market Participants. The MP implementation relies on busy-polling and kernel-bypass for low-latency access to the incoming market data packets, but does not utilize a sophisticated algorithm for trading decisions; it rather busy-waits for a pre-configured response time duration before generating a trade. We set each MP's  reaction time accordingly so that we can derive the expected final ordering at the OB and evaluate fairness. 

\noindent\textbf{Fairness metric:} For any number of MPs, perfect fairness is achieved when all competing trades among all unique pairs of participants are fully ordered (from faster to slower). We define the metric of fairness as the ratio of the number of competing trade sets  that were ordered correctly to the total number of competing trade sets for all unique pairs of market participants.

%\pg{Note: I use forwarding time instead of execution time}.

\noindent\textbf{End-to-end latency} We define end-to-end latency of a trade using Equation~\ref{eq:latency_def} ($F(i,a)-G(x)-RT(i,a)$). Generation time and forwarding time are measured at the CES. For the purpose of reporting latency  and fairness (and \emph{not} for ordering trades in DBO), we assume that the trigger point is known. We use it to calculate the response time of trades at the release buffer.

%We define end-to-end latency of any trade, as the duration measured from the market data id transmission (at the CES) until a trade tagged with the same id is executed by the matching engine. This measurement is taken on the CES. 

 We  evaluate our solution on three different setups: \begin{enumerate*}[label=(\alph*)]\item on-premise, bare-metal testbed deployment, \item public-cloud-based deployment, and \item simulation \end{enumerate*}.

Evaluation schemes: We evaluate three schemes. \begin{enumerate*}\item Direct delivery: This is the baseline scheme. There is no release buffer or ordering buffer and both trades and market data points just incur the underlying network latency. 
%\item Direct w/ DC: Data is delivered directly to the participants without any batching or pacing. However, the trades are ordered using the delivery clock timestamp. This scheme is used to show the importance of ordering using delivery clocks on its own. 
\item DBO: Based on our discussion in \S\ref{ss:understanding_latency}, we use $\delta=20$, $\kappa=0.25$ and $\tau=20 \mu s$.
%We use different values for the horizon ($\delta$) and the batch size($(1+\kappa) \cdot \delta$). 
\item CloudEx: CloudEx requires fine-grained clock synchronization, which is not available in our test-bed and cloud experiments. Due to inaccuracies in clock-synchronization, in our experiments we experience frequent release buffer and ordering buffer overruns. We only report results for CloudEx in simulation where we assume perfectly synchronized clocks. %We also report the optimal latency (max network round-trip-latency Theorem~\ref{thm:latency}) for achieving perfect response time fairness
\end{enumerate*}.
%\pg{Fix comment on CloudEx not being there because of clock synschronization. Generally think about what schemes should be there, Direct+delivery clocks, DBO+clocksync, clocksync delivery + delivery clocks. Maybe include DBO+clocksync delivery + clocksync ordering. Generally think about evaluation. We need to include CloudEX in 6.2. Do we actually need so many schemes? I think all we need is DBO+Clocksync, to explain it maybe we need clock sync delivery+ delivery clock. In geral DBO+clocksync vs CloudEx comparison is complicated. 1) CloudEx gets affected by reverse path delays. Pick ordering buffer from DBO. 2) for forward path clocksync delivery + delivery clocks improve things further. 3) using batching further improve stuff }

\noindent
\textbf{Response Time:} 
%Our definition of fairness is sufficient for all trades that were generated as a response to some market data within the limited-time horizon ($\delta = 20$), but it could be inaccurate for `slow'  or `non-reactive` trades as they will likely be tagged with a later DBO id.
%$experiments. 
The response time for each trade is a random number between 5 and 20 $\mu$ s and is within the horizon ($\delta$). 
Note that our solution does not ensure fairness for speed races where the response time (of the faster participant) is greater than the horizon. We picked a horizon to accommodate majority of the speed races. But we explicitly take into account this limitation. We present latency results with longer horizons and include experiments where the response time exceeds the horizon.
%,  when presenting latency results in the rest of this paper.

\subsection{Evalution on DPU-enabled baremetal servers}
\label{ss:on-premise-eval}

Our lab setup consists of three machines: one CES server and two MP servers. The CES server is equipped with an Nvidia ConnectX-5 NIC with two 100Gbps ports. Each MP server hosts one Nvidia BlueField-2 DPU with two 100Gbps ports. The server has a dual-CPU Intel Xeon processor running at 3.1 GHz. Each BlueField-2 DPU has eight ARMv8 A72 cores. All machines are connected via a 100GbE switch. We run Linux kernel (v5.4.0) and DPDK (v21.11) for the CES, RB, MP network engines.

The CES is generating market data every $40\mu s$ ($25K$ ticks per second), and the market participant servers are generating responses within $\delta$ time horizon since the reception of the data. The RB  is executing on the BlueField-2 DPU's SoC.

Table~\ref{tbl:bluefield} shows the achieved fairness and latency of our system. Direct delivery achieve poor fairness because of differences in network latency. DBO achieves perfect fairness at the cost of latency. In particular, to achieve response time fairness, the OB waits for the slowest participant. The latency is bounded by maximum network round-trip latency. This optimal latency bound (Theorem~\ref{thm:latency}) is shown as Max-RTT in the table. The difference between the optimal and DBO is becuase of batching, pacing and heartbeats.

%Recall that the minimum latency for response time fairness is bounded by the maximum network round trip latency. Batching, pacing and heartbeats further add to this latency bound. \pg{more, add max RTT}

%Delivery clocks based ordering on top of direct delivery significantly enhances fairness, while DBO achieves perfect fairness. \pg{EDIT: Since with both these schemes at the OB we need to wait for the slowest participant, the latency is higher. As explained earlier, waiting for the slowest participant is necessary for achieving fairness.}

\if 0
\pg{We should explain these results. The next statement seems inconsitent with the results?}
Even though this is a bare-metal, on-premise lab setup with no interfering traffic the tail latency is higher than expected. Our performance analysis indicates that this overhead is due to scheduling artifacts in the BlueField ARM cores, and higher SoC-to-NIC latencies.\pg{WHAT SHOULD WE SAY ABOUT SOC TO  NIC LATENCY? how are we measuing fairness, what is the response tiem} We are currently working with NVIDIA to utilize their hardware-assisted pacing and NIC timestamping features to improve performance.%\pg{Should we say for calculating fairness ratio we use time stamps taken at the RB? What are the response times?}
\fi

\begin{table}[t]
\small
    \centering
    \begin{tabular}{c|c|cccc}
    %\hline
    & \textbf{Fairness} & \multicolumn{4}{c}{\textbf{Latency $(\mu s)$}} \\
    & \textbf{(\%)} & \textbf{avg} & \textbf{p50} & \textbf{p99} & \textbf{p999} \\
    \hline
       \textbf{Direct} & 74.62 & 9.60 & 9.52 & 16.58 & 25.25 \\
       %\textbf{Direct w/ DC} &  97.01 & 23.46 & 15.09 & 41.55 & 53.80 \\
        \textbf{Max-RTT} &  - & 10.23 & 9.94 & 18.08 & 26.18 \\
       \textbf{DBO}  & 100 &  15.92 & 12.16 & 28.82 & 46.80 \\
        %\textbf{DBO (45,60)}  & 100 & 28.18 & 20.13 & 64.86 & 94.85  \\
        %\textbf{DBO (80,120)}  & 100 & 56.05 & 52.71 & 104.31 & 120.42  \\
        %\textbf{Direct w/ DC} &  95.12 & 15.43 & 12.59 & 29.10 & 36.53 \\

    %\hline
    \end{tabular}
    %\vspace{-1mm}
    \caption{\small{Fairness and trade latency results on bare metal servers with BlueField-based RB implementation.}} %\pg{Something is wrong here, how can latency of direct w/DC be higher than DBO? Can we report max of RTT as well?}}
    \label{tbl:bluefield}
    \vspace{-9mm}
\end{table}

\if 0
\begin{figure}[t]
\centering
    \includegraphics[width=0.95\columnwidth]{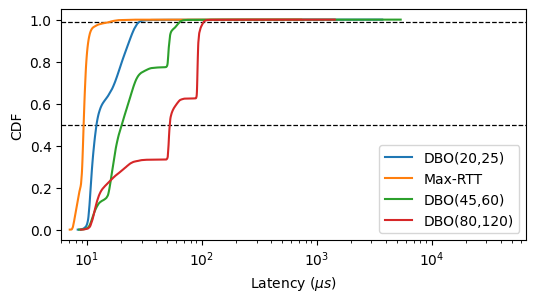}
    \caption{\small{\bf CDFs of the end-to-end trade latency for various DBO configurations on bare metal servers with BlueField-based RB implementation. \pg{we need comparison with direct delivery here. And fairness numbers}}}
\end{figure}

\fi

\subsection{Cloud-hosted Testbed}
\label{ss:cloud-eval}

We wish to understand how our system performs in a real public cloud-based deployment with several market participants. As discussed in \S \ref{ss:release_buffer}, we do not have access to the cloud providers' programmable NICs to deploy the RB functionality. To work around this limitation, we have adjusted our RB implementation so that it runs as a co-located  process with the market participant's execution engine on the MP VMs. In such configuration, the RB is using a kernel-bypass network stack to take over a dedicated vNIC which it uses to receive the UDP stream of market data from the CES, and to send back any trades submitted by the MP. To facilitate fast MP-to-RB communication we rely on shared-memory-based IPC primitives. Clearly, such a solution does not provide any security guarantees as the RBs run on VMs owned by the market participants which are not part of the CES' Trusted Computing Base, and could easily tamper with the RB's market data delivery engine or the delivery clock  measurements. It allows us, however, to evaluate the real-world performance (i.e., achievable throughput and latency) of our DBO system in a public cloud deployment.

\begin{table}[h!]
\small
    \centering
    \begin{tabular}{c|c|cccc}
    %\hline
    & \textbf{Fairness} & \multicolumn{4}{c}{\textbf{Latency $(\mu s)$}} \\
    & \textbf{(\%)} & \textbf{avg} & \textbf{p50} & \textbf{p99} & \textbf{p999} \\
    \hline
       \textbf{Direct} & 57.61 & 27.9 & 27.48 & 32.5 & 44.03 \\
        % \textbf{Max-RTT}  & - & 32.94 & 32.14 & 42.63 & 105.49  \\
        \textbf{Max-RTT}  & - & 33.34 & 32.44 & 42.01 & 48.38  \\
       % \textbf{DBO}  & 100 &  55.78 & 47.09 & 59.40 & 1366.15 \\
       \textbf{DBO}  & 100 &  47.19 & 46.95 & 55.71 & 67.41 \\
       %\textbf{Direct w/ DC} &  89.84 & 47.15 & 47.10 & 57.23 & 312.92 \\
      %  \textbf{DBO (45,60)}  & 100 & 77.21 & 52.54 & 135.27 & 573.70  \\
       % \textbf{DBO (80,120)}  & 100 & 89.32 & 86.89 & 133.23 & 459.97  \\

    %\hline
    \end{tabular}
    %\vspace{-1mm}
    \caption{\small{Fairness and end-to-end latency for different schemes; full traces collected over a 15-minute duration. For consistency, Max-RTT is calculated using the packet timestamps from the DBO experiment trace.}}
    \label{tab:fairness_latency}
    \vspace{-7mm}
\end{table}

We set out to evaluate the fairness and end-to-end latency of different schemes. We deploy ten market participants and one CES as virtual machines (Standard\_F8s) in Microsoft Azure. We configure the  aggregate service rate to $125,000$ transactions (trades) per second (market data generation interval is fixed to $40\mu s$). Table~\ref{tab:fairness_latency} summarizes the achieved fairness and end-to-end latency results for direct delivery and DBO. %\pg{The scale of our experiments is comparable to some of the major financial exchange in our conversations.}

%\pg{Fix, show relevance of DBO, latency spikes do occur and are upto a order of magnitude higher. One way to do that is to show CDF of individual RTTs for each of the cloud participants.}

%\noindent\textbf{Fairness:} Direct delivery achieves poor fairness in our experiments. When combined with DC-based ordering at the OB  fairness is improved significantly. Finally, DBO (regardless of $\delta$) always achieve perfect fairness as all MPs are configured to (randomly) generate trades within the limited horizon. We discuss fairness for slow responders in Section~\ref{ss:simul}.

\noindent\textbf{Fairness:} Direct delivery achieves poor fairness in our experiments. Compared to our test-bed where there is no network traffic and the variability in latency across participants is lower, direct delivery performs worse in the cloud.
%When combined with DC-based ordering at the OB  fairness is improved significantly. 
DBO always achieve perfect fairness.% as all MPs are configured to (randomly) generate trades within the limited horizon. 
We discuss fairness for slow responders in \ref{ss:fairness_gt_delta}.

\noindent\textbf{Latency:}  As expected, direct delivery achieves  the lowest latency, at the cost of fairness. 
%Direct delivery with Delivery Clock-based ordering, is an enhanced version of direct delivery; as mentioned previously we maintain the delivery clock at the RB and tag each trade received with it, but there is no other delay (pacing, batching etc) introduced at the MP's end. Instead,  the OB sorts all trades based on their DC-derived timestamp and forwards them to the ME when it has heard from all RBs. In practice, this means that the  end-to-end latency for all MPs is effectively \textit{inflated} to the worst CES-to-MP network RTT experienced.
On the other hand, DBO trades off latency to achieve perfect fairness, but \emph{it still achieves sub-100us p999 tail latency in the public cloud.}
This latency is well within the requirements of many major exchanges. IEX, for example, a major exchange that prides itself on fairness had 700$\mu$ s latency~\cite{iex_cost_report}. We believe that with additional optimizations such as network traffic prioritization, in-network multicast, proximity placement groups, this number could be further brought down.The p9999 latency is much higher (\textasciitilde3.5ms);  full trace analysis shows that packet drop rate is very low but we identified a well-aligned, periodic queue buildup at the OB which we believe is due to scheduling artifacts in the VM. 

%\pg{\emph{It's interesting to note that the latency with DBO is till under 100 $\mu s$ even at the tail (p99) in the public cloud!} }
%Batching and pacing further add to this optimal latency.
%We see that increasing $\delta$, increases the latency. When the network is well behaved this increase is linear. However, we find that 

%When the network is well behaved, the end-to-end latency is similar to that of direct delivery with DC-based ordering which constitutes optimal for DBO. %The increased tail latency results presented here are due to OS scheduling artifacts and network events such as possible heartbeat drops or RTT spikes. 
%\im{I am handwaiving here and i don't like it} 

\subsubsection{Understanding DBO latency:}

\begin{figure}[t]
\centering
    \includegraphics[width=0.8\columnwidth]{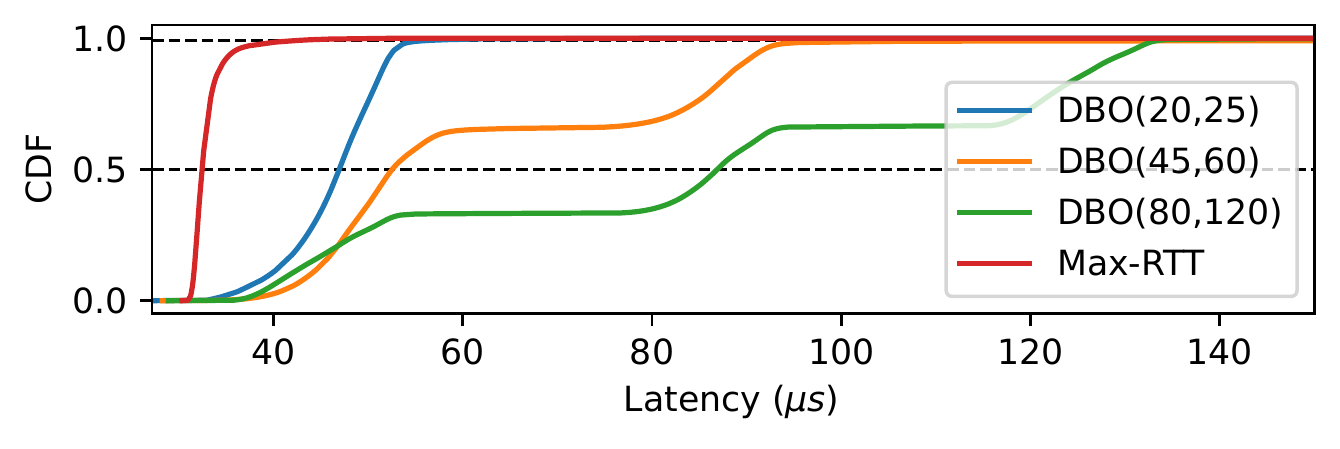}
    \vspace{-3mm}
    \caption{\small{\bf CDFs of the end-to-end trade  latency for various DBO configurations.}}% Max-RTT CDF corresponds to the highest network RTT among all participants.}}
    \vspace{-3mm}
    \label{fig:latency_cdfs}
\end{figure}

How do DBO parameters affect end-to-end trade latency? Figure~\ref{fig:latency_cdfs} illustrates the CDF of the latency with different DBO configurations. Here, DBO(x,y) refers to using a horizon $\delta = x$ and batch size $(1+\kappa) \cdot \delta =y$. We also include the latency bound. As expected increasing the horizon and the batch size increases the latency. When batch size is 60 $\mu$s we see one inflection point. For batch size of 120 $\mu$s we see two inflection points. These inflection points are a direct result of batching. Since market data generation rate is 40, for batch size of 60, roughly $2/3$ of the batches contain  two data points. The first point in such batches incurs 40$\mu s$ of additional delay compared to second point. This difference create the inflection point.  Similarly for batch size 120, on average there are three market data points, the first point in the batch incurs an additional delay of 80 $\mu s$ while the second point incurs an additional delay of 40 $\mu s$. For batch size of 20, which contains only one market data point, the batching delay is zero. The deviation from the optimal latency bound is primarily due to hearbeats. Recall when network latency is well behaved, pacing does not add additional delay. Since $\tau$ is 20, the difference on average is 10$\mu s$

%\im{I don't think we have space for that, and I don't understand why MaxRTT CDF does not match DBO(20,25)}

\if 0

\subsubsection{Impact of straggler RBs}

\begin{figure}[t]
\centering
    \includegraphics[width=0.95\columnwidth]{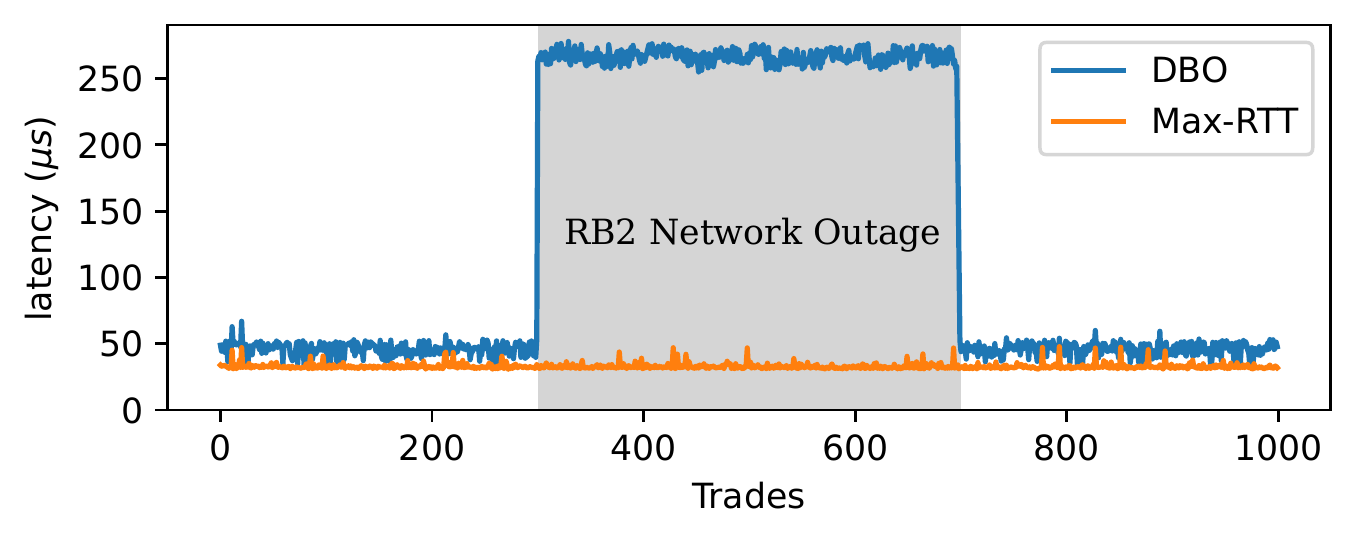}
    \caption{\small{\bf Stragglers' impact to end-to-end trade  latency for all market participants. Max-RTT is the maximum RTT across all alive MPs.}}
    \label{fig:straggler_mitigation}
\end{figure}

At this point, it is important to differentiate a straggler RB from slow MP responders: the RB is part of the Trusted Computing Base and hence correct operation is critical for good end-to-end system performance. Still, failures and non-determinism stemming from the network need to be taken into account in our system design. 

A straggler RB (i.e., experiencing high latencies or packet drops) can drastically affect the end-to-end system latency and throughput. This happens because the OB needs to hear from all RBs before dequeuing trades to the matching engine, so that fairness is guaranteed. In Figure~\ref{fig:straggler_mitigation} we evaluate how latency is affected with a straggler; we emulate this by configuring one of the RBs to drop all packets (including outgoing hearbeats) as would happen in the case of a transient network outage. Such events could potentially stall to the market indefinitely; to mitigate this issue the OB maintains a timeout (in this case set to 250us) and on expiration releases trades for the current DBO id to the ME. In such cases, fairness guarantees do not hold any longer for the straggler.

A production deployment should leverage a mechanism to detect and flag stragglers quickly; upon detection, straggler RBs should be removed from the OB's shard so that the market does not unnecessarily  slow down for prolonged periods.

\fi

\subsubsection{Latency overhead with multiple participants\\}

DBO \textit{post} corrects for latency variance among participants at the OB. As the number of market participants increases, network RTT variance and asynchrony might be pronounced and can cause higher delays at the Ordering Buffer. In Figure~\ref{fig:ob_delay} we plot the time spent at the OB as a function of the number of MPs. Note, that the RB components remain unaffected as the number of participants gets higher. %\pg{A comment on plateauing?}

\begin{figure}[t]
\centering
    \includegraphics[width=0.8\columnwidth]{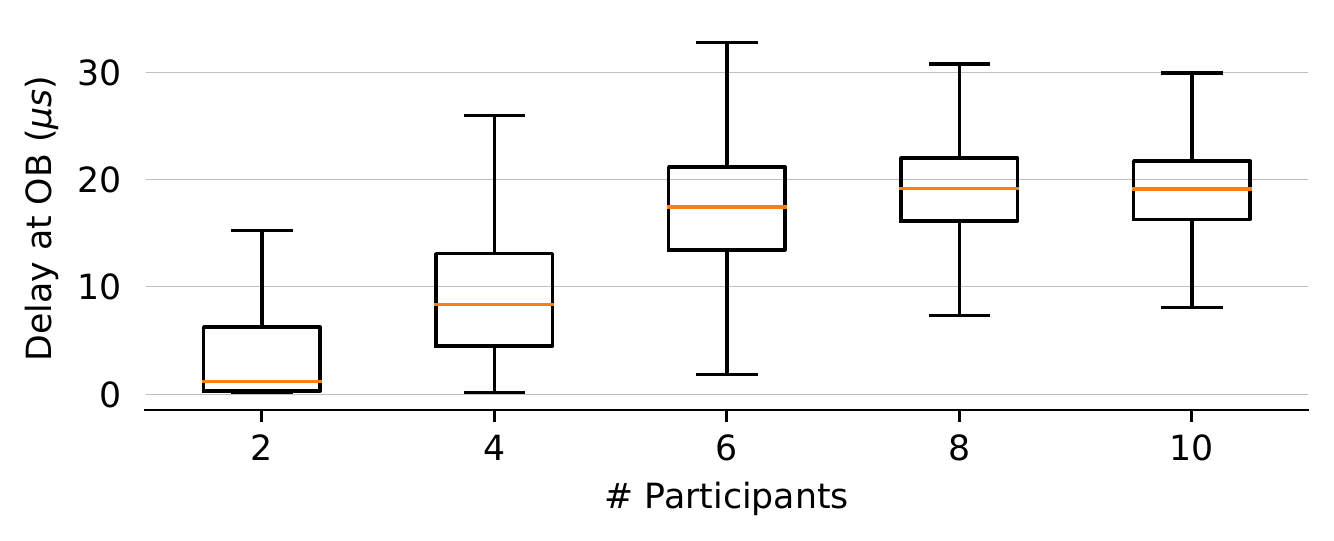}
    \vspace{-3mm}
    \caption{\small{\bf Trades' buffering delay at the Ordering Buffer.}}
    
    \label{fig:ob_delay}
    \vspace{-3mm}
\end{figure}

\subsubsection{Trades with response time > $\delta$\\}
\label{ss:fairness_gt_delta}
DBO only guarantees fairness for trades with a limited response time. Table~\ref{tab:fairness_rt} shows the fairness for such trades for different values of response time. In each experiment, the response time for the trade is derived from a range of values (shown on the top of the table). Direct delivery achieves poor fairness (similar to Table~\ref{tab:fairness_latency}). In contrast, even though the response time of trades exceeds the horizon $\delta$, DBO achieves close to ideal fairness. DBO orders such trades fairly, if the inter-delivery time for the batch that triggered the trade and the last batch corresponding to the trade is same across all participants. In the cloud experiments, even though latency differs across participants, for any particular participant the latency exhibits little variation (except a few latency spikes). As a result, the inter-delivery time for batches is similar (= $(1+\kappa) \cdot \delta$) across all participants for most of the time. DBO is thus able to correct for static differences in latency across participants and achieve fairness.% for trades that exceed the horizon.

\begin{table}[h!]
\small
    \centering
    \begin{tabular}{c|cccccc}
    %\hline
    \textbf{RT (in $\mu s$)}  & 10-15 & 15-20 & 20-25 & 25-30 & 30-35 & 35-40\\
    \hline
       \textbf{Direct}  & 0.45 & 0.46 & 0.46 & 0.46 & 0.46 & 0.46\\

       \textbf{DBO} & 1.0 & 1.0 & 0.999 & 0.999 & 0.997 & 0.985 \\
    %\hline
    \end{tabular}
    %\vspace{-1mm}
    \caption{\small{Fairness for trades with response time$> \delta = 20$.}}
    \label{tab:fairness_rt}
    \vspace{-8mm}
\end{table}

\begin{figure}[t]
  \centering
  \begin{subfigure}[b]{0.48\linewidth}
    \includegraphics[trim={0 0 0 1mm},clip,width=\linewidth]{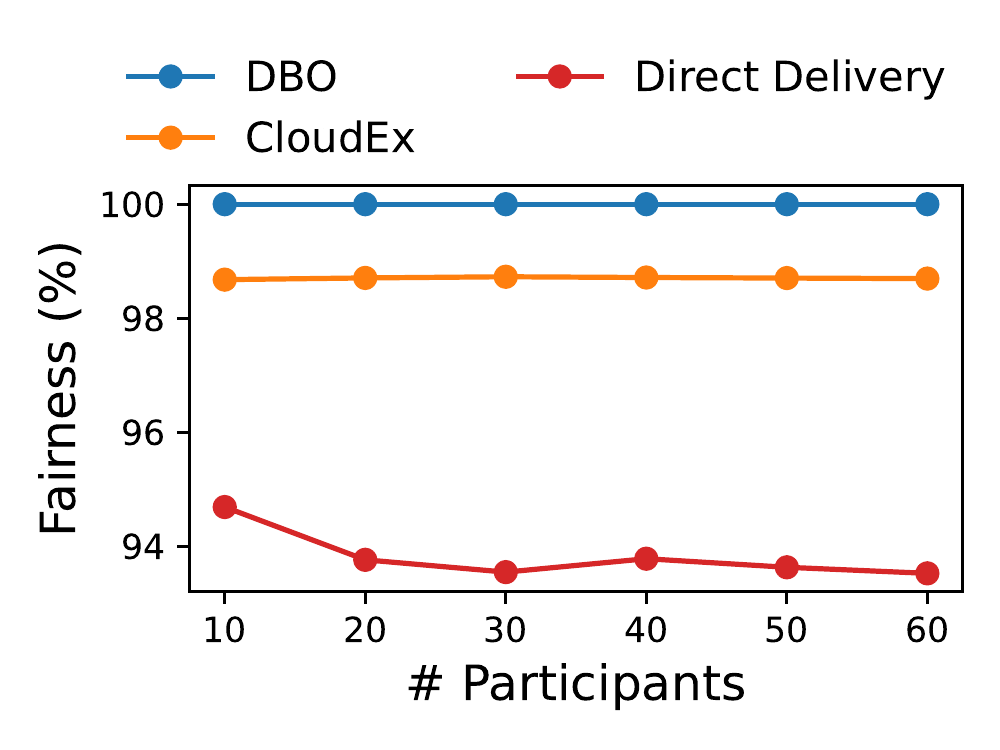}
    \vspace{-5.5mm}
    \caption{\small{Fairness}}% ($C_{S1}=1.14$, $\delta=14\mu s$)}%Strategy S1 with $C_{S1}=1.14$, $\delta=14\mu s$}
    \label{subfig:fairness}
  \end{subfigure}
  \begin{subfigure}[b]{0.48\linewidth}
    \includegraphics[trim={0 0 0 1mm},clip,width=\linewidth]{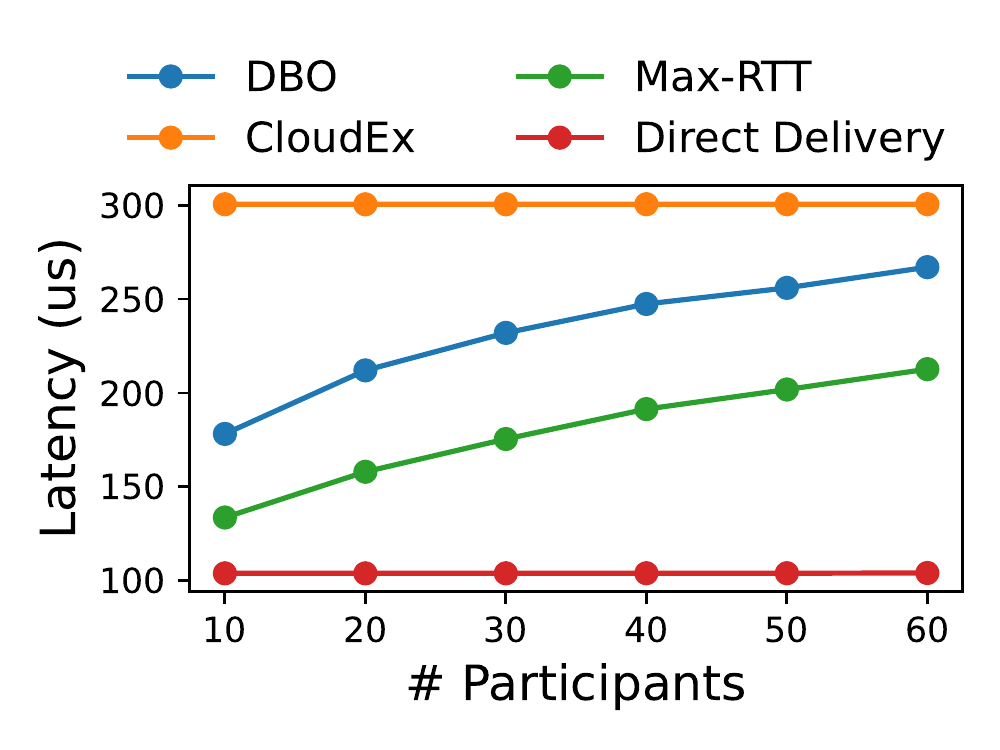}
    \vspace{-5.5mm}
    \caption{\small{Mean Latency}}% ($C_{S2}=1.1$)}%Strategy S2 with $C_{S2}=1.2$}
    \label{subfig:mean_latency}
  \end{subfigure}
  \vspace{-4mm}
  \caption{\small{\textbf{Comparison with CloudEx.}}}% with number of participants.}.}}% Except CloudEx, all other schemes used DBO.}% \pg{I wonder if we should the extend the x-axis to 64 or 128 $\mu s$. The point would be too show that all schemes can achieve reasonable performance when response times are high, it is hard when they are low.}}
  \vspace{-5mm}
  \label{fig:sim_participant}
\end{figure}

\subsection{Simulation}
\label{ss:simul}

We use simulation to compare DBO and CloudEx with perfect clock synchronization. We generated network traces to simulate latency between from the CES to RBs and from RBs to the OB. The minimum latency is $50 \mu s$ with random spikes in latency of up to 400 $\mu s$. Figure~\ref{fig:sim_trace} (in the Appendix) illustrates one such trace. For CloudEx we use 150 $\mu s$ threshold for delivery of market data and trade ordering. Note that the exact fairness and latency numbers are depend on the nature of network latency and the numbers here are illustrative.
%We explicitly focus on scenarios which are challenging for DBO.

\noindent
\textbf{Fairness with number of participants}
Compared to DBO, the latency for a trade in CloudEx is governed solely the round-trip latency of the participant. In contrast, DBO's latency is fundamentally bounded by maximum latency across participants. %However, DBO achieves guaranteed response time fairness for trades generated within the horizon.

Figure~\ref{fig:sim_participant} shows the fairness and the average and latency for the two schemes as we scale the number of participants. The response times are between 5 to 20 $\mu s$. We also include direct delivery and max-RTT (latency bound) in the figure. CloudEx outperforms direct delivery however it incurs unfairness when latency spikes beyond the delivery threshold. As expected the fairness numbers for all the schemes are unaffected by the number of MPs. CloudEx incurs an average latency of 300 $\mu s$ of latency (sum of thresholds on the forward and the reverse path). The actual number is slightly higher due to latency spikes. The average latency does not scale with participants. In contrast, both max-RTT and DBO latency scale with participants. We find DBO achieves better latency on average than CloudEx. This is because when network latency is well behaved and close to the minimum (100 $\mu s$ round trip), the max-RTT is low and consequently the DBO latency is low. In contrast, in such scenarios CloudEx still incurs a minimum latency of 300 $\mu s$ in such cases.  However, in this experiment we find that tail latency of DBO and the latency bound (max-RTT) can both exceed the CloudEx latency (see Figure~\ref{fig:sim_tail_delta} in the Appendix). 
As explained earlier this is a fundamental cost to achieve fairness. 

%\pg{Include trades > delta for clocksync. Double check simulation with latency on the reverse path.}

\if 0

\begin{table*}[h!]
\small
    \centering
    \begin{tabular}{c|ccccccc}
    %\hline
    \textbf{Response Times $RT(i, a) (\mu s)$} & \textbf{$5<RT<10$}\\
    \textbf{Response Times $RT(i, a) (\mu s)$} & \textbf{$5<RT<10$} & \textbf{$10<RT<15$} & \textbf{$15<RT<20$} & \textbf{$20<RT<25$} & \textbf{$25<RT<30$} & \textbf{$30<RT<35$} & \textbf{$35<RT<40$} \\\hline
    \textbf{Direct} & \\
    \textbf{DBO $(\delta=20\mu s)$} & 100 & 100 & 100 & 99.98 & 99.92 & 99.78 & 98.52\\

    %\hline
    \end{tabular}
    %\vspace{-1mm}
    \caption{\small{Fairness for trades with response times.}}
    \label{tab:exp_rt}
    \vspace{-7mm}
\end{table*}

\fi 
% \begin{figure}[t]
% \centering
%     \includegraphics[width=0.8\columnwidth]{}
%     \label{fig:exp_rt}
%     \vspace{-2mm}
%     \caption{\small{\bf Fairness for trades with response times $\delta$}}
%     \vspace{-2mm}
% \end{figure}

\if 0
\noindent
\textbf{Beyond LRTF: Fairness for trades with response time greater than $\delta$:} To evaluate this scenario, we now repeat the above experiment with 30 participants. Figure~\ref{fig:sim_delta} shows the response time for DBO and CloudEx for different values of response time. The reponse times for each trade is derived from the buckets shown on the x-axis. Fairness with CloudEx remains unaffected with response time. However DBO's fairness drops as the response time exceed the horizon value ($\delta = 20 \mu$ s). Recall that, DBO achieves fairness for such trades if the inter-delivery gap across batches is same. More precsily if the inter-delivery time between the batch that trigger the trade and the last batch that was delivered is same across other participants then the trade is ordered fairly. As response time increases, the separation between the last batch and the batch delivered batch increases. The farther these two batches become the more unlikely it is that their inter-delivery times across participants will be similar. 

In case, we clocks across RB's are synchronized we can use a hybrid of CloudEx's and the pacing in DBO to deliver batches simultaneously across participants (when latency is below the delivery threshold) while also ensuring that the inter-delivery time for consecutive batches is greater than $\delta$. We also evaluate such a hybrid, DBO+ClockSync in ~\ref{fig:sim_delta}. Notice that DBO+ClockSync achieve perfect fairness when response times are less than $\delta$. For other trades, the fairness numbers are close to that of CloudEx. The fairness numbers are slightly worse than CloudEx because pacing elongates the duration of a latency spike (Figure~\ref{fig:latency_b+p}). This difference in Fairness due to pacing is the cost of guaranteeing fairness (LRTF).

\fi

\if 0

\begin{figure*}[t]
  \centering
  \begin{subfigure}[b]{0.3\linewidth}
    \includegraphics[trim={0 0 0 1mm},clip,width=\linewidth]{images/sim_part_fairness3.pdf}
    \vspace{-5.5mm}
    \caption{\small{Fairness}}% ($C_{S1}=1.14$, $\delta=14\mu s$)}%Strategy S1 with $C_{S1}=1.14$, $\delta=14\mu s$}
    \label{subfig:fairness}
  \end{subfigure}
  \begin{subfigure}[b]{0.3\linewidth}
    \includegraphics[trim={0 0 0 1mm},clip,width=\linewidth]{images/sim_part_avg_latency2.pdf}
    \vspace{-5.5mm}
    \caption{\small{Mean Latency}}% ($C_{S2}=1.1$)}%Strategy S2 with $C_{S2}=1.2$}
    \label{subfig:mean_latency}
  \end{subfigure}
  \begin{subfigure}[b]{0.3\linewidth}
    \includegraphics[trim={0 0 0 1mm},clip,width=\linewidth]{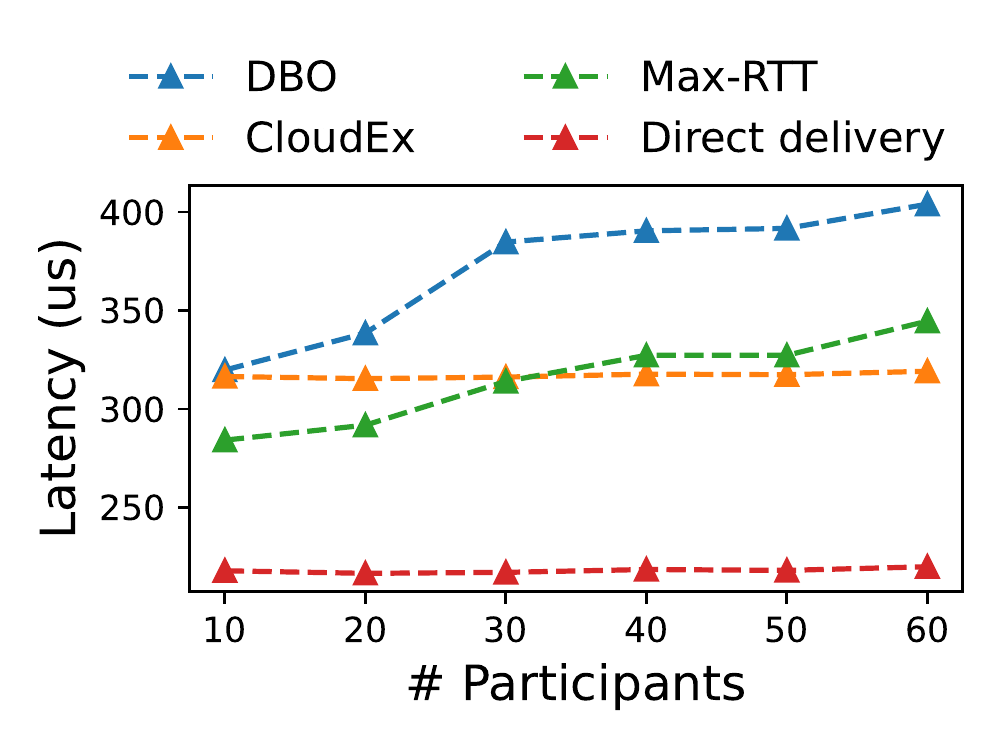}
    \vspace{-5.5mm}
    \caption{\small{Tail Latency (p99)}}
    \label{subfig:tail_latency}
  \end{subfigure}
  \vspace{-3mm}
  \caption{\small{\textbf{Fairness and Latency trends with number of participants.}.}}% Except CloudEx, all other schemes used DBO.}% \pg{I wonder if we should the extend the x-axis to 64 or 128 $\mu s$. The point would be too show that all schemes can achieve reasonable performance when response times are high, it is hard when they are low.}}
  \vspace{-4.5mm}
  \label{fig:sim_participant}
\end{figure*}

\fi
\if 0
\begin{figure}[t]
\centering
    \includegraphics[width=0.8\columnwidth]{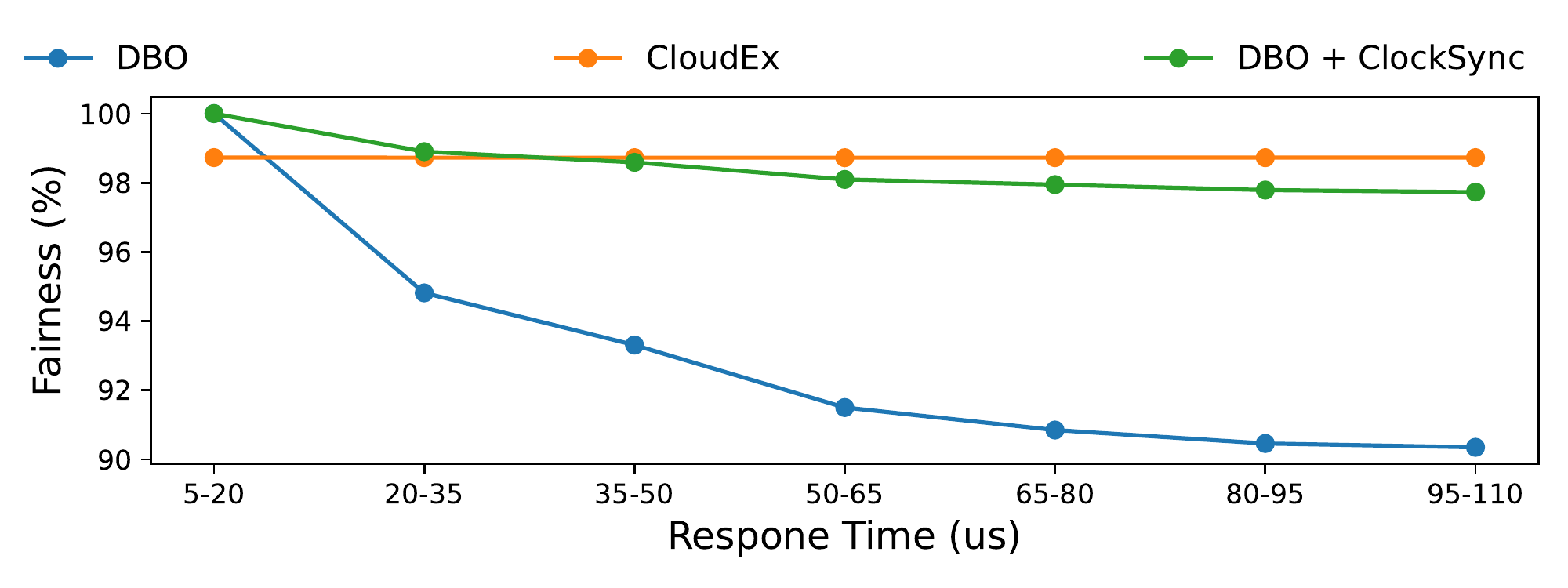}
    \vspace{-4mm}
    \caption{\small{\bf Fairness beyond LRTF: Using clocksync with DBO to improve fairness for trades with response time greater than $\delta$.}}
    \label{fig:sim_delta}
\end{figure}
\fi

%\textbf{Improving CloudEx using DBO}
\section{Conclusion}

%We have shown that it is possible to run a DBO system in the cloud with very low latency and high throughput. Tail latency (p999) is high but there is still room for improvement; our CES and RB  software implementations use a single thread to handle RX/TX and we use standard VMs (no single tenancy guarantees) which can experience interference (on-host multi-tenancy) and strict network bandwidth/rate limitations.
In this paper we have presented DBO, a practical system to ensure fairness in the cloud for speed traders. DBO is incrementally deployable and achieves guaranteed fairness and low latency while still operating at high transaction rates. 

%\emph{This work does not raise any ethical concerns.}
%\input{outline.tex}

\if 0
\section{Introduction}
\label{s:introduction}

Cloud providers are continuously improving their datacenters to provide better computing, networking and storage resources to end customers. These innovations have helped many industries to forego the cumbersome task of building and maintaining their own on-premise data centers and move to the cloud. Despite all the investment and innovation from cloud providers, however, cloud environments today are still not well suited for many industries.

Major financial exchanges such as NASDAQ, CME, and NYSE run their Central Exchange Server (CES) in on-premise data centers. At a high level, the CES generates market data and distributes it to various market participants (MPs) in real time. Certain MPs (commonly known as high-frequency traders), rapidly \emph{react} to new market data issuing a high volume of transactions: their profit is highly dependent on winning the `race for speed', aiming to submit their trade orders before other competitors.\footnote{High frequency traders are very aggressive in reducing latency: they often use programmable smartNICs to shave off just $\mu\,s$ of latency on their end to gain competitive advantage.}
To accommodate fair competition based on speed of trading, modern financial exchanges offer simultaneous delivery of market data to the interested MPs, as well as ordered processing of the trade transactions based on their submission time \attn{(measured at the MP)}. Such fairness is only  provided to a fraction of the MPs, and comes at a premium cost: financial exchanges offer colocation services for MPs' servers at the same datacenter as the exchange's CES, where they are able to provably guarantee equal bidirectional latency from the CES to all colocated MPs. Exchanges go to a great extent to ensure fairness for their colocated MP customers; it is not uncommon, for example, to use layer-1 fan-out switches for market data stream replication and equal-length cables to all colocated MPs.
%for provably equal bi-directional latency. 
For the rest of the MPs -- who either do not profit from such trading strategies or cannot afford the colocation services -- fairness of such kind is not available. Such MPs typically receive the market data stream and submit orders over variable-latency private or shared WAN links, or through intermediate brokers.

Moving the CES to the cloud presents a huge business opportunity for major cloud providers such as Amazon, Google, and Microsoft~\cite{nasdaq_cme_an, nasdaq_aws}. Financial exchanges also have a strong incentive to move their business to the cloud: they could rapidly increase market access to more participants, and also benefit from modern cloud's elastic resource scaling. To achieve smooth migration to the cloud, however, all of modern exchanges' services need to be accommodated, including fairness on speed trading which presents unique challenges. In particular, ensuring fairness by providing deterministic equal latency to the MPs similar to the on-premise datacenters, would be quite challenging in the cloud. Cloud datacenters have originally been designed for a heterogeneous, multi-tenant environment, aiming to accommodate diverse workloads. Even if the MPs are located within the same cloud region as the CES, it is hard to guarantee that the latency between CES and various MPs will be the same. Copper and fiber optics cables are not necessarily of equal length, network traffic is not evenly balanced among the different paths, multiple vendors' network elements have different performance characteristics, network oversubscription is still common in datacenters, and network quality of service mechanisms for concurrent workloads are only best effort.

Over the recent years, this problem has received some attention from both the financial and computer science research communities.
%to counter the asynchronous communication constraint. 
CloudEx~\cite{cloudex} proposes using high-precision clock synchronization to ensure that the market data is released %(at vantage points located very close to the MPs)
to MPs simultaneously along with ordering trades based on the submission time. In the event of latency spikes beyond a certain threshold, however, CloudEx incurs unfairness. Unfortunately, production datacenters do not guarantee bounded latency. Latency spikes (up to a few orders of magnitude increase than average) are quite frequent due to a multitude reasons such as congestion~\cite{google_timely}, and link failures~\cite{conga}. \pg{Maybe make the point that no algorithm can ensure simultaneous delivery of data with unbounded latency (two generals problem).}
Libra~\cite{libra} orders incoming trade requests based on their contents %(e.g., the selling price offered by the MP for an instrument). 
while Budish et al.~\cite{frequent_batch_auctions} propose aggregating real-time market data and delivering it in batches to the MPs along with aggregating incoming trades corresponding to a batch. These approaches, however, impose significant restrictions as they require changes to how trades are processed at the CES~\cite{cloudex}.% and can introduce inefficiencies (\S\ref{s:exp}).}

In this paper, we set out to tackle the problem of providing fairness for financial exchange systems that operate in modern cloud datacenters. We observe that provably equal latency between CES and MPs is a useful property that helps ensuring equal opportunity for market participants, but it is both hard to achieve and not strictly essential for the purpose. Instead, for our solution we adopt a radically different approach: we choose to relax any assumptions for tight clock synchronization among cloud nodes or predictable bounded latency in datacenter networks. Further, our approach is general since it does not require any changes to the core central exchange algorithms. 
Our key premise is that, simultaneous delivery of market data is only essential for \emph{reactive} trades that are generated directly and quickly in response to specific real-time market data points~\cite{libra}. For all other trades, minor differences in delivery times are not critical.\footnote{Our approach can leverage clock synchronization to additionally provide fine-grained fairness for such trades (\S\ref{s:background}).}  For such reactive trades, we do not need to ensure simultaneous delivery of market data. We can alternatively achieve fairness by enforcing {\it ordering} on incoming trade requests based on the duration it took for each participant to react, \attn{i.e, time taken to submit a trade since the reception of some particular market data (response time).}
%(i.e, submit a trade) upon the reception of some particular market data (response time).%\pg{is this our insight?}

%\sout{With such an approach, we shift the problem from the datacenter network to the cloud end hosts, where operators have more flexibility and tighter control. \pg{Doesn't CloudEx also do this?}}
% \pg{say ensuring equal or even bounded latency might be hard and even not possible. Maybe we should say that our solution is capable of providing some/better fairness when clocks go out of sync or latency spike happens.}

%\pg{Ilias should we move the next two paragraphs to the related work section? I am not sure, I rewrote them a little.}
%\attn{The challenge in ordering trades this way is that it is hard to measure response times since the cloud-provider does not know how the MP generated a specific trade (i.e., which data triggered the trade). Inspired by the use of logical clocks for ordering events in distributed systems~\cite{lamportSeminalPaper, paxos_made_simple}, we introduce the notion of ``Delivery Clocks'' to overcome this issue. Corresponding to each MP, we maintain a delivery clock that tracks the progress of market data delivery to the MP. We show that \emph{ordering trades from the MP based on its delivery clock (DBO)} coupled with imposing certain \emph{restrictions on how these delivery clocks advance} (i.e., controlling the \emph{pace} of delivery of market data to MPs) can help account for delay variations in delivery of market data to MPs and enable the CES to achieve response time fairness. }

\attn{The challenge in ordering trades this way is that it is hard to measure response times since the cloud-provider does not know how the MP generated a specific trade (i.e., which data triggered the trade). Inspired by the use of logical clocks for ordering events in distributed systems~\cite{lamportSeminalPaper, paxos_made_simple}, we introduce the notion of ``Delivery Clocks'' to overcome this issue. Corresponding to each MP, we maintain a delivery clock that tracks the progress of market data delivery to the MP. We show that \emph{ordering trades from the MP based on its delivery clock (DBO)} coupled with \emph{controlling how these delivery clocks advance}\footnote{\attn{By controlling the \emph{pace} of delivery of market data to MPs}.} can help account for delay variations in delivery of market data to MPs and enable the CES to achieve response time fairness. }

%To this end, in \S\ref{s:core}, we establish the properties (for ordering trades and delivery of market data) required for achieving fairness. These properties can serve as guiding principles for building end-to-end schemes that achieve fairness (with high probability). In \S\ref{s:exp}, we use these properties to propose a few simple schemes and evaluate their performance against state-of-the-art.

\attn{To this end, in \S\ref{s:core}, we first show how DBO on its own can help improve fairness. Next, we establish some fundamental requirements on pacing of market data delivery for achieving perfect fairness. We show that if these requirements are met, DBO can indeed provide perfect fairness. In \S\ref{s:exp}, we use DBO and the constraints on pacing as our guiding principles to propose a few simple schemes and evaluate their performance.}% against state-of-the-art.}

Contrary to the current {\it modus operandi} of financial exchanges, where fairness has limited scalability and comes at a premium, our solution does not rely on equal latency between the CES and MPs and hence scales to arbitrarily sized datacenters or even across datacenters and regions. This property could be the building block for democratizing fairness in financial exchanges, with all MPs getting  equal opportunities when trading. 

%While we focus on fairness for financial exchanges, our approach is general and applicable for providing fairness in many other %distributed system settings. %where there are multiple participants taking actions based on the same real-time data from a central server. 
%Examples include, multi-player games where the game is hosted in a cloud server, advertisement auctions where multiple algorithmic bots bid on a common good, etc. %\attn{We believe that more such uses cases will emerge in future where fairness will be an important desired property.} %Going forward, logical clocks might prove to be critical in building generalized systems where any process can communicate with any other process in a fair manner.
%\pg{talk about the structure of the paper. Will write this when the rest of the paper in flushed out better.}

%We show that ordering trades from the MP based on its delivery clock can help account for delay variations in delivery of market data and enable the CES to order trades based on the response time for fairness. 

%time can be very helpful for measuring response times and ordering trades for fairness.

\if 0

Most of the major financial exchanges (such as NASDAQ, CME, NYSE) run their Central Exchange Server (CES) in  on-premise data centers. At a high level, the CES generates market data and distributes it to various market participants (MPs) in real time. Certain MPs (commonly known as high-frequency traders), rapidly react to new market data issuing a high volume of transactions: their profit is highly dependent on winning the `race for speed', aiming to submit their trade orders before other competing MPs.\footnote{High frequency traders are very aggressive in reducing latency: they often use programmable smartNICs to shave off just ns of latency on their end to gain competitive advantage.}
To accommodate fair competition based on speed of trading, financial exchanges provide 
\pg{Fairness with regards to such `reactive' trade orders constitutes simultaneous delivery of market data to the interested MPs, and processing of the trade orders based on their submission time.} Contemporary financial exchanges provide such fairness to interested MPs at a premium cost: they offer colocation services of the MPs' servers at the same datacenter with the exchange's CES, where they are able to provably guarantee equal bidirectional latency from the CES to all colocated MPs;
%The financial exchanges often go to great lengths in achieving equal bi-directional latency between the CES and the colocated MPs: 
it is not uncommon, for example, for exchanges such as NASDAQ to use equal-length cables for all colocated MPs. For the rest of the MPs -- who either do not profit from such trading strategies or cannot afford the colocation services -- fairness \pg{of such kind} is not available. Such MPs typically receive the market data stream and submit orders over variable-latency private or public WAN links.
\fi
\section{Background}
\label{s:background}

\subsection{High-level Architecture}
\label{ss:basic_architecture}

%Our system is composed by three discrete entities: the Central Exchange Server (CES), the Release Buffer Components (RBs), and the Market Participants (MPs). The Central Exchange Server is essential component, responsible for serving trade orders.
\Fig{basic_architecture} shows the main components of our system. This architecture is roughly in line with earlier works~\cite{cloudex, libra}. 
For now, we will assume that each MP is a single machine (or a virtual machine) in the cloud. We will relax this constraint in \S\ref{ss:multi_machine}. 

%\chaitanya{There are broadly two types of orders:}

%\chaitanya{\noindent 1. Market Order: Here, the amount to be sold or bought is specified in this order type, but not the price.
%The exchange is in charge of filling the order at the lowest price possible. While market orders are frequently used to swiftly unload a "position," traders prefer alternative forms that offer them more control over the execution price.}

%\chaitanya{\noindent 2. Limit Order: A limit order defines a maximum or minimum price for the purchase or sale of a number of shares. Limit orders, unlike market orders, may not be filled if the specified price is not met. Investors can set time restrictions for order execution in some variations, beyond which the order is canceled.}

\noindent
\textit{Outgoing market data:} The CES generates a stream of real-time market data. There are multiple market participants, all located within the cloud. For each MP there is an associated Release Buffer (RB) that is controlled by the cloud-provider. Each RB receives market data directly from the CES; then it decides when the market data should be released to the corresponding MP. 

\noindent
\textit{Incoming trade orders:} Each MP generates trades based on the market data stream %(received through the RB)
and \emph{submits it} to the corresponding RB. The RB tags the incoming trade with any additional information necessary for fair ordering and forwards it to the CES. At the CES, the ordering buffer (OB) orders incoming trades based on the tagged information and forwards it to the matching engine (ME).\footnote{
To account for latency differences from MPs to the CES, the OB might need to buffer trades before forwarding them to the CES (see \S\ref{s:exp}).} 
The ME, matches buy orders against selling orders, and executes matched trade orders.% and updates the book.% \footnote{All updates to the book are published as market data.} 
%\pg{Chaitanya is this correct, are we missing something here?}

\begin{figure}[t]
\centering
    \includegraphics[width=0.8\columnwidth]{hotnets-images/FSI architecture2.pdf}
    \vspace{-2.5mm}
    \caption{\small{\bf Basic components.} RBs and OB are controlled by the cloud provider.}% \pg{Eashan see Ranveer's comment}}% \pg{Eashan can you redraw this figure in powerpoint or something.}}}
    \label{fig:basic_architecture}
    \vspace{-2.5mm}
\end{figure}

\noindent
\textbf{Notation:} 
We refer to the $x^{th}$ market data point as $x$. $(i,k)$ refers to the $k^{th}$ trade from MP$_i$. Table\ref{tab:notation} lists the notations used in this paper.

\begin{table}[h!]
\small
    \centering
    \begin{tabular}{p{0.1\columnwidth} | p{0.8\columnwidth}}
    %\hline
        \textbf{Notation} & \textbf{Definition} \\
        \hline
        $G(x)$ & Real Time at which $x$ was generated at the CES.\\
        $R_i(x)$ & Real Time at which $x$ was received at RB$_i$.\\
        $D_i(x)$ & Real Time at which $x$ was delivered by RB$_i$ to MP$_i$.\\
        $A_i(k)$ & Real Time at which $(i,k)$ was submitted by MP$_i$ to RB$_i$.\\
        $f_i(k)$ & Market data point used to generate $(i,k)$. This function is not known to anyone besides MP$_i$.\\
        $rt_i(k)$ & Response time of $(i,k)$. $rt_i(k) = A_i(k) - D_i(f_i(k))$. \\
        $O(i,k)$ & The order in which OB forwards trades to the ME. If $O(i,k) < O(j,l)$ then $(i,k)$ is ordered before $(j,l)$.
    %\hline
    \end{tabular}
    %\vspace{-2mm}
    \caption{\small{Notation.}}
    \label{tab:notation}
    \vspace{-5.5mm}
\end{table}

\if 0
\noindent
$G(x)$: Real Time at which $x$ was generated at the CES.

\noindent
$R_i(x)$: Real Time at which $x$ was received at RB$_i$.

\noindent
$D_i(x)$: Real Time at which $x$ was delivered by RB$_i$ to MP$_i$.

%\noindent
%$(i,k)$: Trade $k$ from MP$_i$.

\noindent
$A_i(k)$: Real Time at which $(i,k)$ was submitted by MP$_i$ to RB$_i$.

\noindent
$f_i(k)$: Market data point used to generate $(i,k)$. This function is not known to anyone besides MP$_i$.

\noindent
$rt_i(k)$: Response time of $(i,k)$. $rt_i(k) = A_i(k) - D_i(f_i(k))$. 

\noindent
$O(i,k)$: The order in which OB forwards trades to the ME. If $O(i,k) < O(j,l)$ then $(i,k)$ is ordered before $(j,l)$.
\fi

%\smallskip
\noindent
\textbf{Assumptions:} 

\noindent
\textit{Proximity of RB to MP:} Each RB is located sufficiently close to its MP (e.g., colocated in the same VM as the MP, \S\ref{s:disc}) and the latency between a RB and MP pair does not impact fairness.

\noindent
\textit{Communication model:} %Messages between RBs and MPs are never lost. 
%Message between a RB and the CES can get lost but it will eventually be transmitted successfully. 
Messages between a RB and the CES are delivered in-order using a reliable transport (such as TCP).
Connection disruptions can be handled using timeouts (\S\ref{s:exp}).%Latency between RBs and the CES is not bounded.
%We also assume that besides the market data stream and corresponding trades there is no other communication that can impact fairness.
We focus on trades generated directly in response to the real-time market data.
\footnote{MPs can also have access to external streams of data such as financial news streams. Compared to real-time market data, such external streams are not latency-critical and important for fairness.} %While providing fair-access to such streams is indeed possible with our framework, in the interest of space we do not consider it in this paper.}

\noindent
\textit{No failures:} We do not consider scenarios where the MPs or the CES fail. %and need to be restarted with the relevant state.
Existing solutions for replicating state machines can potentially be used to handle such failures~\cite{paxos_made_simple}. In the event of a failure, our approach will likely incur unfairness.

\subsection{Related Work}
\label{ss:related}

%CloudEx\cite{cloudex} is the closest work in this space. 
CloudEx~\cite{cloudex} leverages clock synchronization for fairness.
%to ensure that market data points are released at the same time. 
Market data $x$ is delivered simultaneously to all MPs at a pre-specified threshold ($Th$) from generation time (i.e., $D_i(x) = \max (R_i(x), G(x) + Th$). Incoming trades are simply processed in the order of submission time ($O(i,k) =A_i(k)$). An MP experiences unfairness if the latency between the CES and the MP goes beyond $Th$ and market data is delivered to it later than intended. 
%
%\attn{Increasing $Th$ can potentially help mask small latency spikes and reduce unfairness at the cost of additional latency.}
%
%\footnote{\pg{Data from production datacenters shows that network latency can increase by upto three order of magnitude in certain scenarios~\cite{google_timely}.  necessitating setting a highly of }}
%
Our approach does not require synchronized clocks. This is because, our approach only requires measuring time intervals locally at each RB. For such purposes, each RB can use its own local clock as long as the clock drift rate is small ($< 0.02\%$ in practice under a wide range of scenarios~\cite{sundial}).
Further, our approach can be combined with CloudEx for simultaneous delivery of market data (to the extent possible) to also provide fairness for 'one-sided' trades (see \S\ref{ss:approximate_fairness}) while also ensuring better fairness for reactive trades when latency spikes happen (\S\ref{s:exp}) or when clocks go out of sync (e.g., due to link failures ~\cite{sundial}).

\noindent
\textit{Advances in conventional networking:} There is a huge body of work~\cite{dctcp, hpcc, dcqcn, swift, hpcc, conga, bfc} (spanning congestion control, routing, kernel bypass, etc.) on reducing message transfer latencies in datacenter networks. However, none of these solutions (and none of the production datacenters) provide bounded end-to-end latency that could potentially solve the fairness problem.
%. Synchronous communication can potentially solve the fairness problem. 
In this paper, we do not attempt to change the underlying network fabric in any way.

%\noindent 
%\chaitanya{\textit{Studies in Financial Literature:} The issue of a disconnect between economic theory and the inherent difficulties that arise when supposedly "fair" market arrangements (such as FCFS matching) are put into practice, has been well documented. The sophisticated and distributed architecture of current exchanges makes it difficult for operators to keep track of all potential sources of delays and monitor how they may effect each participant. }

%\chaitanya{Alternative market designs/order-matching strategies have been presented as solutions to account for unequal delays. For example, \cite{frequent_batch_auctions} propose a policy that interprets time as a discrete variable (rather than a continuous one) and replaces continuous auctions with "batches." This permits numerous orders to arrive at the same (separate) time, reducing the relevance of (fair or unfair) sub-millisecond submission time disparities. There is a burgeoning field developing to use mechanism design as a tool to control information transmission, as well as, to handicap participants who have an unfair advantage.}

%\noindent
%\textit{\pg{Chaitanya: related work from the financial world.}}% I am rethinking if this is needed.}}
%\section{How to Ensure Fairness}
%\section{Tool-kit for Fairness}
\section{Requirements for Achieving Fairness}
\label{s:core}
%\im{Perhaps title: Mechanism(s) to achieve fairness?}
%\pg{See if the notations and math is formal enough? Can we include visualizations to help (e.g., for showing what does the inter-delivery conditions mean)? Do the titles make sense? Do we need to provide more context or is it easy to follow? Most importantly do we need to provide better justification for the kind of fairness we focus on? Should we use the term RTF?}

%\noindent
%\pg{Is this start OK?}
%In this section, we will look at  properties that the delivery processes ($\cup{D_i}$) must ensure to provide fairness.

%\attn{In this section, we will establish fundamental requirements for achieving fairness. We consider different variants of fairness. Our definitions of fairness are a formal extension of the temporal fairness first introduced in~\cite{libra}. Intuitively, temporal fairness argues that trade orders from different market participants generated based on the same market data information should be ordered based on the response time of the market participants (i.e., the faster market participants orders should be submitted to the ME first). We first introduce the notion of delivery (time) based ordering. }

\attn{In this section, we will establish fundamental requirements for achieving fairness. 
We consider different variants of fairness. 
%Our definitions of fairness are a formal extension of the temporal fairness first introduced in~\cite{libra}.
%Intuitively, each variant argues that trades from different market participants generated based on the same market data information should be ordered based on the response time of the market participants (i.e., faster MP's trades should be ordered ahead of other MPs).}
%
%\eg{nit: 
At a high level, each variant argues that trades generated by different market participants based on the same market data information should be ordered based on the response time of the market participants (i.e., faster MP's trades should be ordered ahead of other MPs).}
\attn{Note that, in this section our goal is not to provide exact schemes for the delivery and ordering processes; %we do provide an exact scheme for the ordering process
rather we only aim to establish the minimum constraints on them to achieve perfect fairness.
%to facilitate those. 
We will use these constraints as guiding principles to propose concrete schemes in the next section and show that these schemes can provide fairness with a high probability. We begin by introducing delivery-time-based ordering (DBO) and showing how it can be used to improve fairness.}
\pg{Maybe present DBO case by case, starting from a single data point and two MPs responding to it.}

%We will show that the necessary conditions we derive for the delivery processes are also sufficient. In particular, we will show that if the necessary conditions for the delivery processes are met then the delivery clock time based ordering ensures fairness in each case. %\pg{Can this be made better?}

%In this section, we will look at  properties that the delivery processes ($\cup{D_i}$) must ensure to provide fairness. We will consider different variants of fairness. Our definitions of fairness are a formal extension of the temporal fairness first introduced in~\cite{libra}. Intuitively, temporal fairness argues that trade orders from different market participants generated based on the same market data information should be ordered based on the response time of the market participants (i.e., the faster market participants orders should be submitted to the ME first). We will first introduce the notion of deliver clock time based ordering. We will show that the necessary conditions we derive for the delivery processes are also sufficient. In particular, we will show that if the necessary conditions for the delivery processes are met then the delivery clock time based ordering ensures fairness in each case.

%Before we begin, we n

\smallskip
\noindent
\textit{Delivery Clock:} %The high-level idea here is that 
\attn{%Competing MPs don't make trade decisions in isolation rather they place trades directly in response to same market data stream. 
Competing MPs make trade decisions directly in response to the same market data stream.}
%For trade decisions,} 
Events corresponding to the delivery of the same market data point to different MPs, thus, relate to the same conceptual event. %To capture this relationship, e
Each RB maintains a separate delivery clock to track these conceptual events. Delivery clock is represented by a lexicographical tuple and it increases monotonically with time. Formally, delivery clock at RB$_i$ at time $t$ is given by,
\begin{align}
    DC_i(t) = \langle x_l(t), t-D_i(x_l(t))\rangle.
\end{align}
where $x_l(t)$ is the latest data point that was delivered to MP$_i$ (i.e., $D_i(x_l(t)) \leq t < D_i(x_l(t)+1)$). Interval, $t-D_i(x_l(t))$, corresponds to the time that has elapsed since the latest delivery.\footnote{$t-D_i(x_l(t))$ can be computed based on the local clock of RB$_i$.}% as long as the clock drift rate is small.}.
This tuple tracks the progress of market data delivery to the corresponding MP. 

\begin{figure}[t]
\centering
    \includegraphics[trim={0 0 0 2mm},clip,width=0.9\columnwidth]{hotnets-images/time series visualization (3).pdf}
    \vspace{-3mm}
    \caption{\small{{\bf DBO can help correct for late delivery of data.} Delivery of market data to MP$_i$ is lagging behind MP$_j$. There are two trades $(i,k)$ and $(j,l)$ generated in response to the same market data $x$. $(j,l)$ was submitted before $(i,k)$ but
    %, i.e., $A_j(l) < A_i(k)$. 
    response time of $(i,k)$ is less than $(j,l)$.
    %, i.e., $rt_i(k) < rt_j(l)$. 
    With DBO, $O(i,k) (= \langle x, rt_i(k)\rangle) < O(j,l) (= \langle x, rt_j(l)\rangle)$ and trade $(i,k)$ is correctly ordered ahead of $(j,l)$. Ordering based on the submission time leads to incorrect ordering.}}
    \label{fig:dbo_correction}
    \vspace{-4mm}
\end{figure}
%\pg{Eashan see if you can redraw this for clarity.}} %\eg{$A_j(l)<A_i(k)$ and $rt_i(k)<rt_j(l)$. Even though trade order from $\text{RB}_j$ is submitted to the exchange network before $\text{RB}_i$, according to DBO, $\text{RB}_i$'s request should be executed first due to its lower response time i.e. $O(i,k) (= <x, rt_i(k)>) < O(j,l) (= <x, rt_j(l)>)$.}}\pg{Eashan redraw this.}}

\vspace{-1mm}
\begin{definition}
DBO satisfies the following condition,
\begin{align}
O(i,k) = DC_i(A_i(k)).
\end{align}
\vspace{-6mm}
\end{definition}

\attn{With DBO, trades are ordered based on the RB \emph{delivery clock time at trade submission}.\footnote{\attn{For DBO, each RB can tag this delivery clock time in trades before forwarding them to the CES.}} In other words, DBO is ordering trades from MPs relative to when they received the market data. }
Intuitively, DBO can be thought of as a post hoc way of correcting for time differences in delivery of market data to MPs. % (due to, for example,  high latency from the CES to a particular MP). 
For example, if market data delivery to a particular MP lags behind other MPs (e.g., due to a latency spike), then its delivery clock also lags behind. Compared to ordering trades based on the submission time, with DBO, trades from this MP receive a boost in ordering that can correct for the late delivery (example in \Fig{dbo_correction}). \pg{Another way of thinking about DBO is that (loosely speaking) it creates the perception that each MP has the same RTT to the CES. Roughly speaking, incoming trades from MPs are delayed in inverse proportion to the delay experienced in market data delivery (see Handling latency variation on the reverse path in \S\ref{s:exp}).}

%\Fig{dbo_correction} illustrates this in a simple scenario. 
% \eg{With DBO, trades are ordered based on the delivery clock time at trade submission to the corresponding RB. In other words, DBO is ordering trades from MPs relative to when they received the market data. For example, if market data delivery to a particular MP lags behind other MPs (say due to a latency spike), then its delivery clock compared to other MPs also lags behind. Compared to ordering trades based on the real time at submission, with DBO, trades from this MP receive a boost in ordering that can correct for the late delivery (example in \Fig{dbo_correction}).}

%\pg{Eashan a figure to illustrate this would be super useful.}
%
%(due to, for example,  high latency from the CES to a particular MP).
%

DBO alone is capable of partially correcting differences in market data delivery across MPs. Perfect correction, would require measuring the response time of a trade. The challenge, however, is that a trade could have been generated in response to any of the market data points delivered to the MP (and not just the latest data point $x_l$). The RB/OB cannot trust the MP to truthfully offer this information. We show that we can alleviate this issue by enforcing certain restrictions on how the delivery clocks advance across RBs (i.e., restrictions on the pace of market data delivery to the MPs). 

We will now derive the minimum requirements on the delivery processes for achieving fairness for arbitrary trade orders for \emph{any} ordering process.\footnote{We assume that $f_i$ is not known to the ordering process.} We will also show that if these delivery requirements are met,  DBO ensures perfect ordering for fairness. 

\subsection{Strong Fairness}
\label{ss:strong_fairness}
%\pg{Eashan verify if the theorems and are corollaries are indeed correct. Pay special attention to whether it should be $<$ ot $\leq$.}

\begin{definition}
An ordering process $O$ is strongly fair if it satisfies the following conditions,
\begin{align*}
    C1: &\text{ If } A_i(k) < A_i(l), \text{ then, } O(i,k) < O(i,l).\nonumber\\
    C2: &\text{ If } f_i(k) = f_j(l) \land rt_i(k) < rt_j(l), \text{ then, }
    O(i,k) < O(j,l).
\end{align*}
\label{def:strong}
\vspace{-5mm}
\end{definition}

%Condition $C1$ states that a MP is always better-off submitting the trade order as early as possible and waiting to submit trade orders does not offer any advantage.
Condition $C1$ states that a MP is always better-off submitting the trade order as early as possible. $C2$ states that trade orders generated based on the same market data point should be ordered based on the response time of the MPs.% (i.e., faster MP's trades should be ordered ahead).}
%$C2$ states that trade orders generated based on the same market data point should be ordered based on the response time of the MPs.} %.%forwarded to the ME first). 

\begin{theorem}
The \textit{necessary} and \textit{sufficient} conditions on the delivery processes for strongly fair ordering are given by,
\begin{align*}
    D_i(x+1) - D_i(x) &= D_j(x+1) - D_j(x), & \forall i,j,x.
\end{align*}
\label{thm:1}
\vspace{-6mm}
\end{theorem}

The theorem states that for strong fairness the inter-delivery times should be the same across all  MPs. In other words, the delivery clocks at all RBs (at any given delivery clock time) must advance at the same rate.

% eg{The theorem states that, for strong fairness, the intervals between consecutive market data deliveries -- inter-delivery times -- should be the same across all MPs. In other words, the delivery clocks at all RBs (at any given delivery clock time) must advance at the same rate.}

\begin{proof}
\textit{Necessary:} To prove that the above condition is necessary we will show that if this condition is not met then no ordering process exists which is strongly fair for arbitrary trade orders. %\attn{We will constuct a tra}%already said this %
%The reason for this is that the OB does not know the response time of trade orders ($\sum f_i$ is unknown). 

\begin{figure}[t]
\centering
    \includegraphics[trim={0 0 0 1mm},clip,width=0.8\columnwidth]{hotnets-images/delivery times.pdf}
    \vspace{-3mm}
    \caption{{\small{\bf Proof of Theorem 1.}}}% \pg{Eashan include as an arrow for progress of time.}}
    \label{fig:proof}
    \vspace{-5mm}
\end{figure}

Consider the following scenario (\Fig{proof}) where the above condition is not met. Let $D_i(x+1) - D_i(x) = c1$, $D_j(x+1) - D_j(x) = c2$. Without loss of generality we assume $c1<c2$. Consider hypothetical trades $(i,k)$ and $(j,l)$ s.t. $A_i(k) = D_i(x+1) + c3$ and $A_j(l) = D_j(x+1) + c4$. Further, we can pick $A_i(k)$ and $A_j(l)$ s.t. $c3>c4$ and $c1+c3<c2+c4$. Now we consider two scenarios for how these trades were generated. These two scenarios are indistinguishable at the OB.
%Consider a hypothetical trade order $k$ (from MP$_i$) and $l$ (from MP$_j$) s.t. $A_i(k) = D_i(x+1) + c3$ and $A_j(l) = D_j(x+1) + c4$. Further, we can pick $A_i(k)$ and $A_j(l)$ s.t. $c3>c4$ and $c1+c3<c2+c4$. Now we consider two scenarios for how these trades were generated. These two scenarios are indistinguishable at the OB.

\noindent
\text{Case 1:} $f_i(k) = f_j(l) = x+1$. Here,
\begin{align}
rt_i(k) = c3, rt_j(l) = c4.
\end{align}
Since $c3>c4$, a strongly fair ordering (condition $C2$) must satisfy,
%\begin{align}
$O(i,k) > O(j,l)$.
%\label{eq:theorem_1:necessary:case1}
%\end{align}

\noindent
\text{Case 2:} $f_i(k) = f_j(l) = x$. Here,
\begin{align}
rt_i(k) = c1+c3, rt_j(l) = c2+c4.
\end{align}
In this case, since $c1+c3<c2+c4$, a strongly fair ordering must instead satisfy the opposite, $O(i,k) < O(j,l)$. A contradiction! \attn{Thus, no ordering process can be strongly fair in both these scenarios.}
%\label{eq:theorem_2:necessary:case1}
%\end{align}

\smallskip
\noindent
\textit{Sufficient:} We will now show that if the inter-delivery times are same across MPs, then a strongly fair ordering exists. 

Assuming same inter-delivery times, DBO trivially satisfies $C1$. DBO also satisfies $C2$, i.e.,  if $f_i(k) = f_j(l) \land rt_i(k) < rt_j(l)$, then,
\begin{align}
    DC_i(A_i(k)) < DC_j(A_j(l)). 
\end{align}
Intuitively, this is because market data $f_i(k) (= f_j(l))$ is delivered to each MP at the same delivery clock time (by definition). Further, delivery clocks advance at the same rate across all MPs. When measured from the delivery of $f_i(k) (= f_j(l))$, delivery clock of RB$_j$ in duration $rt_j(l)$ will advance more than delivery clock of RB$_i$ in duration $rt_i(k)$.
Therefore, DBO is strongly fair.\footnote{\attn{Note that, DBO is not the only ordering process that can achieve strong fairness. Other ordering processes that also order trades from MPs based on when they received the market data can also achieve strong fairness.}} 
\end{proof}

\if 0
Assuming same inter-delivery times, DBO trivially satisfies $C1$. DBO also satisfies $C2$. If $f_i(k) = f_j(l) \land rt_i(k) < rt_j(l)$, then,
\begin{align}
    DC_i(A_i(k)) < DC_j(A_j(l)). 
\end{align}
% \eg{
% \begin{align}
%     DC_i(A_i(k)) = \langle x^i_l, rt_i(k)+(D_i(f_i(k))-D_i(x^i_l)) \rangle < DC_j(A_j(l)) = \langle x^j_l, rt_j(l)+(D_j(f_j(l))-D_j(x^j_l)) \rangle. 
% \end{align}}
Intuitively, DBO satisfies $C2$ because market data $f_i(k) (= f_j(l))$ is delivered to all the MPs at the same delivery clock time (by definition). Because delivery clocks advance at the same rate across all MPs, (measured from the delivery of $f_i(k)$) delivery clock of RB$_j$ in duration $rt_j(l)$ will have advanced more than delivery clock of RB$_i$ in duration $rt_i(k)$.
Therefore, DBO is strongly fair. 
%\end{proof}
\fi

\noindent
\textit{On impossibility of strong fairness:} It is possible to show that if communication latency is not bounded then RBs cannot %coordinate \im{co-ordinate implies talking to each other; another word perhaps?} to
achieve the same inter-delivery times always. %(if the market data stream is not known in advance \im{I don't understand this text in the parenthesis}). 
In the interest of space we skip this proof. The high-level idea is that if two RBs can achieve the same inter-delivery times then they can also agree to execute some task (not known at the start) simultaneously. 
%From earlier work on the folklore two-generals-problem~\cite{two_generals} it is well known that is impossible if the communication latency is not bounded.\eg{some rephrasing required: 
From earlier work on the folklore two-generals-problem~\cite{two_generals}, it is well known that such simultaneous execution is impossible if the communication latency is not bounded. While it might not be possible to achieve the same inter-delivery times all the time, we can achieve it with high probability at the cost of some additional latency at the RB~\cite{cloudex}.% \pg{check if the citation is correct, maybe a better one exists.}

Next, we consider two weaker variants of fairness. We give the necessary and sufficient conditions for the delivery processes in each case. The proofs in each case are similar to that of Theorem~\ref{thm:1}. We can construct a counter example to show that the conditions are necessary. For sufficiency, we can show that if the conditions are met then DBO satisfies the fairness properties. Depending on the needs of the application one can also consider alternate definitions of fairness, derive the desired properties and construct schemes that try to ensure these properties for fairness.

\subsection{Limited Fairness}
\label{ss:limited_fairness}
\vspace{-1mm}
\begin{definition}
An ordering process $O$ ensures limited fairness if it satisfies $C1$ and the following condition,
\begin{align*}
    C3: \text{If } f_i(k) &= f_j(l) \land rt_i(k) < rt_j(l) \land rt_i(k) < \delta, \text{ then},\nonumber\\
    O(i,k) &< O(j,l).
\end{align*}
\vspace{-5mm}
\end{definition}
%\im{that should be $rt_i(k) < rt_j(l)$ correct? If yes, please fix}
where $\delta$ is a positive constant.

Intuitively, $C3$ states that if the response time of a MP is bounded, then its trades will ordered ahead of corresponding trades from other MPs as long as it is faster than other MPs.
This definition is most relevant in the high frequency trading world where the response time is in the order of a few microseconds. 
%(\pg{Chaitanya can we motivate this definition better}). %\eg{can we say, "The reason for unfairness is the unequal transmission delays in the network, the variance of which in a high performance network is of the order of $100\mu s$. Following this, we may be only concerned with fairness for trades within an appropriate $\delta$ so as to provide all fast MPs a fair platform."}\pg{I dont understand your comment}

\vspace{-1mm}
\begin{corollary}
The \textit{necessary} and \textit{sufficient} conditions on the delivery processes for limited fairness are given by,
\begin{align*}
    \text{If }  D_i(x+1) - D_i(x) &< \delta, \text{ then},\nonumber\\
    D_j(x+1) - D_j(x) &= D_i(x+1) - D_i(x), & \forall j.
\end{align*}
\label{cor:1}
\vspace{-5mm}
\end{corollary}

%\im{leeway or headroom instead of slack? not sure}
Compared to strong fairness, the above condition offers some leeway for how the delivery clocks can advance. \attn{If for a certain MP, the inter-delivery time for two consecutive data points is greater than equal to $\delta$, then for any other MP, the inter-delivery for these points can differ as long as it is greater then $\delta$.}
%In particular, for certain market data points the inter-delivery times across MPs need not be the same as long as the inter-delivery time for such points at each MP $\geq \delta$. 
This leeway is useful for dealing with sudden latency spikes. In the next section, we will consider a simple scheme for the delivery processes that meets the above condition at all times %(for any value of $\delta$)
regardless of the fluctuations in latency.

%I like what you have, let me think more.
%Compared to strong fairness, the above condition offers some leeway for how the delivery clocks can advance. For example, in case of sudden latency spikes, as long as all MPs maintain inter-delivery times larger than $\delta$, the MPs can pace data delivery accordingly so as to handle the unexpected delays \sout{while ensuring the fairness property}. In the next section, we will consider a simple scheme for the delivery processes that meets the above \sout{fairness} condition at all times regardless of the fluctuations in latency.

\subsection{Approximate Fairness}
\label{ss:approximate_fairness}
\begin{definition}
An ordering process $O$ is approximately fair if it satisfies $C1$ and the following condition,
\begin{align*}
    C4: \text{If } f_i(k) &= f_j(l) \land rt_i(k) \cdot (1+\epsilon)< rt_j(l), \text{ then},\nonumber\\
    O(i,k) &< O(j,l),
\end{align*}
\vspace{-6mm}
\end{definition}
where $\epsilon$ is a positive constant.

Intuitively, $C4$ states that as long as a MP is faster than other MPs by a certain margin, it's trades will be ordered ahead.%submitted to the ME earlier.

\vspace{-1mm}
\begin{corollary}
The \textit{necessary} and \textit{sufficient} conditions on the delivery processes for approximately fair ordering are given by,
\begin{align*}
    D_i(x+1) - D_i(x) &\leq (D_j(x+1) - D_j(x)) \cdot (1 +\epsilon), & \forall i,j,x.
\end{align*}
\label{cor:2}
\vspace{-6mm}
\end{corollary}

Compared to strong fairness, the above condition also offers some leeway for inter-delivery times to differ. This leeway can be useful for masking fluctuations in latency (\S\ref{s:exp}). %\pg{Include a note on how this relates to clock drift rate?}

\smallskip
\noindent
\textit{Do we need clock-sync?} For fairness, RBs only need to ensure the specified constraints for inter-delivery times in each case. Ensuring these contraints does not require clock synchronization across RBs. A RB can use its own local clock for maintaining inter-delivery times as long as the clock drift rates are small.
%\footnote{It is possible to analyze the impact of high clock drift rates on fairness. Due to space constraints, we exclude it.} 
With perfect clock synchronization, RBs can additionally ensure that market data is released at the same time when not experiencing extreme fluctuations in network latencies. 
%Furthermore, with clock synchronization among RBs, DBO could additionally ensure fairness for `one-sided' trades (generated independently of the current market data -- e.g, limit orders) that ideally should be ordered based on the time when they were submitted.
Furthermore, simultaneous delivery of market data also syncs the delivery clocks at RBs. In such cases, DBO additionally provides fine-grained fairness (similar to that of CloudEx) for `one-sided' trades (generated independently of the current market data) % -- e.g, limit orders)
that ideally should be ordered based on the time when they were submitted.
%\pg{chaitanya please check this part about one sided orders, example of one side orders, how important is fairness for such one sided trades, not sure if this is needed}%\pg{does this make sense? should we say fairness for one-sided orders is less important or the time scales are less relavent (similar to our argument for trades based on non-real time data).}

%\pg{Ensuring these inter-delivery constraints does not require access to clock sync }
%\section{Using the tool-kit}
%\section{Evaluation}
\section{Achieving Fairness}%\pg{better title?}}
\label{s:exp}

%\pg{Should we include both S2 and S3? I chose S2 because it provide some contrast over the cloudEx soltuin. S3 I included just because I think its a reasonable scheme.}
%Based on Section \ref{s:core}, as long as the time span between consecutive deliveries of market data (inter-delivery time) to the MPs satisfy the required constraints, we can guarantee the corresponding type of fairness in trade handling at the CES. Based on these constraints, we introduce a few fairness algorithms and later compare them against the Cloudex.

In the previous section, we derived the minimum constraints on the delivery processes for achieving different variants of fairness. In each case, we also showed that if these constraints are met, then, DBO achieves fairness. Which property/properties for the delivery process should the cloud-provider aim for and what is the best way to achieve them (with low latency and high probability) depends on many things, including the requirements for fairness of the financial exchange and the nature of latency variations in the cloud-provider. In this paper, we do not attempt to provide any verdict on this question. Instead, in this section we only aim to show that DBO coupled with controlling how delivery clocks advance can indeed provide fairness with high probability in scenarios where the network latency is highly variable. To this end, we propose two simple schemes (with very different trade-offs for fairness) for delivering market data that both use DBO for ordering trades. We also compare these schemes against CloudEx. %($D_i(x) = \max(R_i(x), G_i(x)+Th)$) and $O(i,k) =A_i(k)$).}
%where the data is delivered at the same time to all MPs ($D_i(x) = \max(R_i(x), G_i(x)+Th)$) and trades are ordered based on the submission time ($A_i(k)$).}

\begin{figure}[t]
    \centering
    \includegraphics[trim={0 0 0 0mm},clip,width=\linewidth]{hotnets-images/latency_map_alt.pdf}
    \vspace{-6mm}
    \caption{\small{\textbf{Visualizing delivery times:} x-axis shows the generation time of the market data. y-axis plots the delivery times relative to the generation time for market data points (i.e., $D_i(x)-G(x)$) for different schemes. We also include the latency from the CES to the MP for reference (dotted black line). The dashed vertical lines demarcate various regions of interest for scheme S2.}}
    %\pg{Eashan: Include the drain rate, make the colored lines thicker and use different linestyles for the three schemes..}}% \pg{Maybe label the drain rate in the figure for S1 and S2.}}
    \label{fig:latency_graph}
    \vspace{-5mm}
\end{figure}

To help understand these two schemes, we consider a simple scenario where the latency from the CES to a single MP is mostly constant except a transient spike. \Fig{latency_graph} depicts the delivery times of each of the two schemes for this particular MP.

%To help you visualize, \Fig{latency_graph} shows the delivery times for each scheme for a particular MP in a simple scenario. The latency from CES to the MP is constant except a transient spike.

\smallskip
\noindent
\textit{Scheme 1 (S1):} The inter-delivery times respect the constraints in Corollary~\ref{cor:1} and S1 provides limited fairness regardless of the variations in network latency. In S1, the CES splits the market data points into batches. %The CES includes the corresponding batch number for each market data point.
At the RB, all the market data points corresponding to the same batch are delivered simultaneously (i.e., the inter-delivery time for market data points within a batch is zero). The time interval between delivery of two batches is greater than or equal to $\delta$. \attn{Note that, as per Corollary~\ref{cor:1}, the inter-batch time at a MP can differ from other MPs. As a result, S1 can handle arbitrary latency spikes without violating the inter-delivery constraints.} %In the event of a latency spike, a RB can delay delivering the next batch arbitrarily. As a result, S1 can handle arbitrary latency spikes without violating the inter-delivery constraints.} 
%As a result, S1 can handle arbitrary latency spikes, In the event of a latency spike RB can delay the next batch arbitrarily. Thu without violating the inter-delivery constraints. } 
Formally, delivery time of a batch $b$, ($D_i(b)$), is given by,
\begin{align*}
    D_i(b) = \max(R_i(b), D_i(b-1)+\delta),
\end{align*}
where $R_i(b)$ is the time  at which the last market data point in $b$ is received by RB$_i$. The CES chooses the batch boundaries based on the generation time of the market data. Market data $x$ corresponds to batch number ($b(x)$) given by $b(x) = \lfloor\frac{G(x)}{C_{S1}\cdot\delta}\rfloor$ or equivalently $C_{S1}\cdot\delta \cdot b(x) \leq G(x) < C_{S1}\cdot\delta \cdot (b(x)+1)$, where $C_{S1} (> 1)$ is a constant. \attn{Note that, with S1 batches can be arbitrarily small and multiple batches can be outstanding within a round trip.}

\smallskip
\noindent
\textit{Scheme 2 (S2):} For this scheme, we assume clock synchronization between RBs. The mechanism for data delivery is an extension over CloudEx based on the constraints in the previous section. Formally,
\begin{align*}
    D_i(x) = \max\left(R_i(x), G(x)+Th, D_i(x-1) + \frac{G(x)-G(x-1)}{C_{S2}}\right)
\end{align*}
where $C_{S2} (> 1)$ is a constant. Intuitively, if the network latencies for all the MPs are below $Th$ consistently (region \circled{1} and \circled{4} in \Fig{latency_graph}), then, the inter-delivery time at each MP is the same (equal to the inter-generation time, i.e., $D_i(x)-D_i(x-1)= G(x)-G(x-1), \forall i$). The delivery times thus respect the constraints in Theorem~\ref{thm:1} and S2 provides strong fairness at such times. Compared to CloudEx, the main differentiation of the scheme above is how data delivery is handled after a latency spike between the CES and a particular MP (third term in the equation). In such cases (region \circled{3}), the inter-delivery time differs from the inter-generation time by a relative factor ($C_{S2}$) and $D_i(x)-D_i(x-1)= \frac{G(x)-G(x-1)}{C_{S2}}$. At such times, if the network latency to other MPs remains consistently low, then the inter-delivery time gaps respect the constraints listed in Corollary~\ref{cor:2} (with $\epsilon = C_{S2} - 1$) and $S2$ provides approximate fairness.%, at the cost of some additional latency compared to CloudEx. 
The inter-delivery time gaps at MPs, however,  can differ significantly when latency spikes happen (region \circled{2}), hence leading to reduced fairness by S2. Despite this, since S2 uses DBO, it can still correct for late delivery of market data and provide better fairness than CloudEx in such cases. 

Note that, it is equally possible to achieve the above inter-delivery properties without clock-sync. We chose S2 specifically to show incremental benefits over CloudEx from incorporating DBO and trying to respect the constraints on inter-delivery times.

\begin{figure}[t]
    \centering
    \includegraphics[trim={0 0 0 5mm},clip,width=0.9\linewidth]{hotnets-images/transmission_times.pdf}
    \vspace{-4mm}
    \caption{\small{\textbf{Variations in latency from the CES to the RBs.}}} %\pg{Eashan You can shrink the height further, in the legend use ncol=2}} %\pg{Eashan can you redeaw this figure so that the height is smaller. Increase tick label size and legend size.}
    \label{fig:transmission_times}
    \vspace{-5mm}
\end{figure}

% \begin{figure*}[t]
%   \centering
%   \begin{subfigure}[b]{0.32\linewidth}
%     \includegraphics[width=\linewidth]{images/LRTF delivery.pdf}
%     \caption{S1}% ($C_{S1}=1.14$, $\delta=14\mu s$)}%Strategy S1 with $C_{S1}=1.14$, $\delta=14\mu s$}
%   \end{subfigure}
%   \begin{subfigure}[b]{0.31\linewidth}
%     \includegraphics[width=\linewidth]{images/ARTF.pdf}
%     \caption{S2}% ($C_{S2}=1.1$)}%Strategy S2 with $C_{S2}=1.2$}
%   \end{subfigure}
%   \begin{subfigure}[b]{0.36\linewidth}
%     \includegraphics[width=\linewidth]{images/clocksync.pdf}
%     \caption{Cloudex}% ($d_o=300, d_i=100$)}%we describe them already
%   \end{subfigure}
%   \vspace{-5mm}
%   \caption{\textbf{Fairness with different schemes.} Fraction of trades that were ordered fairly for different values of response time.}
%   \vspace{-3mm}
%   \label{fig:fairness-sim}
% \end{figure*}

\begin{figure*}[t]
  \centering
  \begin{subfigure}[b]{0.245\linewidth}
    \includegraphics[trim={0 0 0 1mm},clip,width=\linewidth]{hotnets-images/clocksync.pdf}
    \vspace{-5.5mm}
    \caption{\small{Cloudex}}% ($C_{S1}=1.14$, $\delta=14\mu s$)}%Strategy S1 with $C_{S1}=1.14$, $\delta=14\mu s$}
  \end{subfigure}
  \begin{subfigure}[b]{0.23\linewidth}
    \includegraphics[trim={0 0 0 1mm},clip,width=\linewidth]{hotnets-images/LRTF delivery.pdf}
    \vspace{-5.5mm}
    \caption{\small{S1 (Batch delivery + DBO)}}% ($C_{S2}=1.1$)}%Strategy S2 with $C_{S2}=1.2$}
  \end{subfigure}
  \begin{subfigure}[b]{0.23\linewidth}
    \includegraphics[trim={0 0 0 1mm},clip,width=\linewidth]{hotnets-images/ARTF.pdf}
    \vspace{-5.5mm}
    \caption{\small{S2 (CloudEx$^+$ deli + DBO)}}
  \end{subfigure}
  \begin{subfigure}[b]{0.264\linewidth}
    \includegraphics[trim={0 0 0 1mm},clip,width=\linewidth]{hotnets-images/Cloudex delivery+DBO.pdf}
    \vspace{-5.5mm}
    \caption{\small{S3 (CloudEx delivery + DBO)}}% ($d_o=300, d_i=100$)}%we describe them already
  \end{subfigure}
  \vspace{-3mm}
  \caption{\small{\textbf{Fairness with different schemes.} Fraction of trades that were ordered fairly for different values of response time. 
  Closer to 1 (white) is better.}}% Except CloudEx, all other schemes used DBO.}% \pg{I wonder if we should the extend the x-axis to 64 or 128 $\mu s$. The point would be too show that all schemes can achieve reasonable performance when response times are high, it is hard when they are low.}}
  \vspace{-4.5mm}
  \label{fig:fairness-sim}
\end{figure*}

\pg{This is more fundamental move to the core section}
\smallskip
\noindent
%\textit{Buffering algorithm at the OB:} 
\textit{Handling latency variations on the reverse path:} 
The latency from the MPs to the CES could also be variable. Such variations do not affect fairness as long as the OB only forwards a trade $(i,k)$ to the ME once it has (received and) forwarded all other trades $(j,l)$ that should be ordered ahead of $(i,k)$ (i.e., $O(j,l) < O(i,k)$). This requirement means that the OB might need to delay received trades for a certain duration (buffering) before forwarding them to the ME. There are multiple ways to achieve this requirement; we describe a simple one here. Each RB sends an acknowledgement (ACK) for every market data point it delivers to the MP. We assume that ACKs and trades from each RB are delivered to the OB in-order. %using a reliable delivery transport such as TCP. 
An ACK from RB$_j$ for data $x$ thus tells the OB that it has received all trades $(j,l)$ from $MP_j$ s.t., $O(j,l) \leq \langle x, 0 \rangle$. The OB uses a priority-queue to buffer trades. The OB uses the ACK information to forward trades respecting the above requirement. In the event of RB failures, the OB could stall indefinitely waiting for ACKs from a failed RB; we can protect against such scenarios by introducing a timeout threshold. \attn{This timeout can also help reduce buffering at the OB (at the cost of fairness) when the network latency from/to certain MPs is high.}

\subsection{Simulation Results}
\label{ss:simulation}

\if 0
\begin{figure*}[t]
  \centering
  \begin{subfigure}[b]{0.24\linewidth}
    \includegraphics[width=\linewidth]{images/alternate/LRTF delivery.pdf}
    \vspace{-5mm}
    \caption{S1}% ($C_{S1}=1.14$, $\delta=14\mu s$)}%Strategy S1 with $C_{S1}=1.14$, $\delta=14\mu s$}
  \end{subfigure}
  \begin{subfigure}[b]{0.23\linewidth}
    \includegraphics[width=\linewidth]{images/alternate/ARTF.pdf}
    \vspace{-5mm}
    \caption{S2}% ($C_{S2}=1.1$)}%Strategy S2 with $C_{S2}=1.2$}
  \end{subfigure}
  \begin{subfigure}[b]{0.23\linewidth}
    \includegraphics[width=\linewidth]{images/alternate/ARTF-2.pdf}
    \vspace{-5mm}
    \caption{S3}
  \end{subfigure}
  \begin{subfigure}[b]{0.27\linewidth}
    \includegraphics[width=\linewidth]{images/alternate/clocksync.pdf}
    \vspace{-5mm}
    \caption{Cloudex}% ($d_o=300, d_i=100$)}%we describe them already
  \end{subfigure}
  \vspace{-3mm}
  \caption{\textbf{Fairness with different schemes.} Fraction of trades that were ordered fairly for different values of response time. \pg{I wonder if we should the extend the x-axis to 64 or 128 $\mu s$. The point would be too show that all schemes can achieve reasonable performance when response times are high, it is hard when they are low.} \eg{Added results with longer response time to address the concern of RTF schemes only working for HFTs or smaller horizon. We can say that it provides fairness in all scenarios. We can edit and keep the results that are more useful.}}
  \vspace{-3mm}
  \label{fig:fairness-sim-long-rt}
\end{figure*}
\fi
%\pg{Include a note on that we pick static threshold for CloudEx}

To evaluate the proposed schemes, we run a simulation experiment using two dummy MPs with \emph{highly} variable latency between the CES and each RB (generated by a random-walk-like process); see \Fig{transmission_times}. In CloudEx, latency on the reverse path can cause additional unfairness (which can be alleviated using the buffering process we describe earlier). For simplicity, the latency between each RB and the CES is assumed to be fixed (100 $\mu s$). The CES generates market data continuously every $2 \mu s$. We assume that both MPs submit a trade in response to every market data point. We do several runs of the experiment with different values of response times for each MP (fixed across data points within a run). For MP2, we vary the response time ($rt_2$) from 2 to 32 $\mu s$.\footnote{We specifically choose to evaluate over such short response times. For long response times (> 1 ms), all schemes provide close to ideal fairness.} For each value of $rt_2$, we run the experiment such that response time of MP1 ($rt_1$) is higher than from $rt_2$ by some multiplicative factor ($rt_1 = f \cdot rt_2$). We vary $f$ from 1.1 to 2.

%\textit{Fairness}: 

To measure fairness, we calculate the fraction of MP2's trades that were executed before the corresponding ones (based on the same market data) from MP1. %Following Definition \ref{def:strong}, this fraction would be 1 for an ideal scheme. 
\Fig{fairness-sim}, shows the results for the three schemes.
%Figure \ref{fig:fairness-sim}, shows the fraction of trades that satisfy strong fairness (Definition \ref{def:strong}) for S1, S2 and CloudEx.% in each case. %We construct a heat map to contrast the differences.
We also compare the average end-to-end latency\footnote{The round trip latency between market data generation and trade handling excluding the response time.} (measured across trades) for these schemes (see Table~\ref{tab:latency}). 

%Formally, if a trade $(i,k)$ generated in response to data point $x$, is forwarded to the ME at time $t$, its end-to-end latency is given by $t-(G(x)+rt_i(k))$.  Intuitively, we are measuring the round trip latency between market data generation and trade handling excluding the response time.% of the MP.% to respond to the RB.

$Th$ governs the trade-off between fairness and latency in CloudEx, here we use a fixed $Th = 300\mu s$ s.t., the network latency from CES to the MP is below this threshold most of the time.
For S1, we use a small value of $\delta = 14\mu s$. %$C_{S1}$ governs the buffer drain rate at the RB after a spike and hence the latency.
We chose $C_{S1} \cdot \delta = 16$ s.t., the latency is similar/lower to CloudEx.  For S2, we use $C_{S2} = 1.095$ (< 1.1) to try to ensure the constraint from Corollary~\ref{cor:2} for $\epsilon = 0.095$\footnote{If this constraint is met at all times, then, S2 will achieve ideal fairness when $rt_1 > rt_2*1.095$}. 
We use the same $Th$ as CloudEx to equalize latency.% and to show improvements from DBO and controlling the inter-delivery times. %\pg{I wonder if we should show results pick a higher value of $C_{S2}$ s.t., the last row in heatmap looks bad.}

%\pg{Not sure we should include S3} We also consider another DBO based scheme (S3) here. In S3, each RB tries to maintain a constant amount of market data buffered at the RB. If the buffering goes below a certain threshold, the RB increases the inter-delivery times (by a multiplicative factor ($C_{S3}$) relative to the inter-generation time, $D_i(x) - D_i(x-1) = C_{S3}\cdot(G(x)-G(x-1))$). 
%If the buffering increases, the RB reduces the inter-delivery times ($D_i(x) - D_i(x-1) = \frac{G(x)-G(x-1)}{C_{S3}}$). This scheme ensures that inter-delivery times meet the constraints in Corollary~\ref{cor:2} (with $\epsilon=C_{S3}^{2}-1$) if all RBs have some market data buffered. However, unlike S2, this scheme does not rely on clock synchronization. We use $C_{S3}=\sqrt(1.095)$ and 50$\mu s$ of buffering threshold (to equalize latency).  

As expected, S1 achieves perfect fairness when $rt_2 < \delta$. When the response time is higher, S1 achieves the worst fairness. It is possible to provide better fairness for such trades by additionally trying to ensure that inter-batch delivery times are similar across MPs. 

 S2 provides close to ideal fairness at the cost of some additional latency.\footnote{Fairness with S2 drops when $rt_1 \leq rt_2*1.095$, but it is still comparable to CloudEx.} As explained earlier, this is because the combination of DBO and controlling how delivery clocks advance enables S2 to handle latency spikes better than CloudEx. To understand where the wins in S2 are coming from, we consider another scheme. S3 uses DBO and market data delivery is same as CloudEx. DBO on its own helps correct for differences in latency to the MPs and provides better fairness than CloudEx. By controlling the delivery clocks, S2 is further able to %mask the impact of latency spikes and 
 improve fairness.% than S3.  
 
%Note that, the aim of this experiment is to illustrate the benefits of our approach in scenarios where network latency is highly variable. The exact performance numbers do depend on the latency traces. We repeated the above experiment with other latency traces. In each case, the trend in performance was similar.
 %S2 outperforms S3 by controlling the delivery clocks.  

%\pg{I wonder if we should include an experiment with CloudEx like delivery but with DBO ordering. The point would be too show DBO in its self is useful but you combine it with controlling the delivery clock (S2) it is more useful.}
%Note that even if we increase $Th$ to $300\mu s$ for CloudEx, it still provides worse fairness than S2 (with $Th =250 \mu s$) and its latency is higher than S2.

%\pg{Not sure if we should include S3:} In S3, RBs rarely run out of market data buffering in this experiment. S3 also achieves close to ideal fairness (for similar latencies). Additionally, we can use more sophisticated schemes for managing buffers (e.g., a PID controller~\cite{somepidbook}) to achieve even better performance.

\vspace{-2mm}
\begin{table}[h!]
\small
    \centering
    \begin{tabular}{cc}
    %\hline
        \textbf{Scheme} & \textbf{End-to-end Latency ($\mu s$)} \\
    \hline
        CloudEx & 403 \\
        S1 & 382 \\
        S2 & 408 \\
        S3 & 406 \\
    %\hline
    \end{tabular}
    \vspace{-1mm}
    \caption{\small{End-to-end latency for different schemes.}}
    \label{tab:latency}
    \vspace{-5mm}
\end{table}

\section{Discussion}
\label{s:disc}
%We discuss few additional challenges in moving financial exchanges to the cloud.

%\subsection{Implementation of a delivery clock system}

\subsection{DBO Cloud Architecture}
%\pg{Ilias, I wonder is this should be 5.2 instead}\pg{maybe cut short}

In a typical on-premise deployment, the CES servers and physical network are part of the trusted infrastructure of the exchange: the exchange operators have exclusive access to the physical machines, and network cables. On the other end, the MPs own the physical servers that connect to the exchange network. Migrating such components to the public cloud seems intuitive: CES servers and MPs could correspond to virtual machines owned by the different parties.

Compared to on-premise deployments one important differentiation is that our solution requires the use of Release Buffer components for correctness, hence those need to be part of the trusted infrastructure (i.e., the MPs should have no control over the RBs). As we have discussed, the RBs should be close enough to the MPs, so that the latency between them does not impact fairness. The Release Buffer components need to {\it pace} the delivery of data to MPs. The cloud operator could provide the facilities required for the RB functionality. We believe that such functionality, could potentially be embedded in many places such as the hypervisor of cloud nodes that host MPs, or better the programmable NIC of such cloud hosts (most operators already have their own programmable smartNICs~\cite{accelnet} deployed). There are several challenges that need to be considered such as performance isolation, but the cloud operator already has fine-grained control over the cloud hardware/software stack and can address this problem easily. %One other option worth exploring for the release buffer functionality, would be the use of programmable Top-of-Rack switches~\cite{tofino}.

%\subsection{Multi-machine, Multi-cloud, Multi-exchange}
%\label{ss:multi}
\subsection{Enabling a MP to use many machines}
\label{ss:multi_machine}
%\pg{Chaitanya: citation for front running attack.}
So far, we have assumed that each MP is a single machine (or a virtual machine) sitting in the cloud. However, a single machine might not be enough and a MP might want to use multiple ``helper'' machines for computation. These helper machines can be located either in the cloud or outside the cloud (e.g., a MP might want to run its own on-prem datacenter). The ``front-end'' machine of the MP (that is receiving the real-time market data from the RB) should be able to communicate with these helper machines. 

Allowing such communication naively can lead to ``front-running'' attacks. In principle, a MP (whose delivery clock is lagging behind) might be able to receive real-time market data from another MP earlier than receiving it from the CES. Since, trades are ordered based on the delivery clock, earlier access to market data gives such a MP an unfair advantage over other MPs. 
It is possible to ensure that no MP gets early access to market data %(at the cost of introducing latency in messages from the front-end to helpers outside the cloud)
and thwart such front-running attacks. Here, we give a rough sketch of the solution. 
%
%\pg{Which of two alternatives is better?}
%
There are two simple constraints. \begin{enumerate*}[label=(\arabic*)]\item Only the front-end of a MP is allowed to submit the trade orders. 
\item Any data (excluding the trade orders) from the front-end to any other machine is buffered at the corresponding RB until the RB is sure that the delivery clocks at all other RBs have advanced beyond its delivery clock when it received this data.\end{enumerate*} A simple way to achieve this is for each RB to send other RBs periodic beacons communicating the status of its delivery clock. This way each RB can maintain a lower bound on the delivery clocks at other RBs. There are other ways to thwart front-running that impose weaker restrictions on communication or are easier to implement. We chose to present this one for its simplicity.

% \eg{There are two simple constraints. \begin{enumerate*}[label=(\arabic*)]\item Only the front-end of a MP is allowed to submit the trade orders. \item Any data (excluding the trade orders) from the front-end is buffered by the cloud network to make sure that the delivery clocks at all other RBs have advanced beyond the delivery clock time when this data transmission was initiated.\end{enumerate*} A simple way to ensure this is to transmit all outgoing data through the RB and for each RB to send periodic beacons communicating the status of it's delivery clock. This way each RB can maintain a lower bound on the delivery clocks at other RBs. Note that there are other ways to thwart front-running that impose weaker restrictions on communication or are easier to implement. We chose to present this one for it's simplicity.}

\if 0
\pg{There are three simple constraints. (1) Only the front-end of a MP is allowed to submit the trade orders. (2) A helper within the cloud is only allowed to talk to it's MP's front-end and helpers in the cloud. 
%Most cloud-providers already employ proxies that can be used to enforce this restriction.  
(3) Any data (excluding the trade orders) from the front-end to other machines (excluding the MP's helpers in the cloud) is buffered at the corresponding RB until the RB is sure that the delivery clocks at all other RBs have advanced beyond its delivery clock when the it received this data. A simple way to achieve this is for each RB to send other RBs periodic beacons communicating the status of it's delivery clock. This way each RB can maintain a lower bound on the delivery clocks at other RBs.}
\fi

\if 0
Only the front-end of a MP is allowed to submit trade orders. %\pg{Add a note on why this additional latency is not bad?Chaitanya can we say that typically a MP's fast computations can fit within a VM or something?}

\noindent
\textit{Helpers in the cloud:} Such a helper is only allowed to talk to it's MP's front-end and helpers in the cloud. Most cloud-providers already employ proxies that can be used to enforce this restriction.

\noindent
\textit{Helpers outside the cloud:} Such a helper can talk to any machine outside the cloud (including helpers of other MPs). But, it is only allowed to talk to the corresponding front-end in the cloud. Any data from the front-end to outside the cloud is buffered at the corresponding RB until the RB is sure that the delivery clocks at all other RBs have advanced beyond its delivery clock when the it received this data.%\footnote{To avoid buffer overruns at the RB, the outgoing data can also be buffered at other points in the network at the cost of some additional latency.} 
A simple way to achieve this is for each RB to send other RBs periodic beacons communicating the status of its delivery clock. This way each RB can maintain a lower bound on the delivery clocks at other RBs. %The above mechanism ensures that a MP cannot have earlier access to market data from another MP.

\fi

\subsection{Enabling a MP to use a different cloud-provider}
A MP might prefer a different cloud-provider (for its front-end and helpers) than the one the financial exchange is located in. 
%or offer lower latency between the front-end and helpers that could help the MP to make trade decisions faster. 
Achieving fairness in such scenarios requires participating cloud-providers to agree on a common implementation of the RB and being truthful in tagging trades for DBO. Further, to keep end-to-end latency low, cloud-providers will likely need to enter peering agreements~\cite{sky}. %to keep the cross-cloud latency in check. 
While there are many challenges in realizing this vision, we believe our approach for fairness will prove to be even more useful here.\footnote{Achieving tight clock synchronization at RBs across cloud providers will likely be harder~\cite{huygens} and latency across clouds is likely more unpredictable.}

\subsection{Fairness with multiple exchanges}
\attn{
%To our knowledge, 
Existing financial exchanges %(and proposed solutions) 
do not optimize for ensuring fair-access to market data streams from other financial exchanges. Currently, competing financial exchanges are not colocated. MPs can go to great lengths to reduce latency for market data streams from financial exchanges that they are not colocated with~\cite{frequent_batch_auctions}.
%\footnote{\attn{For example, Spread Networks spent 300 million dollars for a private fiber optic cable to help HFT firms in New York connect to financial exchanges in Chicago and vice-versa~\cite{frequent_batch_auctions}.}}
In future, if multiple financial exchanges move to the cloud, participating cloud providers can take up the task of ensuring fairness across data streams from all such exchanges. A potential way to do this is to stitch together all such data streams into a single stream using a common serialization point. %\footnote{\attn{Creating such a single stream will likely require competing financial exchanges to agree on a fair ordering of their data in this stream.}} 
Our approach can then be used to order trades fairly based on the delivery time of the data in this super stream. Such an ecosystem offers a lot of flexibility and can enable both MPs and financial exchanges to move across cloud regions and cloud providers to optimize cost and performance.}

\if 0
\subsection{Beyond Fairness}

While the focus of this article has been on fair market access, market regulations also have additional implications.

1) Market Transparency and Coordination: Because traders trade and hedge across numerous markets, there are significant interdependencies between them. Markets must be transparent and coordinated in order to achieve efficiency and the best pricing. The National Best Bid or Offer (NBBO) rule established this obligation (Reg NMS, 1972 and 2005).

2) Market Surveillance: Because markets are so important to the economy's general stability, exchanges are essential to avoid market manipulation and fraud. This is accomplished through the use of audit trails, surveillance, and disciplinary measures. To do so, the exchanges can use cloud hosting to have access to machine learning techniques and infrastructure.

3) Market Stability and Systemic Risks: Exchanges must also have enough capacity to deal with big trading volumes and spikes.

Low latency and real-time data access are frequently required by these needs. However, cloud virtualization and sharing overheads prevent this. The "Lift and Shift" method is ineffective.

Another difficulty is replicating the "determinism" of an on-premise solution. Exchanges must, in particular, ensure that transactions are processed in a logical order based on entry time and priority. This is complicated by the cloud's inherent jitter. Kernel by-passes that give access to low-level network card controls, as well as intelligent workload placement such as avoiding workload sprawl, co-locating similar workloads in close proximity, and isolating nodes when latency-sensitive workloads are operating, can help address these issues. Each of these tasks necessitates the development of new algorithms.
\fi

%\subsection{Requirements beyond fairness}
%\pg{Chaitanya, since we are running over if we can wrap it within half a column it would be great.}

\section{Conclusion}

This paper proposes a new approach for achieving response-time fairness in cloud-hosted financial exchanges. We introduce the concept of logical "Delivery Clocks" and show how it can be used to order trades fairly in the presence of highly variable network latency, without requiring clock synchronization. Further, 
%although we focus on fairness for financial exchanges,
we note that our approach is general and suitable for providing fairness in other settings such as multi-player cloud gaming and advertisement exchanges.
\fi 

\bibliographystyle{abbrv} 
\begin{small}
\bibliography{hotnets22}

\begin{thebibliography}{10}

\bibitem{efa}
{{Amazon AWS} Elastic Fabric Adapter}.
\newblock \url{https://docs.aws.amazon.com/AWSEC2/latest/UserGuide/efa.html}.

\bibitem{ena}
{{Amazon AWS} Elastic Network Adapter}.
\newblock
  \url{https://docs.aws.amazon.com/AWSEC2/latest/UserGuide/enhanced-networking-ena.html}.

\bibitem{www-dpdk}
{Data Plane Development Kit (DPDK)}.
\newblock \url{http://dpdk.org/}.

\bibitem{signals_threads}
{Multicast and the Markets}.
\newblock \url{https://signalsandthreads.com/multicast-and-the-markets/}.

\bibitem{bf2}
{{NVIDIA} BlueField-2}.
\newblock
  \url{https://www.nvidia.com/content/dam/en-zz/Solutions/Data-Center/documents/datasheet-nvidia-bluefield-2-dpu.pdf}.

\bibitem{nyse-client}
{{NYSE} XDP Client Specification}.
\newblock
  \url{https://www.nyse.com/publicdocs/nyse/data/XDP_Common_Client_Specification_v2.1e.pdf}.

\bibitem{iex_cost_report}
{The Cost Of Exchange Services}.
\newblock
  \url{https://iextrading.com/docs/The\%20Cost\%20of\%20Exchange\%20Services.pdf}.

\bibitem{burdisch_working}
M.~Aquilina, E.~B. Budish, and P.~O'Neill.
\newblock Quantifying the high-frequency trading" arms race": A simple new
  methodology and estimates.
\newblock Technical report, Working Paper, 2020.

\bibitem{frequent_batch_auctions}
E.~Budish, P.~Cramton, and J.~Shim.
\newblock The high-frequency trading arms race: Frequent batch auctions as a
  market design response.
\newblock {\em The Quarterly Journal of Economics}, 130(4):1547--1621, 2015.

\bibitem{firestone2018azure}
D.~Firestone, A.~Putnam, S.~Mundkur, D.~Chiou, A.~Dabagh, M.~Andrewartha,
  H.~Angepat, V.~Bhanu, A.~Caulfield, E.~Chung, et~al.
\newblock Azure accelerated networking: Smartnics in the public cloud.
\newblock In {\em 15th $\{$USENIX$\}$ Symposium on Networked Systems Design and
  Implementation ($\{$NSDI$\}$ 18)}, pages 51--66, 2018.

\bibitem{cloudex}
A.~Ghalayini, J.~Geng, V.~Sachidananda, V.~Sriram, Y.~Geng, B.~Prabhakar,
  M.~Rosenblum, and A.~Sivaraman.
\newblock Cloudex: a fair-access financial exchange in the cloud.
\newblock In S.~Angel, B.~Kasikci, and E.~Kohler, editors, {\em HotOS '21:
  Workshop on Hot Topics in Operating Systems, Ann Arbor, Michigan, USA, June,
  1-3, 2021}, pages 96--103. {ACM}, 2021.

\bibitem{two_generals}
P.~J. Gmytrasiewicz and E.~H. Durfee.
\newblock Decision-theoretic recursive modeling and the coordinated attack
  problem.
\newblock In {\em Artificial Intelligence Planning Systems}, pages 88--95.
  Elsevier, 1992.

\bibitem{lamportSeminalPaper}
L.~Lamport.
\newblock Time, clocks, and the ordering of events in a distributed system.
\newblock In D.~Malkhi, editor, {\em Concurrency: the Works of Leslie Lamport},
  pages 179--196. ACM, 2019.

\bibitem{sundial}
Y.~Li, G.~Kumar, H.~Hariharan, H.~M.~G. Wassel, P.~Hochschild, D.~Platt, S.~L.
  Sabato, M.~Yu, N.~Dukkipati, P.~Chandra, and A.~Vahdat.
\newblock Sundial: Fault-tolerant clock synchronization for datacenters.
\newblock In {\em 14th {USENIX} Symposium on Operating Systems Design and
  Implementation, {OSDI} 2020, Virtual Event, November 4-6, 2020}, pages
  1171--1186. {USENIX} Association, 2020.

\bibitem{imp_cs}
J.~Lundelius and N.~A. Lynch.
\newblock An upper and lower bound for clock synchronization.
\newblock {\em Inf. Control.}, 62(2/3):190--204, 1984.

\bibitem{fragile}
D.~MacKenzie.
\newblock How fragile is competition in high-frequency trading.
\newblock {\em Tabbforum, March}, 26, 2019.

\bibitem{libra}
V.~Mavroudis and H.~Melton.
\newblock Libra: Fair order-matching for electronic financial exchanges.
\newblock In {\em Proceedings of the 1st {ACM} Conference on Advances in
  Financial Technologies, {AFT} 2019, Zurich, Switzerland, October 21-23,
  2019}, pages 156--168. {ACM}, 2019.

\bibitem{nasdaq_aws}
{NASDAQ}.
\newblock {Nasdaq and AWS Partner to Transform Capital Markets}.
\newblock
  {https://www.nasdaq.com/press-release/nasdaq-and-aws-partner-to-transform-capital-markets-2021-12-01}.

\bibitem{nasdaq_cme_an}
{POSTTRADE}.
\newblock {CME and Nasdaq move their markets to the cloud}.
\newblock
  {https://posttrade360.com/news/technology/cme-and-nasdaq-move-their-markets-to-the-cloud/}.

\end{thebibliography}
\end{small}

\clearpage
\appendix
\begin{sloppypar}
\section{Proof of Lemma~\ref{lemma:inter_delivery_imp}}
\label{app:lem1}

%\radhika{this is proof of Lemma 2, right?} 
The lemma states that for response time fairness the inter-delivery times should be the same across all  MPs. 
% eg{The theorem states that, for strong fairness, the intervals between consecutive market data deliveries -- inter-delivery times -- should be the same across all MPs. In other words, the delivery clocks at all RBs (at any given delivery clock time) must advance at the same rate.}

\begin{proof}
To prove that the lemma condition is necessary we will show that if this condition is not met then no ordering system exists which can achieve response time fairness for arbitrary trade orders. %\attn{We will constuct a tra}%already said this %
%The reason for this is that the OB does not know the response time of trade orders ($\sum f_i$ is unknown). 

\begin{figure}[t]
\centering
    \includegraphics[trim={0 0 0 1mm},clip,width=0.8\columnwidth]{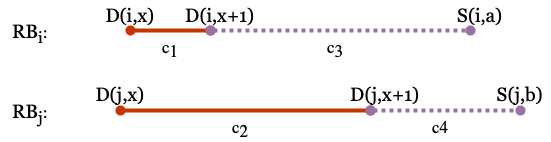}
    \vspace{-3mm}
    \caption{{\small{\bf Proof of Lemma ~\ref{lemma:inter_delivery_imp}.}} }%\pg{Fix the figure. New terminology}}
    \label{fig:proof}
    \vspace{-3mm}
\end{figure}

Consider the following scenario (Figure~\ref{fig:proof}) where the lemma condition is not met. Let $D(i,x+1) - D(i,x) = c1$, $D(j,x+1) - D(j,x) = c2$. Without loss of generality we assume $c1<c2$. 

Consider hypothetical trades $(i,a)$ and $(j,b)$ s.t. $S(i,a) = D(i, x+1) + c3$ and $S(j, b) = D(j, x+1) + c4$. Further, we can pick $S(i, a)$ and $S(j, b)$ s.t. $c3>c4$ and $c1+c3<c2+c4$. %\radhika{update fig to reflect this as well}.
Now we consider two scenarios for how these trades were generated. These two scenarios are indistinguishable from the cloud provider/exchanges perspective.
%Consider a hypothetical trade order $k$ (from MP$_i$) and $l$ (from MP$_j$) s.t. $A_i(k) = D_i(x+1) + c3$ and $A_j(l) = D_j(x+1) + c4$. Further, we can pick $A_i(k)$ and $A_j(l)$ s.t. $c3>c4$ and $c1+c3<c2+c4$. Now we consider two scenarios for how these trades were generated. These two scenarios are indistinguishable at the OB.

\noindent
\text{Case 1:} $TP(i,a) = TP(j, b) = x+1$. Here,
\begin{align}
S(i,a)- D(i,x+1) = c3, S(j,b)-D(j,x+1) = c4.
\end{align}
Since $c3>c4$, condition $C1$  implies that,
%\begin{align}
$O(i,a) > O(j,b)$.
%\label{eq:theorem_1:necessary:case1}
%\end{align}

\noindent
\text{Case 2:} $TP(i,a) = TP(j, b)= x$. Here,
\begin{align}
S(i,a)- D(i,x) = c1+c3, S(j,b)-D(j,x) = c2+c4.
\end{align}
In this case, since $c1+c3<c2+c4$, for response time fairness the ordering must instead satisfy the opposite, $O(i,a) < O(j,b)$. A contradiction! \attn{Thus, no ordering system can achieve response time fairness in both these scenarios.}
%\label{eq:theorem_2:necessary:case1}
%\end{align}
\end{proof}

\section{Proof of Corollary~\ref{cor:inter_delivery_lrtf}}
\label{app:cor_inter_delivery_lrtf}

\begin{proof}
The proof is identical to that of Lemma~\ref{lemma:inter_delivery_imp}. The only difference being, we consider trades (i,a) , (j,b) and trigger point x and x+1, s.t., both $c1+c3$ and $c2+c4$ are less than $\delta$.
\end{proof}

\section{Proof of Theorem~\ref{thm:rb_to_mp_latency}}
\label{app:rb_to_mp_latency}

\begin{proof}
To the prove this theorem we will show that with DBO the ordering of trades $(i,a)$ and $(j,b)$ that meet the Theorem condition is $O(i,a) < O(j,b)$.

Consider a trade $(i,a)$ with response time less than $\delta - B_h(i)$. Let $\hat{D}(i,a)$ represent the delivery time at the RB. The observed submission time at RB ($\hat{S}(i,a)$) for such a trade will be,
\begin{align}
    \hat{S}(i,a) = \hat{D}(i,x) + RT(i,a) + RB\_MP\_L(i,x,a).
\end{align}
where $RB\_MP\_L(i,x,a)$ represents the combined network round trip latency between RB$_i$ and $MP_i$ for trigger point $x$ and trade $(i,a)$ . Because $RB\_MP\_L(i,x,a)$ is bounded by $B_h(i)$, $RT(i,a) + RB\_MP\_L(i,x,a) < \delta$ or $\hat{S}(i,a) < \hat{D}(i,x) + \delta$.

Recall, consecutive batches are atleast separated by $\delta$. This means that the trigger point ($x=TP(i,a)$) must be within the last received batch. The point $ld(i,a)$ is also the last point in this batch and $\hat{D}(i,ld(i,a)) = \hat{D}(i,x)$. The delivery clock for trade $(i,a)$ will thus be: $O(i,a) = DC(i,a) = \langle ld(i,a), RT(i,a) + RB\_MP\_L(i,x,a)\rangle$.

With batching, for participant $j$, $x$ and $ld(i,a)$ also belong to the same batch $\hat{D}(j,ld(i,a)) = \hat{D}(j,x)$.
For a competing trade $(j,b)$ with higher response time, the delivery clock at the time of submission will either read $O(j,b) = DC(j,b)) = \langle ld(i,a)), RT(j,b) + RB\_MP\_L(j,x,b)\rangle$ (if $(j,b)$ was submitted before the next batch, i.e., $\hat{S}(j,b) < \hat{D}(j,ld(i,a)+1)$) or $DC(j, b) = \langle y, \hat{S}(j,b)-\hat{D}(j,y)\rangle$ with $y>ld(i,a)$.

C3 implies that, $RT(i,a) < RT(j,b) - (B_h(i)- B_l(j))$ and $ B_l(i) \leq RB\_MP\_L(i,x,a) \leq B_h(i), B_l(j) \leq RB\_MP\_L(j,x,b) \leq B_h(j)$. As a result, $ RT(i,a) + RB\_MP\_L(i,x,a) < RT(j,b) + RB\_MP\_L(j,x,b)$

As a result, in both the cases, $O(i,a) < O(j,b)$. Hence proved.
\end{proof}

\section{Impact of Losses}
\label{app:impact_losses}
\noindent
\textbf{Impact of market data points being lost:} Like status-quo we advocate market participants requesting any dropped market data points separately. The retransmitted market data point does not update the delivery clock at the release buffer. This way trades generated using the retransmitted data points get affected. However, fairness for all other trades remains unaffected. %\pg{Include commented texted below?}
The latency of the system can get affected as the delivery clock of the participant experiencing losses lags transiently until the next data point is delivered. If data points are generated infrequently, then the delivery clock of the participant might take a large time to recover. To prevent this explicitly, we advocate CES sending periodic heartbeats. However, we believe that major exchanges already generate data at a very high frequency (a data point every 20  $\mu s$) and such heartbeats are not necessary.

%To avoid stalling of the delivery clock, ideally we want the CES to to send heartbeat market data points as well. }

\noindent
\textbf{Impact of trades being lost:} In the event a trade is lost, the participant can retransmit the trade. The retransmitted trade will be tagged by the delivery clock at the time of the retransmission. Such a retransmitted trade will incur unfairness. However, fairness of all other trades remains unaffected.

\noindent
\textbf{Impact of heartbeats being lost:} Lost hearbeats do not impact fairness. However, if a heartbeat is lost then the OB might have to wait an additional time (for the next heartbeat to arrive) before forwarding the trades to the CES increasing latency (Equation~\ref{eq:latency_def}).

\section{Thwarting front-running attacks}
\label{app:front_running}

%\pg{External participants}

%While monotonicity of delivery clocks ensure that participants are incentivized to submit trades as early as possible an delaying trades does offer any competitive advantage, there is still a potential front-running attack possible in our system. In particular, if a participant receives a market data point $x$ through some other way before RB delivers the data point $x$ to the participant then it has a competitive advantage. This scenario though unlikely is still possible.
%A simple to avoid this is to limit that participant cannot talk to anyone beyond the CES. 

%However, we would like the participant machine to use other  ``helper'' machines in the cloud to aid computation. We also want to allow the participants to be able to talk to machines outside the cloud. Participants do use external news streams and feeds from other exchanges to update trading strategies and make trading decisions. We will discuss fairness with respect to such streams shortly.  

%Allowing such communication naively can lead to attacks.
%By restricting communication, it is possible to ensure that no participant gets early access to market data 
%and thwart such front-running attacks. 

%
%\pg{Which of two alternatives is better?}
%

We impose two simple constraints on communication to preven front running. \begin{enumerate*}[label=(\arabic*)]\item A participant machine and its helper machines can communicate with each other freely but they cannot communicate with any other machines in the cloud. This restriction can be imposed easily by cloud providers today using security groups. This restriction ensures that a participant machine cannot get market data from other participant machines in the cloud directly. Next, we will ensure that a participant machine cannot get an earlier market data feed from outside the cloud. 
We will do so by restricting that a participant can only send data point x out of the cloud, when x has been delivered to all participants in the cloud. This way, market data points can only be available outside the cloud once they have been delivered to all the participants.
\item The helper machines cannot send data outside the cloud. Any data (excluding the trade orders) from a participant being sent outside the cloud is tagged by the delivery clock at the RB and buffered at a gateway. The data sent by the participant could potentially be a market data point with id less than or equal to the last point id (first tuple) of the delivery clock time stamp. The gateway thus buffers this data until it is sure that the all data points with id less than the last data point id in the delivery clock time stamp have been delivered. For this purpose, RB's periodically communicate their delivery clock to the gateway. 
%
%A simple way to achieve this is for each RB to send other RBs periodic beacons communicating the status of its delivery clock. This way each RB can maintain a lower bound on the delivery clocks at other RBs. 
\end{enumerate*}
%\pg{include this? a bit hand-wavy and not clean. There is one challenge to be solved though. If data delivery to a particular participant is straggling then the gateway buffer can get bloated. It is not necessary for the gateway to wait for such straggler if we disable the incoming data to the straggler. The gateway can identify such stragglers and then disable any data coming from outside the cloud.}

%There are other ways to thwart front-running that impose weaker restrictions on communication or are easier to implement. We chose to present this one for its simplicity.

\section{Latency for network traces used in simulation}

\begin{figure}[t]
\centering
    \includegraphics[width=0.95\columnwidth]{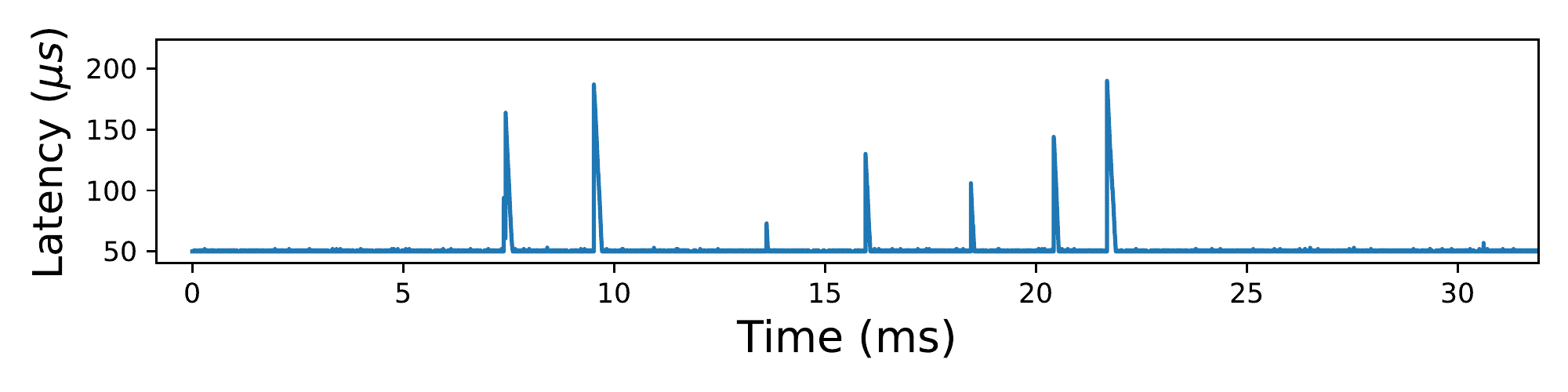}
    \caption{\small{\bf Network trace used in simulation.}}
    \label{fig:sim_trace}
\end{figure}

\section{Tail Latency with number of participants}

\begin{figure}[t]
\centering
    \includegraphics[width=0.8\columnwidth]{images/sim_part_tail_latency2.pdf}
    \vspace{-4mm}
    \caption{\small{\bf Fairness and Latency trends with number of participants.}}
    \label{fig:sim_tail_delta}
\end{figure}
%/appendixconnext}
\end{sloppypar}

\end{document}